\documentclass[11pt]{article}
\usepackage{graphicx,amsthm,graphicx,fancyhdr,mathrsfs,bbm}
\usepackage{amsfonts}
\usepackage{amsmath}
\usepackage{amssymb}
\usepackage{url}
\usepackage[colorlinks=true,citecolor=blue]{hyperref}

\usepackage{fancyhdr}
\usepackage{indentfirst}
\usepackage{enumerate}
\usepackage{amsthm}
\usepackage{color}
\usepackage[normalem]{ulem}
\usepackage{comment}
\usepackage{dsfont}
\usepackage{natbib}
\usepackage[misc]{ifsym}
\usepackage[toc,page]{appendix}
\usepackage{multirow}
\usepackage{float}
\usepackage{bm}
\usepackage{booktabs}
\usepackage{todonotes}
\usepackage{subcaption}
\usepackage{graphicx}
\usepackage{tabularx}
\usepackage{longtable}
\usepackage{pdflscape}
\usepackage{afterpage}

\def\d{\mathrm{d}}
\newcommand{\VaR}{\mathrm{VaR}}

\newcommand{\ES}{\mathrm{ES}}

\renewcommand{\H}{\mathcal{H}}

\newcommand{\I}{\mathcal{I}}

\newcommand{\X}{\mathcal{X}}

\newcommand{\E}{\mathbb{E}}
\newcommand{\R}{\mathbb{R}}
\newcommand{\N}{\mathbb{N}}

\newcommand{\M}{\mathcal{M}}

\newcommand{\p}{\mathbb{P}}
\renewcommand{\P}{\mathbb{P}}

\newcommand{\id}{\mathds{1}}

\renewcommand{\ge}{\geqslant}
\renewcommand{\le}{\leqslant}
\renewcommand{\geq}{\geqslant}
\renewcommand{\leq}{\leqslant}
\def\laweq{\buildrel \mathrm{d} \over =}
\def\lawis{\buildrel \mathrm{d} \over \sim}

\newcommand{\T}{\mathbb{T}}

\theoremstyle{plain}
\newtheorem{theorem}{Theorem}

\newtheorem{lemma}{Lemma}
\newtheorem{proposition}{Proposition}
\theoremstyle{definition}
\newtheorem{definition}{Definition}
\newtheorem{example}{Example}

\newtheorem{assumption}{Assumption}

\newtheorem{remark}{Remark}

\usepackage{mathtools}
\DeclareMathOperator*{\argmax}{arg\,max}

\definecolor{DarkGreen}{rgb}{0.2,0.6,0.2}

\def\green#1{\textcolor{DarkGreen}{#1}}
\definecolor{purple}{rgb}{0.6,0.3,0.8}
\definecolor{SelectedRed}{RGB}{163,58,53}
\definecolor{WatermarkBlue}{RGB}{0,114,178}
\definecolor{HumanReport}{RGB}{213,94,0}
\definecolor{CaseInk}{RGB}{37,52,61}
\definecolor{CaseBorder}{RGB}{170,183,190}
\definecolor{CaseTint}{RGB}{248,250,251}

\usepackage{pgfplots}
\pgfplotsset{compat=1.18}
\usepackage{tikz}
\usetikzlibrary{arrows.meta,positioning}

\renewcommand{\cite}{\citet}

\makeatletter
\DeclareRobustCommand{\bsquare}{%
	\mathop{\vphantom{\sum}\mathpalette\bigstar@\relax}\slimits@
}

\newcommand{\bigstar@}[2]{%
	\vcenter{%
		\sbox\z@{$#1\sum$}%
		\hbox{\resizebox{.9\dimexpr\ht\z@+\dp\z@}{!}{$\m@th\dsquare$}}%
	}%
}
\makeatother

\providecommand{\FDR}{\operatorname{FDR}}

\newtheorem{assumptionalt}{Assumption}[assumption]

\newenvironment{assumption-c}[1]{
  \renewcommand\theassumptionalt{#1}
  \assumptionalt
}{\endassumptionalt}

\usepackage[onehalfspacing]{setspace}

\newcommand{\dsquare}{\mathop{  \square} \displaylimits}

\begin{document}

\title{Sequential resetting  procedures 
and false discovery rate}

\author{
Qiuqi Wang\thanks{Maurice R.~Greenberg School of Risk Science, Georgia State University, USA. \texttt{qwang30@gsu.edu}}
\and Ruodu Wang\thanks{Department of Statistics and Actuarial Science, University of Waterloo, Canada. \texttt{wang@uwaterloo.ca}}
\and Zhenyuan Zhang\thanks{Laboratory for Information and Decision Systems, Massachusetts Institute of Technology, USA.   \texttt{zzyzzy@mit.edu}}}

\date{October 13, 2026}
\maketitle

\begin{abstract}
\medskip
\noindent 
Data arrive sequentially, each associated with a null hypothesis. We develop testing procedures to locate intervals in which some null hypotheses fail, with false discovery rate (FDR) control. The new procedures are called sequential resetting procedures, and they are based on e-values and test supermartingales. We also develop a refined version of the procedures by dropping less informative data points before the block minimum of the test supermartingale in each rejection block. These procedures have explicit FDR bounds under two settings: a classic setting of independence and the more general setting of possible dependence across null data and non-null data. These FDR bounds are independent of the testing horizon, and the general one has anytime validity, but it has an extra logarithm factor compared with the standard FDR level. We present simulation studies and data experiments with applications of sequential resetting procedures to LLM watermark detection and financial backtesting.

~

\noindent \textbf{Keywords:} sequential testing, false discovery rate, multiple testing, supermartingales, e-processes

\noindent \textbf{MSC 2020 classification:} 62L10; 62J15
\end{abstract}

\section{Introduction}
\label{sec:intro}

A sequential test, also known as an online test, examines observations as they arrive and rejects once there is sufficient evidence against the null hypothesis  \citep{Wald47,Siegmund85}. Modern approaches based on e-processes allow for continuous monitoring with test validity under optional stopping \citep{GHK24,RGVS23}. 
There are two clear features of this classic approach: 
\begin{enumerate}
    \item The null hypothesis of interest is global, which  applies to all data.
    \item The method does not  try to identify which data points caused the rejection, and how many times such violations occur.
\end{enumerate}

Modern statistical applications require relaxing both points. 
Especially, the tester may like to identify where the departure from the null occurred and continue looking for further departures. We briefly mention two concrete examples:
\begin{enumerate}
    \item A financial risk forecast method may be adequate for several years and then underestimate losses during a crisis \citep{WWZ26}.  In this context, we want to report periods of inadequate forecasts. 
    \item A document may combine human writing with passages generated by a large language model (LLM) \citep{LRWLS25}; see
Figure~\ref{fig:llm-mixed-passage}. In this context,  the goal is to continuously identify watermarked texts.  
\end{enumerate}

\begin{figure}[!t]
\centering
\includegraphics[width=\textwidth]{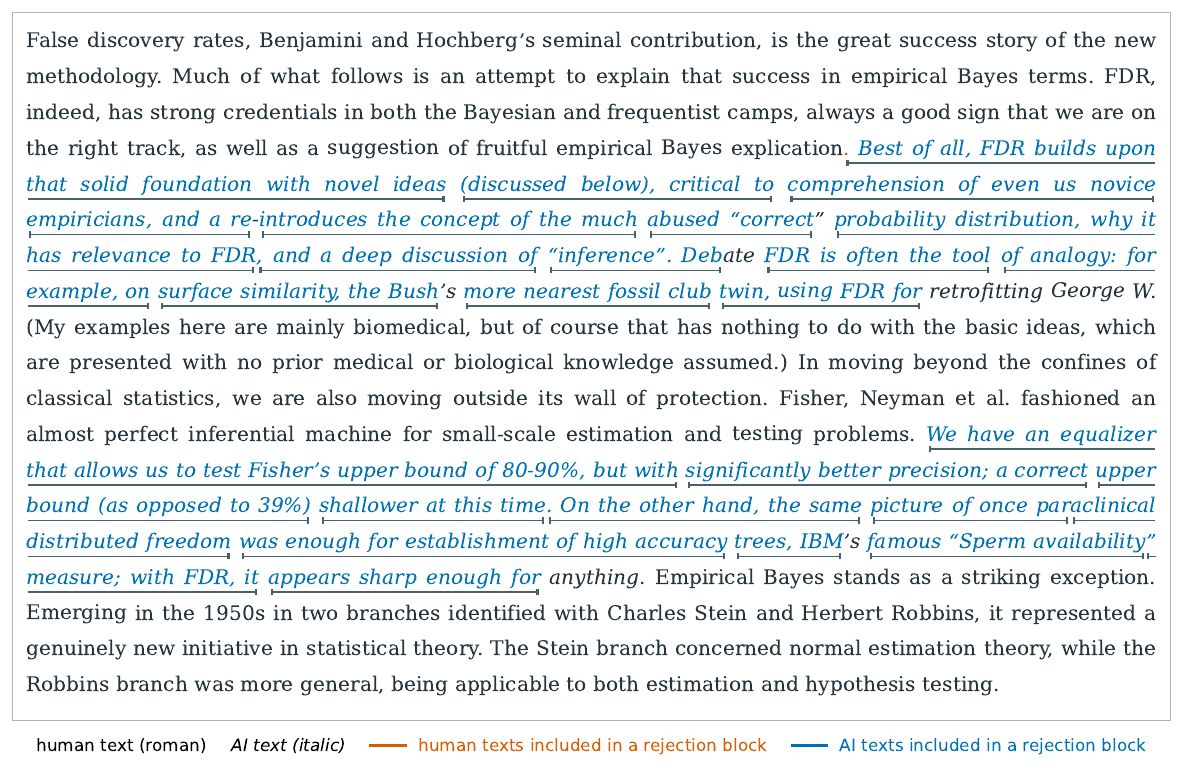}
\caption{The texts are from the prologue of Efron's \emph{Large-Scale Inference}~\citep{Efron10}, where two text blocks are replaced by watermarked model output (in italics) using Tournament-watermarked OPT-1.3B continuations. Each underbracket, illustrated by ``\(\protect\underbracket[0.4pt][3pt]{\,\mathrm{token}\,}\)'', represents one rejection block from our sequential resetting procedure. A report is false if its rejection block contains no watermarked position, and is true if there is at least one watermarked position. The above example reports 92.5\% of the AI-generated text, and none of the reported blocks is false.
See Section~\ref{sec:llm-mixed-case} for further details.}
\label{fig:llm-mixed-passage}
\end{figure}

With the applications above in mind, we associate a null hypothesis with each observation.   
Certainly, one data point is typically not enough to offer sufficient statistical evidence for decisions, and we should look at intervals. 
Therefore, we aim to develop methods that report several intervals of data, each with a violation of the null hypotheses corresponding to these data points. 
As we are interested in reporting multiple violations, the classic error metric of type-I error rate is no longer  suitable to use. There are several error metrics in multiple testing, and we consider the most popular one, the false discovery rate (FDR) introduced by \citet{BH95}, which suits our target applications in mind. FDR controls, on average, the proportion of rejections that are false, meaning that the null hypotheses   hold true during the period reported, with the proportion set to zero when there is no report. Interval reporting methods with FDR control have recently appeared in changepoint detection \citep{BM23,CGRWZ25, DW26}.

To describe the proposed methodology, our starting point is a conditionally valid e-value at each time step \citep{VW21,GHK24}.
This setting is not restrictive as an e-value can be obtained from any test or any p-value \citep{RW25}. 
Each individual e-value is typically computed from only one data point (or a batch of data points in some applications). When these individual values are modest, directly applying a multiple testing procedure such as e-BH \citep{WR22} can have little power.\footnote{E-value-based multiple testing procedures, such as those of \cite{WR22} and \cite{XSF25}, require (at least one of) the rejected e-values to be at least $1/\alpha$, where $\alpha$ is the FDR level.}
To design a useful method in this setting,  we take their cumulative products to obtain a test supermartingale \citep{SSVV11}, a nonnegative process that measures accumulating evidence against the null. This enables sequential testing without a prespecified stopping point or criterion. Intuitively, a test supermartingale can be interpreted as the capital of a bettor who starts with one unit and bets against the null. In this case, a large capital provides evidence for rejection. We refer to Section~\ref{sec:e} below for the precise definitions.

The methodology of testing via e-processes offers a useful starting point of our exploration for new methods. 
Fix a rejection threshold $\gamma=1/\alpha$, where $\alpha\in(0,1)$. The standard approach  based on e-processes should start at one and reject when the value of the process, also known as the capital with a betting interpretation \citep{Shafer21}, first reaches $\gamma$ \citep{RW25}.  This method has a controlled type-I error rate due to Ville's inequality.   Such a rejection, if it occurs,
is joint evidence from all the observations accumulated so far. Therefore, it is natural for a testing procedure to report all observations so far.

The main question is how to proceed once the first rejection has been made. 
We propose \emph{sequential resetting procedures}. 
Such a procedure resets  the capital to one and resumes testing, using the same threshold. Each subsequent rejection reports the observations since the preceding reset. Resetting the capital allows each new report to be based on new evidence, while the tester's betting strategy in constructing the e-process can be based on all earlier observations.

By default, for a sequential resetting procedure, \textit{a rejection is true as soon as the reported interval contains one alternative}, even if most of its observations satisfy the null. This makes the reports not as informative when the rejected intervals are long, in settings such as watermark detection. This motivates us to introduce a \emph{refinement rule}. At each rejection, we identify the last time at which the capital attained its lowest value during the completed run. Then, we report only the observations after that time through the rejection. Figure~\ref{fig:intro-localization} shows the resulting refined intervals. 

\begin{figure}[t]
\centering
\begin{tikzpicture}[x=1.10cm,y=1.05cm,>=Latex,font=\small]
\tikzset{
    capital/.style={blue!70!black,line width=1.05pt},
    localized/.style={teal!70!black,line width=1.35pt},
    guide/.style={black!35,densely dotted},
    resetarrow/.style={->,purple,dashed,line width=0.85pt},
    minarrow/.style={->,black!75,line width=0.6pt,shorten >=3pt}
}
\draw[->,line width=0.7pt] (0,0) -- (10.90,0) node[right] {$t$};
\draw[->,line width=0.7pt] (0,0) -- (0,3.15) node[above] {capital};
\node[below left] at (0,0) {$0$};
\draw[black!40,dotted] (0,1) -- (10.50,1);
\node[left] at (0,1) {$1$};
\draw[black!55,dashed] (0,2.65) -- (10.50,2.65);
\node[left] at (0,2.65) {$\gamma$};
\draw[capital] (0,1)
    .. controls (0.15,1.00) and (0.27,1.15) .. (0.45,1.15)
    .. controls (0.73,1.15) and (0.92,0.43) .. (1.15,0.43)
    .. controls (1.32,0.43) and (1.48,0.72) .. (1.65,0.72)
    .. controls (1.82,0.72) and (1.96,0.43) .. (2.10,0.43)
    .. controls (2.55,0.43) and (2.75,1.90) .. (3.20,2.65);
\draw[capital] (3.20,1)
    .. controls (3.55,1.00) and (3.85,0.50) .. (4.20,0.50)
    .. controls (4.40,0.50) and (4.60,1.08) .. (4.80,1.08)
    .. controls (4.98,1.08) and (5.12,0.85) .. (5.30,0.85)
    .. controls (5.75,0.85) and (6.00,1.92) .. (6.40,2.65);
\draw[capital] (6.40,1)
    .. controls (6.60,1.00) and (6.70,1.18) .. (6.95,1.18)
    .. controls (7.24,1.18) and (7.48,0.35) .. (7.80,0.35)
    .. controls (8.42,0.35) and (8.95,1.60) .. (9.60,2.65);
\draw[capital] (9.60,1)
    .. controls (9.87,1.00) and (10.00,0.88) .. (10.18,0.88)
    .. controls (10.28,0.88) and (10.40,0.92) .. (10.50,1.01);
\foreach \x in {3.20,6.40,9.60} {
    \draw[guide] (\x,0) -- (\x,2.65);
    \fill[blue!70!black] (\x,2.65) circle (1.7pt);
    \draw[resetarrow] (\x,2.59) -- (\x,1.03);
    \node[right,purple,fill=white,inner sep=1pt] at (\x,1.85) {reset};
}
\fill[blue!70!black] (0,1) circle (1.5pt);
\foreach \x/\y in {2.10/0.43,4.20/0.50,7.80/0.35} {
    \fill[teal!70!black] (\x,\y) circle (1.8pt);
    \draw[guide] (\x,0) -- (\x,\y);
}
\node[align=center,inner sep=1pt] (minimumone) at (1.45,1.85) {last block\\minimum};
\draw[minarrow] (minimumone.south east) -- (2.10,0.43);
\node[align=center,inner sep=1pt] (minimumtwo) at (4.80,1.95) {last block\\minimum};
\draw[minarrow] (minimumtwo.south west) -- (4.20,0.50);
\node[align=center,inner sep=1pt] (minimumthree) at (7.95,1.95) {last block\\minimum};
\draw[minarrow] (minimumthree.south) -- (7.80,0.35);
\node[anchor=east,blue!70!black] at (-0.12,-0.55) {full reports};
\node[anchor=east,teal!70!black] at (-0.12,-1.03) {refined reports};
\foreach \xstart/\xend in {0/3.20,3.20/6.40,6.40/9.60} {
    \draw[capital] (\xstart,-0.55) -- (\xend,-0.55);
    \draw[capital] (\xstart,-0.45) -- (\xstart,-0.65);
    \draw[capital] (\xend,-0.45) -- (\xend,-0.65);
}
\foreach \xstart/\xend in {2.10/3.20,4.20/6.40,7.80/9.60} {
    \draw[localized] (\xstart,-1.03) -- (\xend,-1.03);
    \draw[teal!70!black,fill=white,line width=0.9pt] (\xstart,-1.03) circle (2pt);
    \fill[teal!70!black] (\xend,-1.03) circle (2pt);
}
\end{tikzpicture}

\caption{Resetting and refinement. The capital of the sequential resetting procedure resets to one after each rejection at $\gamma$. A refined report retains observations after the last block minimum of the full report interval, at the time of rejection.}
\label{fig:intro-localization}
\end{figure}

The refinement rule does not change the rejection times (i.e., right endpoints of the reported intervals); the left endpoints are selected retrospectively when the rejection occurs.
Nevertheless, it is able to remain silent
when there is a long period during which the null hypotheses are true, thus providing more refined information for the decision maker. By definition, the FDR of the refined method is at least as large as that of the ordinary (unrefined) method.\footnote{We emphasize that, as a consequence, a true report need not consist entirely of alternative observations. The refinement makes a rejection easier to become false, because shortening a report may remove alternatives.}  
The two approaches serve different purposes.  The ordinary method only cares about how many times rejection happens and in which parts of the data; the union of the reported intervals covers the entire testing horizon except for the last segment. The refined method offers more precise locations of the rejections, reporting a non-covering collection of intervals. Surprisingly, in our main results, we show that the two versions satisfy almost the same level of FDR control, leading to our recommendation of the refined method in most applications, except when narrowing the reported intervals is not of interest.\footnote{More precisely, Example \ref{ex:infty} shows that there is a common tight bound for both methods; for the independent-null case \eqref{eq:FDR11}, the bound is tight up to the small factor of $(1-\alpha)^{-1}$. } 

\subsection{Main contributions}

Our main contributions are threefold. 
First, we propose a novel sequential resetting procedure and further enhance it by a new refinement rule to produce more informative reports (Section \ref{sec:pre}). There are several advantages with this approach: (i) The resetting and refinement rules for our procedure are simple, natural, and can be easily implemented. (ii) The resetting and refinement are computationally efficient: apart from necessary computations of the underlying e-process, this construction takes only $O(T)$ additional work through time $T$.\footnote{The reason is that the additional work compared to classic e-processes consists only of updating the value of the e-process and recording its running minimum and the time of the last minimum.} 
(iii) We provide FDR control with a constant rejection threshold that is horizon-independent, and our FDR bounds provide a direct choice of the constant threshold. 

Second, to the best of our knowledge, we are the first to establish nontrivial FDR bounds for repeated sequential procedures based on test supermartingales (Sections~\ref{sec:FDR indep} and~\ref{sec:FDR}). Write $\mathrm{FDR}^{\mathrm{ref}}_t$ as the FDR of the cumulative reports for the sequential resetting procedure by time $t$. If we assume that e-values on the null times are mutually independent and independent of the e-values on the alternative times \citep[Assumption~\ref{ass:ind}; similar to the independence condition for the BH procedure of][]{BH95}, we obtain (Theorem \ref{thm:localized-main})
\begin{align}
    \mathrm{FDR}^{\mathrm{ref}}_t\le \frac{\alpha}{1-\alpha}
    \qquad\text{for every deterministic time }t.\label{eq:FDR11}
\end{align}
Note that in the global-null setting, Ville's inequality gives an upper bound of $\alpha$, but the bound in \eqref{eq:FDR11} cannot be improved to $\alpha$ (Example \ref{ex:non-iid-alpha-counterexample}). 
Without this independence assumption, let $\mathrm{FDR}^{\mathrm{ref}}_\infty$ denote the expected worst-case false discovery proportion (FDP) over time. We prove that (Theorem \ref{thm:FDR_barrier})
\begin{align}
    \mathrm{FDR}^{\mathrm{ref}}_\infty\le \alpha\left(1+\log\frac{1}{\alpha}\right).\label{eq:FDR111}
\end{align}
The above results also hold for the ordinary unrefined case. The bound \eqref{eq:FDR111} can also be seen as an online FDR bound as it controls FDR at every finite stopping time. In addition, without the independence assumption, if the set of non-null times is deterministic, we derive a stochastic upper bound on the total number of reports. Applying this bound leads to finite-sample lower confidence bounds on the number of non-null observations.

The same expression in \eqref{eq:FDR111} appears in the FDR-linking theorem of \citet{S18}; Remark~\ref{rem:Su18} discusses the connection. Together, \eqref{eq:FDR11}, \eqref{eq:FDR111}, and Su's result illustrate an interesting feature:
\begin{center}
    \it The price of dependence between nulls and non-nulls is a multiplicative factor of $(1-\log \alpha)$.
\end{center}
However, the mathematical arguments in \eqref{eq:FDR111} and in \cite{S18} leading to the factor in this observation are quite different, so the   factor may be a curious coincidence (see Remark \ref{rem:Su18}).

Our proofs establish validity for the interval hypotheses selected by resetting and refinement, which is not supplied directly by existing online FDR results \citep{XR24,FXR24}, because the hypothesis for each report (especially its starting point) depends on the past data (i.e., history of the e-process), and refinement needs to be addressed. Tackling these difficulties requires at least two new inputs:
\begin{itemize}
    \item First, we adaptively identify relevant all-null time intervals and invent a novel weighting function that connects the starting and ending values of the e-process on these intervals with the cumulative FDR. Intuitively, this is necessary because the terminal capital affects subsequent rejection counts and therefore the FDR denominator.
    \item Second, we prove a refined version of Ville's inequality, which may be of independent interest. {Roughly speaking, we show that conditional on the past at the previous reset, the probability of issuing a false refined report is at most $\alpha$ (Lemma \ref{lem:ville}). Here, if we replace ``refined" with ``ordinary", this is the classic Ville's inequality. The key difficulty in proving this extension is that the location of the block minimum (i.e., the left endpoint of the refined interval) is not a stopping time.}
\end{itemize}
 Our proofs of the main results are contained in Appendix~\ref{app:sec:proofs}, and streamlined versions are sketched in Sections~\ref{sec:FDR indep} and~\ref{sec:FDR}.

Third, we discuss applications of sequential resetting procedures to LLM watermark detection and financial backtesting (Section~\ref{sec:num}). We demonstrate the first repeated online reporting of adaptively refined intervals containing LLM watermark evidence with nontrivial finite-sample FDR guarantees, which remain valid under optional stopping in the general setting.
Both Gumbel--max and Tournament watermarks are studied.\footnote{Tournament sampling underlies SynthID-Text \citep{DathathriEtAl24}, a version of which Anthropic has recently adopted for supported Claude models \citep{Anthropic26}. More recently, OpenAI introduced textGrain, an alternative watermarking scheme based on entropy-constrained optimal transport \citep{TextGrain26}.}  We also initiate the study of repeated sequential reporting in financial backtesting with an FDR guarantee. We extend existing sequential ES backtests to continue after rejection and report further periods containing forecast inadequacy, with FDR control across the accumulated warnings. Complementary experiments are presented in Appendix~\ref{sec:add_num}.

\subsection{Related works}

Our framework builds upon the theory of e-values and e-processes, as well as their ancestors, test supermartingales. \citet{SSVV11} study test supermartingales and their relation to p-values and Bayes factors. \citet{VW21} develop the theory of e-values and their combination. \citet{GHK24} provide foundational background for e-values and e-processes. \citet{RGVS23} present a review on e-values and game-theoretic statistics, and the monograph by \citet{RW25} gives a thorough treatment of hypothesis testing with e-values.

FDR was proposed and studied first by the celebrated \citet{BH95}, with many subsequent important developments, such as \cite{BY01}, \cite{S02}, and \cite{ST03}; see the recent summary of \cite{S25}. For FDR with e-values, \citet{WR22} introduce the e-BH procedure, which controls FDR under arbitrary dependence. Concrete further developments  are made by \citet{RB24},   \citet{XR24}, \citet{FXR24},  \cite{GS25},  and \cite{XSF25}, among many others. Standard e-BH and closed e-BH select individual hypotheses from the supplied e-values and control their FDR without requiring temporal independence. Our goal instead is to report intervals selected by sequential accumulation, resetting, and refinement, and to control FDR for these interval reports.

Resetting also appears in changepoint detection with a different focus. \citet{DR25} show that finite average run length (the expected time to the first alarm under the global null) precludes nontrivial FDR control uniformly over stopping times. Consequently, guarantees concern waiting times to false alarms and alarm frequencies \citep{Vovk21,VNG25}, which can be used to provide the rejection threshold \citep{SRR24}. Our procedure has infinite average run length and controls FDR across cumulative reports.

\section{Sequential resetting procedures and false discovery rate}
\label{sec:pre}

\subsection{Hypotheses and e-values}
\label{sec:mon}

We first set up some notation and convention.
We use the convention $\inf\emptyset=\min\emptyset=\infty$. For $x,y\in\R$, write $x\wedge y=\min\{x,y\}$, $x\vee y=\max\{x,y\}$, and $x_+=x\vee 0$. Throughout, ``increasing" and ``decreasing" are in the nonstrict sense. The notation $\mathcal L(\cdot)$ stands for the law (distribution) of a random variable.
Write $\N_0=\N\cup\{0\}$.
For a time horizon $T\in\N\cup\{\infty\}$, write $\T=\{0,1,\dots,T\}$,  $\T_{\geq l}=\T\setminus\{0,\dots,l-1\}$ for $l\in\T$ and $\T_{\geq \infty}=\emptyset$. 

{Consider a standard Borel state space  $(\X,\mathcal B)$. Let $\Omega=\X^{\T_{\ge 1}}$ be a sample space and define the projection maps $X_t:\Omega\to\X$ for $t\in\T_{\ge 1}$. 
The sequence $(X_t)_{t\in \T_{\ge1}}$ represents our observed data; we do not assume that they are independent.
Denote the information up to $t\in\T_{\ge 1}$ by the $\sigma$-algebra $\mathcal F_t=\sigma(X_1,\dots,X_t)$ with the convention $\mathcal F_0=\{\emptyset,\Omega\}$, and write $\mathcal F=\sigma (\bigcup_{t\in\T_{\ge 1}}\mathcal F_t )$. The path sample space is thus   $(\Omega,\mathcal F)$.   
Let $\p$ be the true data-generating probability measure on $(\Omega,\mathcal F)$, and denote by $\p_t$ a regular conditional law of $X_t$ given $\mathcal F_{t-1}$. 
For each $t\in\T_{\ge 1}$, a regulator monitors a mapping $\mathcal P_t:\X^{t-1}\to 2^{\mathcal P(\X)}$ given historical data up to time $t-1$ (we treat $\X^0$ as a singleton), where $\mathcal P(\X)$ is the set of probability measures on $(\X,\mathcal B)$. When there is no notational confusion, we write $\mathcal P_t=\mathcal P_t(X_1,\dots,X_{t-1})$ for simplicity.
}

For each $t\in\T_{\ge 1}$, the null hypothesis is $H_t=\{\p_t\in \mathcal P_t\}$, which is data-dependent; it can depend  on the first $t-1$ observations. For any probability measure \(Q\) on \((\Omega,\mathcal F)\), write
\(Q_t\) for a regular conditional law of \(X_t\) given
\(\mathcal F_{t-1}\) under \(Q\). Define the class of path laws under which all
null conditions hold as
\[
\mathfrak P_0
=
\Big\{
Q \in \mathcal P(\Omega):\ Q_t\in \mathcal P_t(X_1,\dots,X_{t-1})
\ \text{for all }t\in\T_{\ge 1},\ Q\text{-a.s.}
\Big\},
\]
where $ \mathcal P(\Omega)$ is the set of probability measures on $(\Omega,\mathcal F$). 

 An \emph{e-variable} for a set of probability measures $\mathcal Q\subseteq \mathcal P(\Omega)$ (representing a null hypothesis) is defined as a $\mathcal F$-measurable nonnegative random variable $E:\Omega \to[0,\infty]$ with $\E^Q[E]\le 1$ for each $Q\in\mathcal Q$. An e-variable is also called an \emph{e-value}. An \emph{e-test} rejects the null $\mathcal Q$ if an e-value is higher than a prespecified threshold.

For each $t\in\T_{\ge1}$, the data  generate a nonnegative, $\mathcal F_t$-measurable e-variable $E_t$ satisfying 
\begin{align}
\E^Q[E_t \mid\mathcal F_{t-1}]\le1 
\quad \mbox{ for } Q \in \mathcal P(\Omega) \text{ with }Q_t\in\mathcal P_t.
\label{eq:evalue}
\end{align} 
Such e-values can be obtained by calibrating conditionally valid p-values or a pivotal statistic for each $H_t$, but  in general e-values do not require p-values or pivotal statistics to be available. 
A \emph{test supermartingale} for
$\mathcal Q\subseteq \mathcal P(\Omega) $ 
is a nonnegative stochastic process
$(M_t)_{t\in \T}$ adapted to $(\mathcal F_t)_{t\in\T}$ such that $M_0\le 1$ and $\E^Q[M_t|\mathcal F_{t-1}]\le M_{t-1}$
for all $t\in \T_{\ge 1}$ and $Q\in \mathcal Q$.
We will use a particular construction:
Set $M_0=1$ and 
\begin{align}
    M_t=E_tM_{t-1},~~~t\in\T_{\ge 1}, \mbox{~with the convention $0\times \infty=0$.} \label{eq:increments}
\end{align}
The  process $M=(M_t)_{t\in\T}$ in \eqref{eq:increments} is a {test supermartingale} for
$\mathfrak P_0$ if \eqref{eq:evalue} holds.  If only $(M_t)_{t\in \T}$ is directly observed and it is finite valued,  we can construct e-values by setting $E_t=M_t/M_{t-1}$ for $t\in\T_{\ge 1}$ when $M_{t-1} >0 $ and $E_t=1$ when $M_{t-1}=0$. 
A nonnegative $(\mathcal F_t)_{t\in\T}$-adapted stochastic process $(M_t)_{t\in\T}$ is called an \emph{e-process} for $\mathcal Q$ if $\E^Q[M_\rho]\le 1$ for any stopping time $\rho$ valued in $\T$ and any $Q\in \mathcal Q$. A test supermartingale is an e-process for the same hypothesis; the converse is not true in general.

\subsection{Tests and false discovery rate}
\label{sec:FDR_def}

A key feature of e-processes is that a rejection at time $t_0$ provides evidence against $\bigcap_{s=1}^{t_0}H_s$. Continuing the same test after $t_0$ without resetting does not distinguish new violations after $t_0$ from evidence accumulated before that rejection.   This motivates us to consider rejections of intervals instead of each individual  points. 

\begin{definition}
    \label{def:testing-procedure}
   A \emph{(sequential interval) testing procedure} $\phi=(\phi_t)_{t\in \T} $ 
is such that:
\begin{enumerate}[(a)]
    \item for $t \in \T$, $\phi_t$ is defined on $\X^t$ and $\phi_t(x_1,\dots,x_t)$ is a  set of nonempty disjoint subsets of ${\{1,\dots,t\}}$ for each $(x_1,\dots,x_t)\in \X^t$; 
    \item for $t\in \T_{\ge 1}$ and $(x_1,\dots,x_{t})\in \X^{t}$, it holds  $\phi_{t-1}(x_1,\dots,x_{t-1})\subseteq \phi_{t}(x_1,\dots,x_{t})$.  
\end{enumerate}
\end{definition}

The output of $\phi_t$ on any observed data $(x_1,\dots,x_{t})$  
is the set of \emph{rejected intervals} based on the data. 
Note that for $t=0$, $\phi_0=\emptyset$, which has no input.
Intuitively, condition (a) means that we cannot reject future hypotheses before observing their corresponding data. Condition (b) means that once a rejection of an interval is made, it cannot be revoked with observing later data. 
Note that it is possible to reject more from $\{1,\dots,t\}$ when observing $X_{t+1}$.
Traditional sequential tests of multiple hypotheses can be seen as the special case of our Definition~\ref{def:testing-procedure} in which only singleton outputs are allowed.

In our setting, 
a rejection of an interval $I\subseteq \T_{\ge 1}$ is considered \emph{true} if $\p_t\not\in \mathcal P_t$ for some $t\in I$,
and it is \emph{false} if 
$\p_t \in \mathcal P_t$ for all $t\in I$.
Denote by 
\begin{align}
    V^{\phi}_t =\sum_{I\in \phi_t(X_1,\dots,X_t)} 
\id_{\bigcap_{s\in I}\{\p_s \in \mathcal P_s\}},\quad \mbox{and}\quad
R^{\phi}_t = |\phi_t(X_1,\dots,X_t)|,~~t\in\T,\label{eq:V-R}
\end{align}
which are the number of false rejections of $\phi$ 
and the total number of rejections  up to time $t$, respectively. 
Following \cite{BH95}, the  \emph{false discovery proportion} 
and the 
\emph{false discovery rate}  of $\phi$ up to time $t\in \T$   are defined as 
\begin{equation}
    \label{eq:FDP-FDR}
 \mathrm{FDP}^{\phi}_t=\frac{V_t^\phi}{R_t^\phi\vee 1}
\quad \mbox{and}\quad
\FDR^{\phi}_t = \E\left[\frac{V^\phi_t}{R^\phi_t\vee 1}\right],
\end{equation}
where the expectation is taken under the data-generating probability $\p$. 
A testing procedure $\phi$ has FDR guarantee at level $\alpha\in (0,1)$ if $\FDR^{\phi}_t\le \alpha$ for all choices of the null hypotheses $(\mathcal P_t)_{t\in \T_{\ge 1}}$.

\begin{figure}[t!]
\centering
\begin{tikzpicture}[x=0.78cm,y=1.12cm,>=Latex,font=\small]
    \useasboundingbox (-0.95,-0.75) rectangle (12.95,5.65);
    \fill[white] (-0.95,-0.75) rectangle (12.95,5.65);

    \def\xzero{0.9}
    \def\xtauone{4.0}
    \def\xtautwo{7.6}
    \def\xtauthree{11.2}

    \def\ytopbase{3.30}
    \def\ytopmax{5.25}
    \def\ybotbase{0.00}
    \def\ybotone{0.92}
    \def\ybotthr{1.82}
    \def\ybotmax{2.30}

    \fill[gray!6] (\xzero,\ybotbase) rectangle (\xtauone,\ytopmax);
    \fill[gray!12] (\xtauone,\ybotbase) rectangle (\xtautwo,\ytopmax);
    \fill[gray!6] (\xtautwo,\ybotbase) rectangle (\xtauthree,\ytopmax);

    \draw[->,line width=0.9pt] (0,\ytopbase) -- (12.3,\ytopbase);
    \draw[->,line width=0.9pt] (0,\ybotbase) -- (12.3,\ybotbase);
    \draw[->,line width=0.9pt] (0,\ybotbase) -- (0,\ybotmax);
    \draw[->,line width=0.9pt] (0,\ytopbase) -- (0,\ytopmax);

    \node[below] at (12.05,\ybotbase) {$t$};
    \node[anchor=south] at (0,\ytopmax) {$M_t$};
    \node[anchor=south] at (0,\ybotmax) {$\widetilde M_t$};

    \draw[gray!75,dashed,line width=0.8pt] (0.1,\ybotthr) -- (11.75,\ybotthr);
    \node[gray!75,left] at (0,\ybotthr) {$\gamma$};
    \draw[gray!60,dotted] (0.1,\ybotone) -- (11.75,\ybotone);
    \node[gray!70,left] at (0,\ybotone) {$1$};

    \foreach \x/\lab in {\xtauone/$\tau_1$,\xtautwo/$\tau_2$,\xtauthree/$\tau_3$} {
        \draw[gray!65,dashed] (\x,\ybotbase) -- (\x,\ytopmax);
        \node[below=8pt] at (\x,\ybotbase) {\lab};
    }
    \node[below=8pt] at (\xzero,\ybotbase) {$\tau_0=0$};

    \draw[gray!70,line width=1.1pt]
        plot[smooth] coordinates {
            (\xzero,3.55)
            (1.35,3.70)
            (1.80,3.62)
            (2.25,3.85)
            (2.80,3.73)
            (3.25,3.94)
            (\xtauone,4.12)
            (4.55,4.05)
            (5.05,4.24)
            (5.55,4.16)
            (6.05,4.35)
            (6.55,4.26)
            (\xtautwo,4.66)
            (8.05,4.55)
            (8.55,4.76)
            (9.05,4.68)
            (9.60,4.88)
            (10.15,4.80)
            (10.65,4.96)
            (\xtauthree,5.12)
        };

    \draw[black,line width=1.2pt]
        plot[smooth] coordinates {
            (\xzero,\ybotone)
            (1.35,1.08)
            (1.78,0.97)
            (2.22,1.15)
            (2.72,1.02)
            (3.15,1.28)
            (3.55,1.45)
            (\xtauone,\ybotthr)
        };

    \draw[black,line width=1.2pt]
        plot[smooth] coordinates {
            (4.35,0.72)
            (4.78,0.88)
            (5.18,0.79)
            (5.62,1.01)
            (6.05,0.93)
            (6.45,1.18)
            (6.95,1.42)
            (\xtautwo,\ybotthr)
        };

    \draw[black,line width=1.2pt]
        plot[smooth] coordinates {
            (7.95,0.66)
            (8.35,0.84)
            (8.78,0.77)
            (9.22,1.05)
            (9.72,0.96)
            (10.12,1.21)
            (10.60,1.49)
            (\xtauthree,\ybotthr)
        };

    \draw[black!65,->,line width=0.85pt] (\xtauone+0.05,\ybotthr-0.02) -- (4.30,0.79);
    \draw[black!65,->,line width=0.85pt] (\xtautwo+0.05,\ybotthr-0.02) -- (7.90,0.73);

    \fill (\xzero,\ybotone) circle (1.25pt);
    \fill (\xtauone,\ybotthr) circle (1.25pt);
    \fill (\xtautwo,\ybotthr) circle (1.25pt);
    \fill (\xtauthree,\ybotthr) circle (1.25pt);
    \fill (\xzero,3.55) circle (1.25pt);
    \fill (\xtauone,4.12) circle (1.25pt);
    \fill (\xtautwo,4.66) circle (1.25pt);
    \fill (\xtauthree,5.12) circle (1.25pt);

    \node[black!60] at (4.95,0.44) {reset};
    \node[black!60] at (8.55,0.38) {reset};
\end{tikzpicture}
\caption{Schematic of the test supermartingale $M_t$ (top panel) and the corresponding resetting process $\widetilde M_t$ (bottom panel).}
\label{fig:refreshing-schematic}
\end{figure}

\subsection{Sequential resetting procedures and a refinement rule}
\label{sec:e}

We now propose our new testing procedures based on the ideas of resetting and refinement.  
For  $M$ in \eqref{eq:increments} and a pre-specified threshold $\gamma\in(1,\infty)$,  define the \emph{resetting process} $\widetilde M=(\widetilde M_t)_{t\in\T}$ recursively by $\widetilde M_0=Z_0=1$ and
\begin{equation}
\label{eq:M_tilde}
    \widetilde M_t=Z_{t-1}E_t,~~
    Z_t=\left\{\begin{array}{ll}
    1, & \mbox{if}~\widetilde M_{t}\ge \gamma; \\
    \widetilde M_t, & \mbox{otherwise},
\end{array}\right.
~~\mbox{for}~t\in\T_{\ge 1}.
\end{equation}
Here, $Z=(Z_t)_{t\in\T}$ is a post-reset capital process resetting to $1$ whenever the resetting process $\widetilde M$ exceeds the threshold $\gamma$.
See Figure \ref{fig:refreshing-schematic} for an illustration. 
One can easily check that the resulting process $\widetilde M$ is still a test supermartingale  for $\mathfrak P_0$. We recursively define stopping times $\tau_0=0$ and
\begin{equation}
\label{eq:tau}
    \tau_r=\tau_r(\widetilde M)=\min\{t\in\T_{\geq \tau_{r-1}+1}:\widetilde M_t\ge \gamma\},~~r\in\N.
\end{equation}

Throughout, intervals like $(\tau_{r-1},\tau_r]$ are interpreted as their intersections with $\N$.

\begin{definition} \label{def:ordinary}
   The \emph{ordinary sequential resetting procedure with rejection threshold $\gamma> 1$} is the testing procedure  $\phi^{\mathrm{ord}}=(\phi^{\mathrm{ord}}_t)_{t\in\T}$ that rejects the blocks $(\tau_{r-1},\tau_r]$, $r\in\N$, where $\tau_r$ is given in \eqref{eq:tau}. 
Formally, $$\phi^{\mathrm{ord}}_t(X_1,\dots,X_t)=\{(\tau_{r-1},\tau_r]: r\in \N,~ \tau_r\le t \},\quad t\in\T.
   $$
\end{definition}

By definition, for any $r_0\in\N$, we have $\tau_r=\infty$ for all $r\ge r_0$ if $\tau_{r_0}=\infty$, and $\tau_r<\infty$ for all $r\le r_0$ if $\tau_{r_0}<\infty$.
For every separate block $r\in\N$ such that $\tau_{r-1}<\infty$, define the \emph{block-normalized process} $M^{[r]}=(M^{[r]}_{s})_{s\in\T}$ as the normalized piece of $\widetilde{M}$: 
\begin{equation*}
    M^{[r]}_0=1,~~M^{[r]}_s=\widetilde M_{\tau_{r-1}+s}~~\mbox{for}~ s\in\T_{\ge 1}~\text{ such that }~s+\tau_{r-1}\in\T.
\end{equation*}

In the ordinary sequential testing, the rejection at time $\tau_r$ is interpreted relative to the entire resetting block $(\tau_{r-1},\tau_r]$. This can be conservative when the entire rejection block is long, where the early part of the block may contribute little to the eventual rejection. To obtain more informative rejections, we trim the beginning of each block so that less informative data points are dropped. Define the following random times:
\begin{equation}
        \sigma_r
        =
        \begin{cases}
        \displaystyle
        \tau_{r-1}
        +
        \max\left\{
        s\in\{0,\ldots,\tau_r-\tau_{r-1}\}:
        M^{[r]}_s
        =
        \min_{0\le v\le\tau_r-\tau_{r-1}}M^{[r]}_v
        \right\},
        &\tau_r<\infty,\\[1ex]
        \tau_{r-1},&\tau_r=\infty.
        \end{cases}
        \label{eq:sigma}
\end{equation}
By definition, the random time $\sigma_r$ satisfies $\tau_{r-1}\le \sigma_r\le \tau_r$ and is $\mathcal F_{\tau_r}$-measurable (but it is not a stopping time).
The random time $\sigma_r$ is the time where the process \emph{attains the within-block minimum for the last time}.
We can interpret the rejection at $\tau_r$ only through the refined interval $(\sigma_r,\tau_r]\subseteq (\tau_{r-1},\tau_r]$.
\begin{definition}
\label{def:refined} 
     The \emph{refined sequential resetting procedure with rejection threshold $\gamma> 1$} is the testing procedure $\phi^{\mathrm{ref}}=(\phi^{\mathrm{ref}}_t)_{t\in\T}$ that rejects the blocks $(\sigma_{r},\tau_r]$, $r\in\N$, where $\tau_r$ and $\sigma_r$ are given in \eqref{eq:tau} and \eqref{eq:sigma}.
Formally,       $$\phi^{\mathrm{ref}}_t(X_1,\dots,X_t)=\{(\sigma_{r},\tau_r]: r\in \N,~ \tau_r\le t \},\quad t\in\T.
   $$
\end{definition}
We also speak of ``{sequential resetting procedures}''  when we do not distinguish ordinary/refined procedures and the rejection threshold $\gamma$ is clear, and we sometimes also omit ``sequential resetting''.
It is straightforward to check that both ordinary and refined procedures satisfy conditions (a) and (b) in Definition~\ref{def:testing-procedure}.

Following Definitions \ref{def:ordinary} and \ref{def:refined}, up to time $t\in\T$, the FDP and FDR of the ordinary and refined procedures are denoted by $\mathrm{FDR}^{\mathrm{ord}}_t,~\mathrm{FDP}^{\mathrm{ord}}_t,~\mathrm{FDR}^{\mathrm{ref}}_t$, and $\mathrm{FDP}^{\mathrm{ref}}_t$ 
as in \eqref{eq:FDP-FDR},\footnote{For example, $\mathrm{FDR}^{\mathrm{ord}}_t=\mathrm{FDR}^{\phi^{\mathrm{ord}}}_t$.} and we also denote the numbers of false rejections by $V_t=V^{\phi^{\mathrm{ord}}}_t$ and $V^\sigma_t=V^{\phi^{\mathrm{ref}}}_t$ as in \eqref{eq:V-R}. Moreover, by definition, the total number of reported intervals coincides for the ordinary and refined procedures, and we denote this quantity by $R_t$, so $R_t=\sum_{r\in\N}\id_{\{\tau_r\le t\}}$. Correspondingly, we let the number of true rejections for both procedures be $K_t=R_t-V_t$ and $K^\sigma_t=R_t-V^\sigma_t$.

\suppressfloats[t]\begin{figure}[t]
\centering
\resizebox{\textwidth}{!}{%
\begin{tikzpicture}[x=0.46cm,y=0.90cm,>=Latex,font=\scriptsize]
\tikzset{
    product/.style={blue!70!black,line width=1.05pt},
    guide/.style={black!45,densely dotted},
    alt/.style={orange!85!black,fill=orange!85!black},
    reset/.style={blue!70!black,fill=blue!70!black},
    sigma/.style={teal!70!black,fill=teal!70!black},
    minarrow/.style={
        ->,draw=black,line width=0.65pt,
        shorten <=1.5pt,shorten >=2pt
    }
}

\foreach \shift/\paneltitle/\caselabel/\casex/\casey in {
    0/{(a) alternative before\\the last minimum}/
      {ordinary interval: true,\\refined interval: false}/4.70/3.14,
    5.25/{(b) alternative after\\the last minimum}/
      {both intervals: true}/4.70/3.14,
    10.50/{(c) no alternative\\in the block}/
      {both intervals: false}/4.70/3.14} {
    \begin{scope}[xshift=\shift cm]
        \node[
            anchor=south,
            align=center,
            text width=4.55cm,
            font=\scriptsize
        ] at (4.70,3.58) {\paneltitle};

        \draw[->,line width=0.75pt] (0,0) -- (9.55,0);
        \draw[->,line width=0.75pt] (0,0) -- (0,2.92);

        \node[anchor=south east,inner sep=1pt] at (-0.12,2.92)
            {\(M^{[r]}_s\)};
        \node at (4.70,-1.08) {block time \(s\)};

        \draw[black!55,dashed]
            (0.12,2.55) -- (8.88,2.55)
            node[right,text=black]{\(\gamma\)};

        \node[
            anchor=center,
            align=center,
            text width=4.20cm,
            inner sep=1pt
        ] at (\casex,\casey) {\caselabel};
    \end{scope}
}

\begin{scope}
\draw[product] plot[smooth] coordinates {
    (0.65,1.00) (1.80,1.34) (2.70,0.82) (3.85,0.42)
    (5.40,0.74) (7.00,1.35) (8.85,2.55)
};

\foreach \x/\y/\lab in {
    0.65/1.00/{\tau_{r-1}},
    2.05/1.24/{a},
    3.85/0.42/{\sigma_r},
    8.85/2.55/{\tau_r}} {
    \draw[guide] (\x,0) -- (\x,\y);
    \node[below=5pt] at (\x,0) {\(\lab\)};
}

\draw[fill=white,line width=0.7pt]
    (0.65,1.00) circle (1.5pt);
\draw[alt]   (2.05,1.24) circle (1.6pt);
\draw[sigma] (3.85,0.42) circle (1.6pt);
\draw[reset] (8.85,2.55) circle (1.6pt);

\node[anchor=south,inner sep=1pt]
    (lastmin-a) at (4.15,1.38) {last minimum};
\draw[minarrow] (lastmin-a.south) -- (3.85,0.42);
\end{scope}

\begin{scope}[xshift=5.25cm]
\draw[product] plot[smooth] coordinates {
    (0.65,1.00) (1.65,0.76) (2.85,0.46) (4.20,0.68)
    (5.45,0.96) (6.95,1.52) (8.85,2.55)
};

\foreach \x/\y/\lab in {
    0.65/1.00/{\tau_{r-1}},
    2.85/0.46/{\sigma_r},
    5.45/0.96/{a},
    8.85/2.55/{\tau_r}} {
    \draw[guide] (\x,0) -- (\x,\y);
    \node[below=5pt] at (\x,0) {\(\lab\)};
}

\draw[fill=white,line width=0.7pt]
    (0.65,1.00) circle (1.5pt);
\draw[sigma] (2.85,0.46) circle (1.6pt);
\draw[alt]   (5.45,0.96) circle (1.6pt);
\draw[reset] (8.85,2.55) circle (1.6pt);

\node[anchor=south,inner sep=1pt]
    (lastmin-b) at (3.05,1.30) {last minimum};
\draw[minarrow] (lastmin-b.south) -- (2.85,0.46);
\end{scope}

\begin{scope}[xshift=10.50cm]
\draw[product] plot[smooth] coordinates {
    (0.65,1.00) (1.60,0.88) (2.80,0.62) (4.10,0.78)
    (5.80,1.08) (7.20,1.66) (8.85,2.55)
};

\foreach \x/\y/\lab in {
    0.65/1.00/{\tau_{r-1}},
    2.80/0.62/{\sigma_r},
    8.85/2.55/{\tau_r}} {
    \draw[guide] (\x,0) -- (\x,\y);
    \node[below=5pt] at (\x,0) {\(\lab\)};
}

\draw[fill=white,line width=0.7pt]
    (0.65,1.00) circle (1.5pt);
\draw[sigma] (2.80,0.62) circle (1.6pt);
\draw[reset] (8.85,2.55) circle (1.6pt);

\node[anchor=south,inner sep=1pt]
    (lastmin-c) at (3.05,1.36) {last minimum};
\draw[minarrow] (lastmin-c.south) -- (2.80,0.62);
\end{scope}

\end{tikzpicture}%
}

\caption{Refined classification inside a rejection block. Positions of the alternatives are denoted by $a$.}
\label{fig:localized-classification}
\end{figure}

Figure \ref{fig:localized-classification} illustrates the difference between ordinary and refined sequential resetting procedures. 
The two versions of the procedures serve different practical purposes.
The refined procedure gives more precise information on the location of the rejected spots. 
The FDR guarantee for ordinary blocks remains useful when the primary interest is in the number and reliability of alarms (instead of their locations in settings such as watermark detection). For example, in quality monitoring or financial backtesting, each rejection usually triggers a review of the entire block, where retaining the full block preserves useful context for investigating a detected problem.
Potentially, with the same threshold, the ordinary procedure may have a smaller FDR than the refined procedure when applied to the same dataset.
Later we will show that they actually share the same provable FDR bounds, and therefore the refined procedures should be recommended in practice.

At a fixed horizon, e-BH and closed e-BH \citep{WR22,XSF25} can control FDR for individual e-values. Our goal is to control FDR for intervals selected by accumulating evidence over sequential observations, resetting after each rejection, and refining the reported blocks. Applying the pointwise procedures to individual e-values does not directly provide this interval-reporting guarantee, and it can have limited power when the individual evidence is modest; Appendix~\ref{subsec:ebh-comparison} illustrates the distinction numerically.%

In our setting, the (ordinary or refined) interval hypothesis for each report depends on the past data, so we do not specify valid e-values a priori for these (data-dependent) hypotheses. As a consequence, we cannot directly apply existing online FDR control techniques \citep{XR24,FXR24} to control the cumulative FDR over the reports. These works also use calibrated rejection thresholds in discovery counts, whereas our threshold is constant over the entire horizon.

For technical purpose, with $\tau_r$ in \eqref{eq:tau}, define
\begin{equation}\label{eq:alter}
    \nu_r=\min\{t\in\T_{\geq \tau_{r-1}+1}:\p_t\notin\mathcal P_t\}\quad\mbox{and}\quad\nu^\sigma_r=\min\{t\in\T_{\geq \sigma_{r}+1}:\p_t\notin\mathcal P_t\},~~r\in\N,
\end{equation}
representing the first time points in the block at which the null is violated for the ordinary and refined procedures, respectively. In particular, if $\tau_r<\infty$, $\nu_r\leq \tau_r$ (or $\nu^\sigma_r\leq \tau_r$) implies that the $r$-th rejection is a true rejection and vice versa. In this sense, we have $$V_t=\sum_{r\in\N}\id_{\{\nu_r>\tau_r,~\tau_r\le t\}}\quad\mbox{and}\quad V^\sigma_t=\sum_{r\in\N}\id_{\{\nu^\sigma_r>\tau_r,~\tau_r\le t\}},~~t\in\T.$$

\section{False discovery rate under   independence  for  a fixed horizon}
\label{sec:FDR indep}

In this section, we examine the false discovery rate bound for sequential resetting procedures under an independence assumption, which will be relaxed in the next section.

 \subsection{FDR bound and examples}

We first state the main assumption used in this section.  
This assumption may not hold in some applications, but it gives a useful benchmark  for our full theory.  
\begin{assumption-c}{IND}
\label{ass:ind}
 The set \(\mathcal H_0=\{u\in\T_{\ge 1}:\p_u\in\mathcal P_u\}\) of null times is deterministic.
Under the data-generating probability $\p$, the e-values $(E_u)_{u\in\H_0}$ are mutually independent and independent of the vector $(E_u)_{u\notin\H_0}$.
\end{assumption-c}

Here, we only set independence requirements for the e-values under the null hypotheses. Apart from nonnegativity and independence from the null vector, no condition is imposed on alternative e-values. 
This assumption parallels the independence condition for the BH procedure \citep{BH95}: the null p-values are mutually independent and independent of the non-null p-values. In our setting, the condition is imposed on the e-values used to construct the resetting process.

\begin{theorem}
\label{thm:localized-main}
Let $\alpha\in(0,1)$. Under Assumption \ref{ass:ind}, the  sequential resetting procedures with rejection threshold $1/\alpha$  satisfy
\begin{equation}
        \FDR^{\mathrm{ord}}_t\le \FDR^{\mathrm{ref}}_t\le \min\left\{1,\frac{\alpha}{1-\alpha}\right\},~~\text{for every deterministic }t\in\T.
        \label{eq:main-bound}
\end{equation}

\end{theorem}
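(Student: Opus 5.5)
The first inequality and the bound by $1$ are immediate, so the real work is $\FDR^{\mathrm{ref}}_t\le \alpha/(1-\alpha)$.

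\emph{Immediate parts.} Since $\sigma_r\ge\tau_{r-1}$, we have $(\sigma_r,\tau_r]\subseteq(\tau_{r-1},\tau_r]$. If the ordinary $r$-th report is false, every time in the full block is null, hence every time in the refined block is null. So $\nu_r>\tau_r$ implies $\nu^\sigma_r>\tau_r$, and therefore $V_t\le V^\sigma_t$ pointwise. Both procedures share the denominator $R_t\vee 1$, so $\mathrm{FDP}^{\mathrm{ord}}_t\le \mathrm{FDP}^{\mathrm{ref}}_t$, which gives $\FDR^{\mathrm{ord}}_t\le \FDR^{\mathrm{ref}}_t$. The bound by $1$ holds because $V^\sigma_t\le R_t$.

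\emph{Step 1: reduce to deterministic alternatives.} Under Assumption~\ref{ass:ind}, I condition on the vector $(E_u)_{u\notin\H_0}$. Because the null e-values are independent of it, after conditioning the alternative e-values are deterministic numbers and the null e-values remain mutually independent with $\E[E_u]\le 1$. It therefore suffices to prove the bound for fixed alternative values and independent null e-values. In this reduced model the law of the future of $\widetilde M$ after a reset depends on the past only through the reset time $\tau_{r-1}$, which is the Markov structure I will exploit.

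\emph{Step 2: one false refined report per reset.} The basic tool is the refined Ville inequality (Lemma~\ref{lem:ville}): conditionally on $\mathcal F_{\tau_{r-1}}$, the probability that the $r$-th refined report is false is at most $\alpha$. I would extend it slightly to the following form. Starting from a reset at time $a$, the event that the next rejection is false and occurs at time $b$ is determined by null e-values only on $(\sigma,b]$. On the stretch before $\sigma$ the process merely has to stay above its later minimum. This lets me decompose
\[
\mathrm{FDP}^{\mathrm{ref}}_t=\sum_{r\ge1}\frac{\id_{\{\nu^\sigma_r>\tau_r,\ \tau_r\le t\}}}{R_t}
\]
and, for each $r$, treat the rejection count after $\tau_r$ as a function of the fresh process restarted at $\tau_r$.

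\emph{Step 3: the denominator (main obstacle).} Let $F$ be the number of rejections in $(\tau_r,t]$; it depends only on $\tau_r$. On the $r$-th term, $R_t\ge r+F$. A naive bound $\sum_r \alpha/r$ loses a logarithm, so it is not good enough.

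The idea I would try first is a weighting and induction argument over resets. Define
\[
g(a)=\sup \E\Big[\sum_{r:\,\tau_{r}>a} \frac{\id_{\{\text{$r$-th refined report false},\ \tau_r\le t\}}}{R_t}\ \Big|\ \text{reset at } a,\ R_a=j\Big],
\]
together with the analogous quantity that also carries the weight $j/(j+F)$. Then I would prove, by backward induction on $a$ from $t$ down to $0$, an inequality of the form "expected future false-fraction $\le \alpha/(1-\alpha)$ times the expected reciprocal of the future denominator". The geometric series $\alpha+\alpha^2+\cdots=\alpha/(1-\alpha)$ should arise as follows. Each false refined rejection, by Step 2, costs conditional probability at most $\alpha$ and then restarts the process in a state no better than the original, since the alternatives are fixed and the clock has only advanced. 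A true rejection only enlarges the denominator.

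The delicate point is that a false rejection at an earlier time changes which alternatives fall into which block. This makes the comparison "restart is no better than the original" non-obvious. I expect to need a monotone coupling in the reset time: the future rejection count from a later reset is stochastically dominated in the appropriate direction, with the alternatives held fixed. If that coupling fails, the fallback is to bound the $k$-th false rejection, in order of false rejections only, by conditional probability $\alpha$ on the event that $k-1$ false ones already occurred. Combined with $R_t\ge V^\sigma_t$, this gives $\E[\mathrm{FDP}]\le\sum_{k\ge1}\alpha^k\cdot(\text{share}\le 1)$, and the remaining work is to control the true rejections interleaved between false ones so that they only help.
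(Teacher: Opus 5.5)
Your proof of the first inequality, the bound by $1$, and the conditioning on the alternative e-values in Step~1 are correct and match the paper. The gap is Step~3, which carries the entire content of the bound $\alpha/(1-\alpha)$, and neither of your two routes closes it.

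\emph{The fallback is false as stated.} Given that $k-1$ false refined reports have occurred, the conditional probability of a $k$-th one is not at most $\alpha$. Every intervening true rejection restarts the capital at $1$ and creates a fresh opportunity. In the model of Example~\ref{ex:1}, each cycle independently yields a false report with probability $q_\varepsilon$, so $\p(V^\sigma_{2m}\ge1)=1-(1-q_\varepsilon)^m\to1$. Only \emph{consecutive} false reports between two true ones are geometrically controlled (Lemma~\ref{lem:ville_n}).

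\emph{Your probabilistic inputs cannot suffice.} The refined Ville inequality and its geometric iteration hold without Assumption~\ref{ass:ind}. Yet Example~\ref{ex:1}(ii) has a deterministic null set, a fixed horizon $2m$, and FDR tending to $\frac{\alpha}{1-\alpha}\log\frac1\alpha>\frac{\alpha}{1-\alpha}$ for $\alpha<e^{-1}$. Interleaving geometric clusters with true rejections is exactly what produces the logarithmic factor of Theorem~\ref{thm:FDR_barrier}. So no argument whose only probabilistic input is ``each false report costs conditional probability $\alpha$'' can reach $\alpha/(1-\alpha)$. Independence must be used to decouple the number of false reports in a null stretch from the denominator $R_t$.

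\emph{The primary route is missing its core step.} You try to use independence through the Markov property at resets. But the claim that a restart after a false report is ``no better than the original'' is precisely what needs proving, and the monotone coupling in the reset time you hope for does not exist. Take $E_{b'+1}=\infty$ at an alternative, and let $F_b$ be the number of rejections after a reset at $b$.
\begin{itemize}
\item If the null e-values on $(b,b']$ equal $0$ or $2$ with probability $1/2$ each, then $\p(F_b=0)\ge1/2$ while $F_{b'}\ge1$.
\item If they equal $1/\alpha$ with probability $\alpha/2$ and $1/(2-\alpha)$ otherwise, then $F_b\ge F_{b'}$, with strict inequality with positive probability.
\end{itemize}
So $F_b$ is not monotone in $b$ in either direction.

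What \emph{is} monotone is the future rejection count as a function of the capital carried at a fixed time (Lemma~\ref{lem:monotone}). The paper gets $\alpha/(1-\alpha)$ from a capital budget rather than a geometric series, as follows.
\begin{itemize}
\item On an all-null stretch, $C+(\alpha^{-1}-1)A$ is a supermartingale, since each reset destroys at least $\alpha^{-1}-1$ of capital.
\item The inequality $\E[w(Q)(Q+(\alpha^{-1}-1)N)]\le X\,\E[w(Q)]$ survives every bounded weight $w$ decreasing in the terminal capital (Lemma~\ref{lem:clean-budget}, via a lower-tail induction).
\item One bounds $N_\ell/(R_t\vee1)\le N_\ell/(R_t-N_\ell+1)$. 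The new denominator depends only on past counts, $Q_\ell$, and independent future e-values, and it is increasing in $Q_\ell$. This yields $\E[N_\ell/(R_t\vee1)]\le\frac{\alpha}{1-\alpha}\E[(X_\ell-Q_\ell)/(R_t-N_\ell+1)]$.
\item Refinement is absorbed by starting each clean interval at a true rejection or at the return to the running minimum. This gives $X_{\ell+1}\le Q_\ell$ whenever no true report intervenes.
\item Lemma~\ref{lem:charge} then telescopes the weighted capital drops to at most $1$.
\end{itemize}
Without some version of these ingredients, Step~3 remains a heuristic rather than a proof.
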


 We first briefly explain the first inequality $\FDR^{\mathrm{ord}}_t\le \FDR^{\mathrm{ref}}_t$, which follows by construction. 
Indeed, the ordinary version has a smaller FDP than the refined version pathwise.  If the
rejection at \(\tau_r\) is true after refinement, then the alternative time $\nu^\sigma_r$ lies in
\((\sigma_r,\tau_r]\).  Because \((\sigma_r,\tau_r]\subseteq
(\tau_{r-1},\tau_r]\), the same ordinary rejection is true in the block $(\tau_{r-1},\tau_r]$. Equivalently,
every false rejection under the ordinary law in the block $(\tau_{r-1},\tau_r]$ is false under the refined law in the block $(\sigma_r,\tau_r]$.  Hence, for every sample path, $V_t\le V^\sigma_t$ for all $t\in\T$, and thus
$
        {V_t} /({R_t\vee1}) 
        \le
 {V^\sigma_t}/({R_t\vee1})$.

For a single hypothesis, a test supermartingale with rejection threshold $1/\alpha$ controls type-I error at level \(\alpha\) by Ville's inequality \citep{V39}:
\[
        \mathbb P\left(\sup_{t\in\T} M_t\ge \frac 1 \alpha \right)\le \alpha,~~~~\alpha \in (0,1).
\]
The FDR bound we obtained in \eqref{eq:main-bound} is higher than the level $\alpha$.
Indeed, one might hope for an upper bound $\alpha$ for 
$\mathrm{FDR}_t^{\rm ord}$ and $\mathrm{FDR}_t^{\rm ref}$. 
Example~\ref{ex:non-iid-alpha-counterexample} gives a concrete counterexample for which \(\FDR^{\mathrm{ord}}_t>\alpha\). 

\begin{example} 
\label{ex:non-iid-alpha-counterexample}
Take \(\alpha=1/5\), \(T=4\), and any $q\in (0,1/5)$. Take labels \((N,N,A,N)\), and independent
e-values given by
\[
\begin{gathered}
E_1=5\ \text{with probability }q,\qquad
E_1=\varepsilon\ \text{with probability }1-q,\\
E_2,E_4=5\ \text{with probability }1/5,\qquad
E_2,E_4=0 \ \text{with probability }4/5,\\
E_3=\infty \quad\text{deterministically},
\end{gathered}
\]
where $\varepsilon>0$ normalizes the mean of $E_1$ to be $1$.  The  FDR for this construction is
\begin{align*}
        \FDR^{\mathrm{ord}}_T
        &=q\left( \frac34\times \frac1{25} +\frac23\times \frac4{25} + 1 \times  \frac45 \right) +(1-q)\left(\frac12\times  \frac1{25}\right) =\frac{11}{12}q + \frac1{50} \to\frac{ 61}{300}  >\alpha,
\end{align*}
as $q\to 1/5$  (see Figure~\ref{fig:ex1} for an illustration). 
In fact, when $q$ is close to $1/5$, a finite $E_3=1/\varepsilon$ is sufficient for the above FDR computation,  but $E_3=\infty$ gives a better intuition for the role of $E_3$.
This example also answers negatively an open question in \cite[Section 5.2.2]{JWZ25}.
\end{example}

The next example shows that the finite-horizon results in Theorem \ref{thm:localized-main} do not hold in either of the following relaxations:
\begin{enumerate}[(i)]
    \item if we replace the finite horizon $t\in\T$ by an arbitrary stopping time $\tau_*$, even if we require that $\tau_*$ is adapted to the total rejection-count filtration;
    \item if we do not impose independence in Assumption \ref{ass:ind}, even if we still consider a deterministic set $\H_0$ and a finite horizon $t\in\T$.
\end{enumerate}
The example further shows that in both cases, the price to pay for the relaxation is a multiplicative factor of $-\log \alpha$. In Section \ref{sec:FDR}, we show that relaxing both conditions still leads to an asymptotically equivalent FDR upper bound as $\alpha\to 0$.

\begin{example}\label{ex:1}
(i) Fix
\(\varepsilon\in(0,1)\), $\alpha\in(0,e^{-1})$, and set $q_\varepsilon
    =
    \frac{1-\varepsilon}{1/\alpha-\varepsilon}
    =
    \frac{\alpha(1-\varepsilon)}{1-\alpha\varepsilon}
    \in(0,\alpha)$. Assume that $\p_{2j-1}\in\mathcal P_{2j-1}$ and $\p_{2j}\not\in\mathcal P_{2j}$ for $j\in\N$. 
At each null time $2j-1$, independently let
\[
        E_{2j-1}
        =
        \begin{cases}
        \frac{1}{\alpha}, & \text{with probability }q_\varepsilon;\\
        \varepsilon, & \text{with probability }1-q_\varepsilon.
        \end{cases}
\]
Then \(\E[E_{2j-1}]=1\).  At each alternative time $2j$, set deterministically
\(
        E_{2j}={\alpha^{-1}}/{\varepsilon}.
\)
Thus, if the null does not reject, the following alternative rejects truly;
if the null rejects, the rejection is false.

Let \(R_t\) be the total number of rejections until time $t$ and
\(\mathcal R_t=\sigma(R_1,\dots,R_t)\).  For $m\geq 1$, define the bounded
\((\mathcal R_t)\)-stopping time
$$\tau_m
        =
        \min\{2j-1:R_{2j-1}=j,\ 1\le j\le m\},$$ 
with \(\tau_m=2m\) if the set is empty. This records the time of the first false rejection. If the first null rejection occurs in
cycle \(j\), then \(j-1\) true rejections have occurred before it, and $\frac{V_{\tau_m}}{R_{\tau_m}\vee1}=\frac1j$. 
Therefore,
\begin{align}
     \lim_{m\to \infty,\,\varepsilon\to 0}   \FDR^{\mathrm{ord}}_{\tau_m}
        =\lim_{m\to \infty,\,\varepsilon\to 0}
        \sum_{j=1}^m
        q_\varepsilon(1-q_\varepsilon)^{j-1}\frac1j=\frac{\alpha}{1-\alpha}\log\frac1\alpha.\label{eq:mee}
\end{align}
  Consequently, for small enough \(\varepsilon\) and
large enough \(m\), $\FDR^{\mathrm{ord}}_{\tau_m}>\frac{\alpha}{1-\alpha}$ because $\alpha<e^{-1}$.

(ii) In the same setup as (i), consider the e-values given by
\[
\bar E_t=
\begin{cases}
E_t,&t\le\tau_m,\\
1,&t>\tau_m.
\end{cases}
\]
It follows that $(\bar E_t)_{t\in\N}$ does not satisfy Assumption \ref{ass:ind}, because the choice of each $\bar E_t$ depends on the past rejection records. Moreover, after time $\tau_m$, there will be no further rejection. Since $\tau_m\leq 2m$, the FDR for the procedure built from $(\bar E_t)_{t\in\N}$ satisfies $\overline{\mathrm{FDR}}^{\mathrm{ord}}_{2m}
=\mathrm{FDR}^{\mathrm{ord}}_{\tau_m}$.  It follows that, for each fixed time horizon $2m$, this construction yields an FDR of $\sum_{j=1}^m
        q_\varepsilon(1-q_\varepsilon)^{j-1}\frac1j$, which has limit of $\frac{\alpha}{1-\alpha}\log\frac1\alpha$ as $m\to\infty$ and $\varepsilon\to 0$ by \eqref{eq:mee}.
\end{example}

 \subsection{Proof sketch for Theorem \ref{thm:localized-main}}

We give a road map for proving the main part of Theorem~\ref{thm:localized-main}, which is the second inequality in \eqref{eq:main-bound}. The full proof can be found in Appendix \ref{app:sec:proof_thm1}. Fix a
terminal time \(t\in\T\).  All objects in the proof are constructed only up to
this fixed horizon \(t\).  Since \(t\) is arbitrary, the final result gives a pointwise bound at each deterministic \(t\in\T\).
The proof of Theorem \ref{thm:localized-main} has three main ideas.

\begin{enumerate}[(i)]
\item \textbf{Conditioning on the alternatives.}
Assumption~\ref{ass:ind} does not impose independence on the alternative times.  We handle this by conditioning
on the alternative e-values up to time \(t\), in which case the null e-values
remain mutually independent.  Thus, by taking mixtures, we may assume that e-values at all times are independent.

\item \textbf{Constructing refined all-null intervals for rejection characterization.}
In order to characterize true and false rejection blocks, we need to identify the positions of alternative time points. To this end, we define
\begin{equation*}
\begin{aligned}
        a(s)=\min\{u\in\{s+1,\ldots,t\}:u\notin\H_0\}~~
        \mbox{and}~~
        b(s)=(a(s)-1)\wedge t,~~s=0,\dots,t.
\end{aligned}
\end{equation*}
Here, $a(s)$ is the first alternative point after $s$, and thus \((s,b(s)]\) is the all-null interval immediately following \(s\).

Now we construct refined intervals containing true and false rejections. Set $\lambda_1=0$. Recursively, after setting a starting point $\lambda_\ell$ for $\ell\in\{1,\ldots,t\}$, write $\beta_\ell=b(\lambda_\ell)$. We construct an all-null interval $(\lambda_\ell,\beta_\ell]$.
Because the rejection blocks are refined by block minima, we set the starting point $\lambda_{\ell+1}$ of the next all-null interval $(\lambda_{\ell+1},\beta_{\ell+1}]$, assuming it exists (later, we say the interval is \emph{genuine} if it exists), as the earlier of the following: (i) the first time point $\theta_\ell$ of a rejection after $\beta_\ell$; (ii) the first time point $U_\ell$ at which the path reaches or falls below its running minimum $L_{\beta_\ell}=\min_{\tau_{R_{\beta_\ell}}\le u\le \beta_\ell}Z_{u}$. Here, we use the running minimum for technical purposes, which dominates the block minimum (see the proof of Lemma \ref{lem:classification}). By such construction, if $\theta_\ell$ is earlier than $U_\ell$, then the rejection at $\lambda_{\ell+1}=\theta_\ell$ is true under the refined law in the interval $(\beta_\ell,\lambda_{\ell+1}]$. On the other hand, if $U_\ell$ is earlier than $\theta_\ell$, then $\lambda_{\ell+1}=U_\ell$ and there are no rejections in the interval $(\beta_\ell,\lambda_{\ell+1}]$, which should be dropped by the refinement.

The all-null intervals $(\lambda_\ell,\beta_\ell]$, $\ell\in\{1,\ldots,t+1\}$, are what we call \emph{clean intervals}.
We record $N_\ell=R_{\beta_\ell}-R_{\lambda_\ell}$ as the total number of rejections in the $\ell$-th clean interval.
The construction is designed so that rejections inside clean intervals are exactly
the refined false rejections, whereas those outside clean intervals are refined true rejections. Lemma~\ref{lem:classification} proves
$V_t^\sigma=\sum_{\ell=1}^{t+1}N_\ell$.

By definition, it is straightforward that
\begin{equation}\label{eq:map1}
    \frac{N_\ell}{R_t\vee1}
        \le
        \frac{N_\ell}{R_{t}-N_\ell+1}.
\end{equation}
It remains to derive the bound of the right-hand side of \eqref{eq:map1} for each $\ell$.

\item \textbf{Obtaining the bound of weighted capital costs.}
Inside every clean interval $(\lambda_\ell,\beta_\ell]$, $\ell\in\{1,\ldots,t+1\}$, all corresponding hypotheses are null. Let $X_\ell=Z_{\lambda_\ell}$ and $Q_\ell=Z_{\beta_\ell}$ be the initial and terminal capital of the clean interval, respectively. A false rejection is costly:
to reject, the resetting process must cross the threshold \(1/\alpha\), and the
post-rejection capital is brought back to \(1\), corresponding to a capital cost of at least $1/\alpha-1$.
Thus, heuristically, the total capital cost of the entire clean interval, $X_\ell-Q_\ell$, should be roughly bounded below by $(1/\alpha-1)N_\ell$.

In light of this intuition, Lemma~\ref{lem:clean-start} shows a rigorous relation that for any
bounded decreasing weight \(w\) known at the start of a clean interval,
\begin{equation}\label{eq:map2}
        \E\left[
        w(Q_\ell)N_\ell
        \,\middle|\,
        \mathcal G_{\lambda_\ell}
        \right]
        \le
        \frac{\alpha}{1-\alpha}
        \E\left[
        w(Q_\ell)(X_\ell-Q_\ell)
        \,\middle|\,
        \mathcal G_{\lambda_\ell}
        \right],
\end{equation}
where $\mathcal G_{\lambda_\ell}=\sigma(E_1,\ldots,E_{\lambda_\ell})$.
Comparing \eqref{eq:map1} and \eqref{eq:map2}, we can choose a weighting function $w$ induced by the denominator on the right-hand side of \eqref{eq:map1}. After checking the boundedness and monotonicity (Lemma \ref{lem:monotone}) of the weighting function we choose, we obtain
\begin{equation}\label{eq:map3}
    \E\left[
        \frac{N_\ell}{R_t\vee1}
        \right]
        \le
        \E\left[\frac{N_\ell}{R_{t}-N_\ell+1}\right]
        \le
        \frac{\alpha}{1-\alpha}
        \E\left[
        \frac{X_\ell-Q_\ell}{R_t-N_\ell+1}
        \right].
\end{equation}

Finally, it remains to calculate the bound of the weighted capital cost on the right-hand side of \eqref{eq:map3}.
We use a telescoping relation between consecutive genuine
clean intervals.  When $U_\ell<\theta_\ell$, Lemma~\ref{lem:epoch-drop} gives $X_{\ell+1}\le Q_\ell$. In other words, the ending capital of one clean interval controls the starting capital
of the next. Consider the weight $\gamma_\ell=(R_t-N_\ell+1)^{-1}$. If $N_\ell=0$, then $\gamma_\ell=(R_t+1)^{-1}$; if $N_\ell>0$, then $\gamma_\ell\le N_\ell/R_t$. Hence, Lemma~\ref{lem:charge} proves in this sense that
\begin{equation}\label{eq:map4}
    \sum_{\ell=1}^{t+1} \frac{X_\ell-Q_\ell}{R_t-N_\ell+1}=\sum_{\ell=1}^{t+1} \gamma_\ell(X_\ell-Q_\ell) \le \begin{cases}
        \frac{K^\sigma_t}{R_t+1}+\frac{V^\sigma_t}{R_t} \le 1,~~V^\sigma_t>0,\\
        \frac{R_t+1}{R_t+1}=1,~~V^\sigma_t=0.
    \end{cases}
\end{equation}
Summing \eqref{eq:map3} over $\ell=1,\dots,t+1$ and using \eqref{eq:map4}, we prove that $\mathrm{FDR}^{\mathrm{ref}}_t\le \alpha/(1-\alpha)\wedge 1$. \qedhere
\end{enumerate}

\section{Online false discovery rate control without independence}
\label{sec:FDR}

We now turn to online FDR bounds without the independence assumption (Assumption \ref{ass:ind}). 
Thus, there are three upgrades: anytime validity, allowing dependence, and allowing adapted null time spots. It turns out that these three upgrades together come at a price of a factor of the scale $-\log\alpha$.
As shown in Example \ref{ex:1}, either anytime validity or allowing dependence already pays the price of a factor of the same scale. We do not examine the separate price for allowing adapted null spots because data-dependent null spots can induce dependence between null and alternative e-values.

Allowing dependence between the null e-values and non-null e-values is particularly useful when the e-values are sequentially computed to form an e-process (see the applications in Sections \ref{sec:llm-experiment} and \ref{sec:backtest}). In such applications, the null e-values are typically computed from a betting strategy that depends on the values of the non-null e-values \citep[Chapter 7]{RW25};  note that it is unknown to the tester which e-values are non-null, so avoiding dependence is impossible in such constructions. On the other hand, the conditional validity of e-values in \eqref{eq:evalue} is unaffected by choosing betting strategies.

\subsection{Anytime-valid FDR control}

Define the \emph{uniform} false discovery rate for the ordinary and refined sequential resetting procedures respectively as
\begin{equation}
\label{eq:FDR_infty}
    \mathrm{FDR}^{\mathrm{ord}}_\infty=\E\left[\sup_{t\in\T}\mathrm{FDP}^{\mathrm{ord}}_t\right]\qquad\mbox{and}\qquad\mathrm{FDR}^{\mathrm{ref}}_\infty=\E\left[\sup_{t\in\T}\mathrm{FDP}^{\mathrm{ref}}_t\right].
\end{equation}
Note that the quantities $\mathrm{FDR}^{\mathrm{ord}}_\infty$ and $\mathrm{FDR}^{\mathrm{ref}}_\infty$ are stronger than the usual online FDR notions in \eqref{eq:FDP-FDR} due to the suprema over $\T$ inside the expectations; that is, they are the expected worst FDPs over the entire horizon. This is equivalent to the SupFDR introduced in \citet{XR22} and studied in \citet{FXR24}.

The following result shows the FDR bound under the general setup. A simpler FDR bound for a modified procedure with a deterministic number of rejections is demonstrated in Appendix~\ref{app:sec:r}.

\begin{theorem}\label{thm:FDR_barrier}
  Let $\alpha\in(0,1)$. The sequential resetting procedures with rejection threshold  $1/\alpha$  satisfy
    \begin{equation}
    \label{eq:local_FDR_bound}
        \mathrm{FDR}^{\mathrm{ord}}_\infty \le \mathrm{FDR}^{\mathrm{ref}}_\infty\le 1-(1-\alpha)\E\bigg[\inf_{k\in\N}\frac{k}{k+S_k}\bigg]\le \alpha\bigg(1+\log{\frac{1}{\alpha}}\bigg),
    \end{equation}
  where  $(S_k=\sum_{j=1}^k\xi_j)_{k\in\N}$ is the partial sum process of i.i.d.~geometric random variables $\{\xi_j\}_{j\in\N}$ with parameter $1-\alpha$, i.e., $\p(\xi_j=n)=\alpha^n(1-\alpha)$, $n\in\N_0$.
\end{theorem}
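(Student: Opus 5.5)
The first inequality follows from the same pathwise argument used for Theorem~\ref{thm:localized-main}. Every ordinary false rejection is also a refined false rejection, so $V_t\le V^\sigma_t$ for every $t$ and every path, and hence $\sup_t \mathrm{FDP}^{\mathrm{ord}}_t\le \sup_t\mathrm{FDP}^{\mathrm{ref}}_t$. The real work is the middle inequality, which I would obtain by a stochastic-domination (coupling) argument on the sequence of rejection labels.

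\textbf{Step 1: a conditional Ville bound for refined rejections.} List the rejections in order $r=1,2,\dots$ and label each one false or true. I expect the refined Ville inequality (Lemma~\ref{lem:ville}) to give
\[
\p\bigl(\tau_r<\infty,\ \nu^\sigma_r>\tau_r \,\big|\, \mathcal F_{\tau_{r-1}}\bigr)\le \alpha .
\]
This should hold with no independence assumption, using only \eqref{eq:evalue} and the fact that the block-normalized process $M^{[r]}$ starts at $1$ and is a test supermartingale on the null stretch of the block.

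\textbf{Step 2: reduce the FDP to geometric run lengths.} Between consecutive true rejections there is a run of false rejections. Let $\xi'_k$ be the number of false rejections that occur before the $k$-th true rejection and after the $(k-1)$-th. The FDP is largest just before a true rejection arrives. So if $S'_k=\sum_{j\le k}\xi'_j$, then
\[
\sup_t \mathrm{FDP}^{\mathrm{ref}}_t \le \sup_k \frac{S'_k}{S'_k+k-1}\vee\cdots .
\]
I would write this more carefully as
\[
1-\sup_t \mathrm{FDP}^{\mathrm{ref}}_t=\inf_t \frac{K^\sigma_t}{R_t\vee1}.
\]
The infimum is attained at times right after false rejections, where $K^\sigma=k-1$ and $R=k-1+S'_k$. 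To match the stated form $k/(k+S_k)$, I would count one fictitious true rejection at time $0$. Equivalently, I would bound the FDP by $S'_k/(k+S'_k)$ with the index shifted appropriately, and check that the $(1-\alpha)$ factor and the choice of $\xi_j$ supported on $\N_0$ absorb this shift. A finite sequence, in which rejections stop, is handled by letting the remaining $\xi'$ be zero, and stopping can only help.

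By Step~1, each new rejection is false with conditional probability at most $\alpha$. A standard sequential coupling then makes $(\xi'_j)$ stochastically dominated, jointly, by i.i.d.\ geometric variables: $\xi'_j\le \xi_j$ with $\p(\xi_j\ge n)=\alpha^n$. The functional $\inf_k k/(k+S_k)$ is decreasing in every $\xi_j$, so the coupling gives the middle bound. The extra factor $(1-\alpha)$ and the ``$1-$'' form come from handling the first run. Concretely, I would condition on whether the very first label is false, which contributes the leading $\alpha$, or true, and check that the bookkeeping yields exactly $1-(1-\alpha)\E[\inf_k k/(k+S_k)]$.

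\textbf{Step 3: the explicit bound $\alpha(1+\log(1/\alpha))$.} Here I would use a supermartingale and maximal-inequality argument on the random walk. For each $k$, the ratio $S_k/k$ exceeds $x$ only if the walk $S_k-xk$ reaches level $0$. Since the increments are geometric, $\p(\sup_k (S_k-xk)\ge 0)$ can be computed through the exponential martingale $\theta^{S_k}c^{-k}$, or through ladder heights, and gives roughly $\p(\sup_k S_k/k\ge x)\le \alpha^{\,\cdots}$. Then
\[
\E\Bigl[\sup_k \tfrac{S_k}{k+S_k}\Bigr]=\int_0^1 \p\Bigl(\sup_k \tfrac{S_k}{k+S_k}>y\Bigr)\,dy .
\]
Bounding the tail by $\min\{1,\alpha/(1-y)\}$ (a Ville-type bound for the geometric walk, using that $\alpha^{S_k}\cdots$ is a martingale) and integrating gives $\alpha+\alpha\log(1/\alpha)$.

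The step I expect to be hardest is Step~1 together with the coupling. The difficulty is that $\sigma_r$ is not a stopping time, so the conditional bound has to come from Lemma~\ref{lem:ville}. One then still has to justify that conditional probabilities of at most $\alpha$ at the rejection stopping times yield the joint geometric domination of the run lengths. That requires care because the rejections, and hence the indexing, are themselves random.
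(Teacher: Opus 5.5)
\textbf{The gap is in Step~3: as written it does not prove the third inequality.}

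First, the tail bound you propose, $\p\bigl(\sup_k S_k/(k+S_k)>y\bigr)\le\min\{1,\alpha/(1-y)\}$, is false. By the strong law, $S_k/k\to\alpha/(1-\alpha)$ a.s., so $S_k/(k+S_k)\to\alpha$ and hence $\sup_k S_k/(k+S_k)\ge\alpha$ a.s. The left side therefore equals $1$ for every $y<\alpha$, while $\alpha/(1-y)<1$ for $y<1-\alpha$. The bound thus fails for all $y<\min\{\alpha,1-\alpha\}$. Its shape is also wrong: it increases in $y$, so it is vacuous exactly where the true tail is small.

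Second, even a correct bound $\E[\sup_k S_k/(k+S_k)]\le\alpha(1+\log\frac1\alpha)$ would not close the argument. The middle quantity in \eqref{eq:local_FDR_bound} equals $\alpha+(1-\alpha)\E[\sup_k S_k/(k+S_k)]$, so what you need is $\E[\sup_k S_k/(k+S_k)]\le\frac{\alpha}{1-\alpha}\log\frac1\alpha$.

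The ingredient that produces the logarithm is a maximal inequality with $1/x$ decay, not the exponential-martingale route you leave as ``$\alpha^{\cdots}$''. Since $(S_k/k)_{k\in\N}$ is a reverse martingale, Doob's inequality gives $\p(\sup_k S_k/k\ge x)\le\E[\xi_1]/x=\alpha/((1-\alpha)x)$; this is what the paper takes from \citet{S15}. Substituting $x=u/(1-u)$ gives $\p(\sup_k S_k/(k+S_k)\ge u)\le\min\{1,\frac{\alpha(1-u)}{(1-\alpha)u}\}$. Integrating over $[0,1]$ gives $\frac{\alpha}{1-\alpha}\log\frac1\alpha$, and then $\alpha+(1-\alpha)\cdot\frac{\alpha}{1-\alpha}\log\frac1\alpha=\alpha(1+\log\frac1\alpha)$.

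Equivalently, the full benchmark $1-\id_{\{\xi_1=0\}}\inf_k k/(k+\sum_{j=2}^{k+1}\xi_j)$ is the supremum of the running averages of i.i.d.\ Bernoulli$(\alpha)$ false-labels. By the same inequality its tail is at most $\min\{1,\alpha/y\}$, which integrates directly to $\alpha(1+\log\frac1\alpha)$. This is presumably the bound you intended, but with $y$ and $1-y$ interchanged and applied to the wrong functional.

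Steps~1--2 otherwise follow the paper's route:
\begin{itemize}
\item Lemma~\ref{lem:ville} is iterated from $\kappa^\sigma_{j-1}$ to give $\p(N^\sigma_j(t)\ge n\mid\mathcal F_{\kappa^\sigma_{j-1}\wedge t})\le\alpha^n$ (Lemma~\ref{lem:ville_n}).
\item This is upgraded by induction to joint domination by i.i.d.\ geometrics (Lemma~\ref{lem:mono_com}), which plays the role of your sequential coupling.
\item $\mathrm{FDP}^{\mathrm{ref}}_t$ is bounded pathwise by the coordinatewise increasing functional $1-\id_{\{n_1=0\}}\inf_k k/(k+\sum_{j=2}^{k+1}n_j)$. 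This is exactly your ``condition on the first label'' bookkeeping.
\end{itemize}

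Two slips in Step~2 should be removed. Adding a fictitious true rejection at time $0$ lowers the FDP: with $k=1$ and a nonzero first run, the true FDP is $1>S'_1/(1+S'_1)$. It therefore yields a lower bound, and would lead to the benchmark $1-\E[\inf_k k/(k+S_k)]$, which Example~\ref{ex:infty} shows is exceeded. Also, the identity $1-\sup_t\mathrm{FDP}^{\mathrm{ref}}_t=\inf_t K^\sigma_t/(R_t\vee1)$ fails at $t=0$, where the right side is $0$; the infimum must be taken over $\{t:R_t\ge1\}$.
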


Anytime validity concerns FDR control under stopping times. 
Let \(\rho\) be an almost surely finite
\((\mathcal F_t)\)-stopping time taking values in \(\mathbb T\). Pathwise, we have $$\mathrm{FDP}^{\mathrm{ref}}_\rho\le\sup_{t\in\T}\mathrm{FDP}^{\mathrm{ref}}_t~~\mbox{and}~~\mathrm{FDP}^{\mathrm{ord}}_\rho\le\sup_{t\in\T}\mathrm{FDP}^{\mathrm{ord}}_t.$$
Hence, $\FDR_\rho^{\mathrm{ref}}\le\FDR_\infty^{\mathrm{ref}}$,
$\FDR^{\mathrm{ord}}_\rho\le\FDR^{\mathrm{ord}}_\infty$, and thus the bounds in Theorem~\ref{thm:FDR_barrier}  remain valid when the deterministic evaluation time
\(t\) is replaced by \(\rho\).
This, in particular, holds true for $\rho$ being a fixed $t\in \T$.

\begin{remark}\label{rem:Su18}
   The explicit bound $\alpha(1+\log(1/\alpha))$ in Theorem \ref{thm:FDR_barrier}  can be viewed as a sequential-testing analogue of the FDR-linking theorem \citep[Theorem 1]{S18}, which connects the FDR of the full multiple-testing problem under arbitrary dependence among the p-values to a null-only FDR quantity. Consequently, if the null p-values satisfy the PRDN condition of \cite{S18} (formulated using the PRDS condition of \cite{BY01}), while allowing arbitrary dependence between the null and non-null p-values, Theorem 3 of \cite{S18} shows that the BH procedure satisfies the general bound $\alpha(1+\log(1/\alpha))$.  For online BH, Theorem~3.4 of \cite{FXR24} establishes the same bound under PRDN for SupFDR, which also appears in our Theorem~\ref{thm:FDR_barrier}.
      Our setting has a similar structure: the general uniform FDR bound of Theorem \ref{thm:FDR_barrier} is built from a refined version of Ville’s inequality applied to a single null block, which is then extended to geometric control of consecutive false rejections. In these results, the argument starts from a null-only error bound and accounts for all possible rejection levels, and summing over those levels produces the logarithmic factor. The two results are therefore closely related but neither implies the other.
\end{remark}

For a fixed finite horizon $T\in\N$, we can get a slightly better bound when $T$ is small. 
Write $h_T=\sum_{r=1}^T (1/r)$, which is the $T$-th harmonic number. 
Under the assumptions of Theorem~\ref{thm:FDR_barrier}, we have
\begin{align}
    \E\left[\max_{t\in \T}\mathrm{FDP}^{\mathrm{ref}}_t\right]
\le
\min\left\{
\alpha h_T,\,
\alpha\left(1+\log\frac1\alpha\right)
\right\}.\label{eq:22}
\end{align}
The same bound applies to the FDP for the ordinary sequential resetting procedure and to FDR at any
stopping time bounded by $T$.   
To see this, since $r\le R_t$ for every report before time $t$,
\[
\frac{V_t^\sigma}{R_t\vee1}
=
\sum_{r=1}^{R_t}
\frac{\id_{A_r^\sigma(T)}}{R_t\vee1}
\le
\sum_{r=1}^{T}\frac{\id_{A_r^\sigma(T)}}{r},
\qquad t\in \T.
\]
By taking the maximum over $t$ and then expectations and using
Lemma~\ref{lem:ville} in the appendices, we get the upper bound $\alpha h_T$ in \eqref{eq:22}.
The other upper bound of \eqref{eq:22} follows from Theorem~\ref{thm:FDR_barrier}.
The constant $h_T$ is precisely the multiplicative correction factor in   \cite{BY01}, which corrects the Simes p-value \citep{S86} and the BH procedure under arbitrary dependence.  
Therefore, our procedure pays a multiplicative price at the minimum  of $h_T$ and $1-\log \alpha$, 
which equals the price that BH pays under Su's condition, when $T$ is the total number of hypotheses  (because the factor $h_T$ is valid in case of arbitrary dependence).

In Figure~\ref{fig:benchmark-mc}, we compare the benchmark quantity $1-(1-\alpha)\E[\inf_{k\in\N}\frac{k}{k+S_k}]$ with the fully explicit upper bound $\alpha(1+\log{\frac{1}{\alpha}})$ numerically. Figure~\ref{fig:benchmark-mc} suggests that the two quantities are asymptotically equivalent as $\alpha\downarrow 0$, a fact justified by Lemma \ref{lemma:E bound} below.

\begin{figure}[t]
\centering
\includegraphics[width=0.72\textwidth]{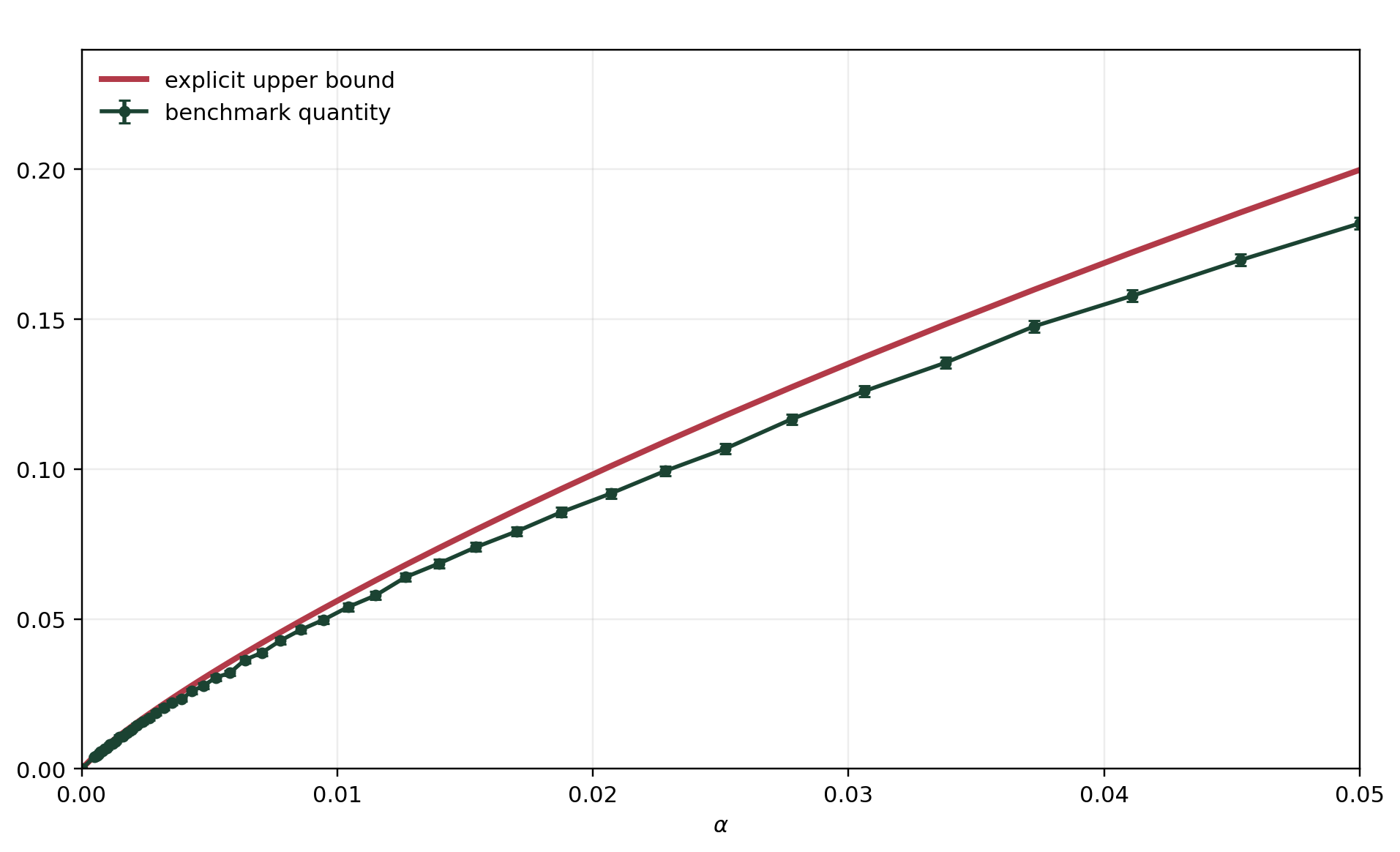}
\caption{Comparison of a Monte Carlo proxy of the benchmark quantity $1-(1-\alpha)\E[\inf_{k\in\N}\frac{k}{k+S_k}]$ (green curve) with the explicit upper bound ${\alpha}(1+\log\frac{1}{\alpha})$ (red curve) for $\alpha\in(0,0.05]$.}

\label{fig:benchmark-mc}
\end{figure}

The first upper bound on    both $\mathrm{FDR}^{\mathrm{ord}}_\infty$
 and $\mathrm{FDR}^{\mathrm{ref}}_\infty$ in  \eqref{eq:local_FDR_bound}, given by
\begin{align}
 1-(1-\alpha)\E\bigg[\inf_{k\in\N}\frac{k}{k+S_k}\bigg],\label{eq:UB2}
\end{align}
   is generally tight for both procedures, as the following example shows.

\begin{example}\label{ex:infty}
    Fix $\varepsilon\in(0,1)$ and set $q_\varepsilon
    =
    \frac{1-\varepsilon}{1/\alpha-\varepsilon}
    =
    \frac{\alpha(1-\varepsilon)}{1-\alpha\varepsilon}
    \in(0,\alpha)$. 
    For each block $r\in\N$, let $M^{[r]}_0=1$ and let $\p_{\tau_{r-1}+1}$ be the null at time $\tau_{r-1}+1$ (i.e., $\p_{\tau_{r-1}+1}\in\mathcal P_{\tau_{r-1}+1}$), under which $\p(M^{[r]}_1=1/\alpha)=1-\p(M^{[r]}_1=\varepsilon)=q_\varepsilon$.   On the event
    $\{M^{[r]}_1=1/\alpha\}$, the stopping time $\tau_r$ is triggered and we move to the next block $r+1$, starting again from the null at time $\tau_r+1$. On the event $\{M^{[r]}_1=\varepsilon\}$, we let $\p_{\tau_{r-1}+2}$ be the alternative at time $\tau_{r-1}+2$ (i.e., $\p_{\tau_{r-1}+2}\notin\mathcal P_{\tau_{r-1}+2}$) and define $M^{[r]}_2=1/\alpha$, which then triggers the stopping time $\tau_r$. The construction is repeated independently across blocks. In this case, each block has exactly one rejection, false with probability $q_\varepsilon$ and true with probability $1-q_\varepsilon$. It follows that the numbers of false rejections between two adjacent true rejections are i.i.d.\ distributed as $\mathrm{Geom}(1-q_\varepsilon)$. Hence, by the same argument as in \eqref{eq:long} in Appendix \ref{app:sec:proof_thm2},
    \[
    \mathrm{FDR}^{\mathrm{ord}}_\infty
    =
    1-(1-q_\varepsilon)\E\bigg[\inf_{k\in\N}\frac{k}{k+S_k^{(\varepsilon)}}\bigg],
    \]
    where $S_k^{(\varepsilon)}=\sum_{j=1}^k\xi_j^{(\varepsilon)}$ and $\xi_j^{(\varepsilon)}$, $j\in\N$, are i.i.d.\ geometric random variables with parameter $1-q_\varepsilon$. Since $q_\varepsilon\uparrow\alpha$ as $\varepsilon\downarrow0$, this shows that the upper bound in \eqref{eq:UB2} is tight. 
\end{example}

Theorem \ref{thm:FDR_barrier}  applies to   test supermartingales. One may wonder whether they also work for general e-processes. Our next example shows that this is not the case.

\begin{example}\label{ex:e-counterexample}
   Let $\mathcal P_1=\mathcal P_2$ be the set of positive laws with mean $\leq 1$, so $\mathfrak{P}_0$ contains laws with $\E[X_1]\leq 1$ and $\E[X_2\mid X_1]\leq 1$. Consider the e-process $(M_0,M_1,M_2)=(1,X_1/10,X_1)$ (which is an e-process since it is dominated by the test supermartingale $(1,X_1,X_1)$). If we choose $X_1=100,\,X_2=1$, then $H_1$ is false and $H_2$ is true. With a threshold $\gamma=10$, our resetting procedure rejects at both times so the FDR is $1/2$, which exceeds the upper bound in \eqref{eq:local_FDR_bound}. In general, for any $\gamma>1$ and $T\in\N$, letting $M_t=\gamma^{t-T}X_1,\,1\leq t\leq T$ where $X_1=\gamma^T,\, X_2=\cdots=X_T=1$ gives an e-process with FDR equal to  $(T-1)/T$.
\end{example}

\subsection{A confidence bound on the number of alternatives and marginal FDR}
As a byproduct of the arguments used to prove Theorem \ref{thm:FDR_barrier},  for a fixed $T$, we provide a connection between the number of rejections $R_T$ and the cardinality $N_a$ of the set of alternatives $A_T=\{u\in\T_{\ge 1}:\p_u\notin\mathcal P_u\}$, as well as a marginal FDR bound.\footnote{Here, we use the adjusted version  $\E[V_T]/(1+\E[R_T])$ for  the marginal FDR; see \cite[Section 2.3]{BH95}.} 

\begin{theorem}\label{cor:rejection-count-and-mfdr}
Let $\alpha\in(0,1)$. For the sequential resetting procedure with rejection threshold $1/\alpha$, the following statements hold.
\begin{enumerate}[(i)]
\item  
    Assume that $A_T$ is deterministic and write $N_a=|A_T|$.
    For every integer $r\ge N_a$,
    \begin{equation}\label{eq:rejection-count-dom} 
        \mathbb{P}(R_T\ge r)
        \le
        \sum_{j=0}^{N_a}
        \binom{r}{j}(1-\alpha)^j\alpha^{r-j}.
    \end{equation}
    In particular,
    \begin{equation}\label{eq:expectation-dom}
        \mathbb{E}[R_T]
        \le \frac{N_a+\alpha}{1-\alpha}.
    \end{equation}
Moreover, if $N_a<\infty$, then for $\beta\in(0,1)$,
\begin{equation}\label{eq:confidence-dom}
    \p\left(
        R_T\le
        \frac{
            N_a
            +\sqrt{2N_a\alpha\log(1/\beta)}
            +\frac{1+\alpha}{2}\log(1/\beta)
        }{1-\alpha}
    \right)
    \ge 1-\beta.
\end{equation}
For $N_a>0$, this upper bound equals
$\delta_T N_a/(1-\alpha)$, where
\[
    \delta_T
    =
    1+\sqrt{\frac{2\alpha\log(1/\beta)}{N_a}}
    +\frac{(1+\alpha)\log(1/\beta)}{2N_a}
    =
    1+O(N_a^{-1/2})\to 1
    \qquad\text{as }N_a\to\infty.
\]
 
\item For $T<\infty$,
\begin{equation}\label{eq:mfdr-one-bound}
    \frac{\E[V_T]}{1+\E[R_T]}
    \le
    \frac{\E[V_T^\sigma]}{1+\E[R_T]}
    \le\alpha.
\end{equation}
\end{enumerate}
\end{theorem}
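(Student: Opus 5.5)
Our plan is to reduce everything to a stochastic domination of the rejection sequence by Bernoulli trials. We index rejections $r=1,2,\dots$ and let $A^\sigma_r$ be the event that the $r$-th refined rejection occurs and is false, as in the sketch after Theorem~\ref{thm:FDR_barrier}. The key input is the refined Ville inequality (Lemma~\ref{lem:ville}): conditionally on $\mathcal F_{\tau_{r-1}}$, the probability that block $r$ ends in a false refined rejection is at most $\alpha$. Here $\tau_{r-1}$ is a stopping time, so we may condition on it, and the block-normalized process $M^{[r]}$ is a test supermartingale started at $1$ on the all-null stretch. It follows that, conditionally on the past at each reset, the $r$-th rejection either fails to occur, or is false with probability at most $\alpha$, or is true. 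The ordinary false events $\{\nu_r>\tau_r\}$ are contained in the refined ones.

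For part (i), we assume $A_T$ is deterministic with $|A_T|=N_a$. Each true rejection, ordinary or refined, consumes at least one distinct alternative time, because true ordinary blocks are disjoint and each contains an alternative. Hence the number of true rejections is at most $N_a$, and on $\{R_T\ge r\}$ the first $r$ rejections contain at most $N_a$ true ones. We then build by induction a coupling with i.i.d.\ Bernoulli$(1-\alpha)$ variables $B_1,B_2,\dots$, where $B_j=1$ stands for a ``success''. Conditionally on $\mathcal F_{\tau_{r-1}}$, the event that rejection $r$ occurs and is false has probability at most $\alpha$. Using auxiliary uniforms, we realize it as a subset of $\{B_r=0\}$. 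Then $\{R_T\ge r\}\subseteq\{\#\{j\le r: B_j=1\}\le N_a\}$: the $r$ rejections include at most $N_a$ true ones, and every false one maps into a failure, so the successes among $B_1,\dots,B_r$ number at most the true rejections. This gives \eqref{eq:rejection-count-dom} as a binomial tail.

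We take the false events to be the ordinary ones here, since the ordinary true count is what is bounded by $N_a$. We also need conditional probability at most $\alpha$ for ordinary false blocks, which is the classical Ville bound on an all-null block. The same counting idea shows that $R_T$ is stochastically dominated by $W$, the trial index of the $(N_a+1)$-th success, minus one. Then $\E[R_T]\le \E[W]-1=(N_a+1)/(1-\alpha)-1=(N_a+\alpha)/(1-\alpha)$, which is \eqref{eq:expectation-dom}.

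For \eqref{eq:confidence-dom}, we use $\P(R_T\ge r)\le\P(\mathrm{Bin}(r,1-\alpha)\le N_a)$. A Chernoff/Bernstein lower-tail bound gives $\P(\mathrm{Bin}(r,1-\alpha)\le r(1-\alpha)-x)\le\exp(-x^2/(2r\alpha(1-\alpha)))$, or its Bernstein variant. We solve $r(1-\alpha)-N_a=\sqrt{2r\alpha(1-\alpha)\log(1/\beta)}$ for $r$, which is a quadratic in $\sqrt r$. Simplifying the root yields the stated threshold $(N_a+\sqrt{2N_a\alpha L}+\tfrac{1+\alpha}{2}L)/(1-\alpha)$ with $L=\log(1/\beta)$. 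The extra $(1+\alpha)/2$ term absorbs the integer rounding and the cross term. This algebra is the step I expect to be fiddliest: matching the exact constants may require a Bernstein inequality for the geometric waiting time rather than Hoeffding. If the constants do not match, I would bound the negative-binomial sum $\sum_{j\le N_a+1}\xi_j$ directly via its moment generating function.

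For part (ii), we write $V^\sigma_T=\sum_{r\ge1}\id_{A^\sigma_r}$ with $\tau_r\le T$. By Lemma~\ref{lem:ville} applied at $\tau_{r-1}$, $\P(A^\sigma_r)\le\alpha\,\P(\tau_{r-1}<T)$, because a rejection in block $r$ requires $\tau_{r-1}<T$. Summing over $r$ gives $\E[V^\sigma_T]\le\alpha\sum_{r\ge1}\P(R_T\ge r-1)=\alpha(1+\E[R_T])$. This uses that $\{\tau_{r-1}<T\}\subseteq\{R_T\ge r-1\}$, and the sum is finite since $R_T\le T$. The ordinary inequality $V_T\le V^\sigma_T$ follows pathwise, as already explained after Theorem~\ref{thm:localized-main}. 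Part (ii) is thus routine, and the main obstacle is the careful coupling in (i), in particular handling blocks that never end and events given only conditionally on non-stopping refinement data.
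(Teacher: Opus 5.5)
Your proofs of \eqref{eq:rejection-count-dom}, \eqref{eq:expectation-dom} and part (ii) are sound. For the binomial tail you take a different route from the paper. You use a per-rejection Bernoulli coupling: each false rejection forces $B_r=0$, built from the one-step bound $\P(A_r\mid\mathcal F_{\tau_{r-1}})\le\alpha$ and auxiliary uniforms. This even gives $R_T\le W-1$ pathwise on the enlarged space. The paper instead writes $R_T\le N_a+\sum_{j=1}^{N_a+1}N^\sigma_j(T)$ and applies the multidimensional geometric domination of Lemma~\ref{lem:mono_com}, already built for Theorem~\ref{thm:FDR_barrier}. Both work, with ordinary or refined false events alike, and the coupling is routine rather than the main obstacle. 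Part (ii) coincides with the paper's argument.

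The genuine gap is \eqref{eq:confidence-dom}. The inequality you start from, $\P(\mathrm{Bin}(r,1-\alpha)\le r(1-\alpha)-x)\le\exp(-x^2/(2r\alpha(1-\alpha)))$, is false. This lower tail is the upper tail of $\mathrm{Bin}(r,\alpha)$, which is Poisson-like, not sub-Gaussian with variance proxy $r\alpha(1-\alpha)$. For $\alpha=0.1$, $r=1$, $x=1-\alpha$, the left side is $0.1$ and the right side is $e^{-4.5}\approx0.011$. Carried through, your quadratic gives at $N_a=0$ the threshold $2\alpha\log(1/\beta)/(1-\alpha)$. For $\alpha=0.1$, $\beta=0.05$ this is below $1$, so it would assert $\P(R_T\ge1)\le0.05$. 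Yet under the global null $\P(R_T\ge1)$ can be arbitrarily close to $0.1$: take $E_1\in\{1/\alpha,\varepsilon\}$ with mean one and $E_t=1$ for $t\ge2$. So $\frac{1+\alpha}{2}\log(1/\beta)$ does not absorb rounding or a cross term; it is the scale parameter of a sub-gamma tail. Bernstein does not reproduce it either. At $N_a=0$ it gives the threshold $\frac{2(1+2\alpha)}{3}\cdot\frac{\log(1/\beta)}{1-\alpha}$, which is larger than the stated $\frac{1+\alpha}{2}\cdot\frac{\log(1/\beta)}{1-\alpha}$.

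The missing ingredient is a specific exponential bound, in three steps:
\begin{enumerate}[(a)]
\item Apply the exact Chernoff bound to the binomial lower tail, $\P(\mathrm{Bin}(r,1-\alpha)\le N_a)\le e^{\theta N_a}(\alpha+(1-\alpha)e^{-\theta})^r$.
\item Substitute $e^{-s}=\alpha+(1-\alpha)e^{-\theta}$ and $s=(1-\alpha)t$. This gives $\P(R_T\ge r)\le\exp\bigl(-t((1-\alpha)r-N_a)+N_af_\alpha(t)\bigr)$, where $f_\alpha(t)=\log\frac{1-\alpha}{1-\alpha e^{(1-\alpha)t}}-\alpha t$ is the centred log-MGF of $(1-\alpha)\xi_1$.
\item Use Lemma~\ref{lem:geometric-mgf-bound}, $f_\alpha(t)\le\alpha t^2/(2-(1+\alpha)t)$ on $[0,2/(1+\alpha))$. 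This is a sub-gamma bound with variance factor $N_a\alpha$ and scale $(1+\alpha)/2$. Then choose $t=\sqrt{2x}/(\sqrt{N_a\alpha}+(1+\alpha)\sqrt{2x}/2)$ and $x=\log(1/\beta)$.
\end{enumerate}
Your fallback of bounding the MGF of $S_{N_a+1}$ directly is close but not sufficient as stated. With $N_a+1$ geometric factors, the exponent carries the extra term $\alpha t+f_\alpha(t)\ge0$. You then only get the weaker threshold $(N_a+\alpha+\sqrt{2(N_a+1)\alpha x}+\frac{1+\alpha}{2}x)/(1-\alpha)$. The binomial form, with only $N_a$ factors, is what delivers the stated constants.
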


The proof of Theorem \ref{cor:rejection-count-and-mfdr} is put in Appendix \ref{app:sec:proof_thm2}.
Theorem \ref{cor:rejection-count-and-mfdr} gives a clear relation between the number of non-null e-values and 
the rejected intervals.  
With a suitable adjustment  factor that is asymptotically $1/(1-\alpha)$, the number
of rejected intervals $R_T$ gives a lower confidence bound on
the number of non-null e-values. 
For instance, if $\alpha=0.1$ and the deterministic set of
non-null times up to $T$ has cardinality $10$, then $
    \p(R_T>13)
    < 0.05
 $ by \eqref{eq:rejection-count-dom}.
Thus, with probability at least $95\%$, the procedure rejects
at most $13$ intervals, regardless of the distributions of the e-values.

 \subsection{Proof sketch for Theorem \ref{thm:FDR_barrier}}

The full proof of Theorem \ref{thm:FDR_barrier} is in Appendix \ref{app:sec:proof_thm2}. We now explain the main idea behind the proof. It suffices to consider $\FDR^{\rm ref}_\infty$.

Unlike
Section~\ref{sec:FDR indep}, no independence assumption is imposed here.  The
proof therefore cannot decompose the path into independent clean intervals.
Instead, we use a more robust argument: we control the number of consecutive
refined false rejections between two adjacent refined true rejections. The proof proceeds in four steps.

\begin{enumerate}
\item \textbf{Reducing the FDR problem to counting false rejections between true
ones.}
Let
\[
        \kappa_j^\sigma=\min\{s\in\T:K_s^\sigma=j\},
        \qquad j\in\N_0,
\]
be the time of the \(j\)-th refined true rejection.  Between
\(\kappa_{j-1}^\sigma\) and \(\kappa_j^\sigma\), we define
\(N_j^\sigma(t)\) as the number of refined false rejections accumulated up
to time \(t\).  Thus, the whole sequence
$N_1^\sigma(t),N_2^\sigma(t),\ldots$ records the false-rejection clusters separated by refined true rejections.
The FDR can then be bounded by a function of this sequence.

\item \textbf{Controlling a single false-rejection cluster by a geometric tail.}
The key probabilistic estimate is that each \(N_j^\sigma(t)\) is stochastically dominated by
a $(1-\alpha)$-geometric random variable:
\begin{align}
    \P\left(
        N_j^\sigma(t)\ge n
        \,\middle|\,
        \mathcal F_{\kappa_{j-1}^\sigma\wedge t}
        \right)
        \le
        \alpha^n,
        \qquad n\in\N_0,\label{eq:step2}
\end{align}
which is proved by Lemma~\ref{lem:ville_n}. This is obtained by applying the following result recursively for $n$:
\begin{align}
    \P\left(A_r^\sigma(t)\,\middle|\,\mathcal F_{\tau_{r-1}}\right)
        \le \alpha,~~A_r^\sigma(t)=\{\tau_r\le t,\ \nu_r^\sigma>\tau_r\},~~r\in\N,\label{eq:Arto}
\end{align}
which can be regarded as a refined version of Ville's inequality. Here, $A_r^\sigma(t)$ stands for the event that the \(r\)-th rejection has happened by time \(t\), and that rejection is false under the refined law. The proof of \eqref{eq:Arto} (see Lemma \ref{lem:ville}) is nontrivial because the block minimum is not obtained at a stopping time.

\item \textbf{Joint stochastic dominance of false-rejection clusters by geometric benchmarks.}
A marginal geometric tail bound is not enough by itself, because the
FDR depends on the whole vector of false-rejection counts.  Lemma
\ref{lem:mono_com} upgrades \eqref{eq:step2} to a multidimensional stochastic
domination: for every bounded increasing function
\(\Phi:\N_0^m\to\R\),
\begin{align}
     \E\left[
        \Phi\bigl(N_1^\sigma(t),\ldots,N_m^\sigma(t)\bigr)
        \right]
        \le
        \E\left[
        \Phi(\xi_1,\ldots,\xi_m)
        \right],\label{eq:dom}
\end{align}
where \(\xi_1,\ldots,\xi_m\) are i.i.d.~geometric random variables satisfying
$\P(\xi_j=n)=\alpha^n(1-\alpha)$, $n\in\N_0$.
The proof is an induction on \(m\), where at each induction step, Lemma~\ref{lem:ville_n}
controls the next false-rejection cluster conditionally on the past.

\item \textbf{Evaluating the i.i.d.~geometric benchmark.}
It remains to evaluate a bound on an i.i.d.~geometric benchmark.
The benchmark FDP has a simple interpretation. If \(\xi_1>0\), then false rejections occur before the first refined true rejection, and the worst benchmark FDP is equal to \(1\). If \(\xi_1=0\), then the first rejection is true under the refined law. After \(k\) refined true rejections, the benchmark has \(k\) true rejections and $S_k=\sum_{j=1}^k \xi_j$ false rejections. Hence, the worst benchmark FDP is given by
$1-\inf_{k\in\N}(k/(k+S_k))$.
Taking into account the initial event \(\{\xi_1=0\}\), whose probability is \(1-\alpha\), gives the benchmark bound
\[ 1-(1-\alpha) \E\left[ \inf_{k\in\N}\frac{k}{k+S_k} \right]. \]
It remains to make this quantity explicit.
Lemma~\ref{lemma:E bound} does this by defining $Y=\sup_{k\in\N}S_k/k$. It follows from the maximal inequality in \cite{S15} that
\[
    \P(Y\ge x)
    \le
    \min\left\{1,\frac{\alpha}{(1-\alpha)x}\right\},~~x>0.
\]
This yields a bound on the benchmark
\[ 1-(1-\alpha) \E\left[ \inf_{k\in\N}\frac{k}{k+S_k} \right]=\alpha+(1-\alpha) \int_0^1
    \P\left(Y\ge \frac{u}{1-u}\right)\d u \le \alpha\left(1+\log\frac1\alpha\right). \]
Combining this i.i.d.~benchmark calculation with the multidimensional stochastic domination \eqref{eq:dom} proves Theorem~\ref{thm:FDR_barrier}.\qedhere
\end{enumerate}

\section{Simulation and data experiments}
\label{sec:num}

In this section, we show simulation studies and real data experiments to illustrate the application of sequential resetting procedures. 
We introduce several metrics that will be used to evaluate the performance of the sequential resetting procedures. For a finite time horizon $T\in\N$, let
\[
A_T=\{u\in\T_{\ge 1}:\p_u\notin\mathcal P_u\}
\]
be the set of alternative times. The time points covered by the refined
rejection intervals are
\[
\widehat A_T^\sigma
=
\bigcup_{r:\tau_r\le T}\bigl((\sigma_r,\tau_r]\cap\T_{\ge 1}\bigr).
\]
Define the metrics 
\begin{align}
    \operatorname{Power}_T
 =\frac{|A_T\cap\widehat A_T^\sigma|}{|A_T|\vee 1},\qquad
\operatorname{FDP}^{\mathrm{cov}}_T
 =\frac{|\widehat A_T^\sigma\setminus A_T|}
        {|\widehat A_T^\sigma|\vee1},
\qquad\text{and}\qquad
\operatorname{IoU}_T
 =\frac{|A_T\cap\widehat A_T^\sigma|}
        {|A_T\cup\widehat A_T^\sigma|\vee 1}.\label{eq:metrics}
\end{align}
Thus, the \textit{power} is the proportion of alternative times covered by the reported
intervals, while the (realized) \textit{coverage FDP} is the proportion of covered times that are
null.\footnote{The coverage FDP differs from the (realized) FDP defined in \eqref{eq:FDP-FDR}. The former is time-wise (or equivalently token-wise in the watermark detection setting) and the latter is rejection-wise. Roughly speaking, a rejected interval with more nulls than alternatives may contribute to a high coverage FDP but a low FDP (see e.g., Table \ref{tab:iid_power} below).} IoU, also called the Jaccard index, stands for \textit{intersection over union} and penalizes both missed alternative
times and rejected null times. We also recall the realized FDP from \eqref{eq:FDP-FDR}. Taking the supremum over $t\in\T$ therein leads to the uniform realized FDP.

\subsection{A simulation study under the independence setting}
\label{sec:sim_ind}

We first study the performance of sequential resetting procedures under independent simulated data. For a fixed time horizon $T=1500$, we generate independent normally distributed data $X_t\lawis\mathrm N(\mu_t,1)$ for $t\in\T_{\ge 1}$, where $\mu_t$ are deterministic nonnegative real numbers. Let $E_t=\exp(X_t-1/2)$ and construct the following process with $M_0=1$ and
\begin{equation}\label{eq:iid_mtg}
    M_t=(1-\lambda+\lambda E_t)M_{t-1}=\prod^t_{s=1}(1-\lambda+\lambda E_s),~~t\in\T_{\ge 1},
\end{equation}
where we use a constant $\lambda=1/2$ to ensure independence among all e-values $1-\lambda+\lambda E_s$.
Let $Q^0=\mathrm N(0,1)$ be the corresponding probability measure of a standard normal distribution.
It follows that $(M_t)_{t\in\T}$ is a test supermartingale relative to $(\mathcal P^0_t)_{t\in\T_{\ge 1}}$, where $\mathcal P^0_t=\{Q^0\}$ with the corresponding null hypothesis $H_t=\{\p_t\in\mathcal P^0_t\}=\{\mu_t=0\}$. Moreover, $\E[E_t]>1$ for each $t\in\T_{\ge 1}$ such that $\mu_t>0$.

Because Assumption \ref{ass:ind} holds, Theorem \ref{thm:localized-main} yields that the FDR bound with the rejection threshold $\gamma=1/\alpha$ for $\alpha\in(0,1)$ is given by $\min\{1,\alpha/(1-\alpha)\}$. Consequently, we can choose the rejection threshold $\gamma=11$ to guarantee an FDR bound of $0.1$. We conduct the following two experiments. Each experiment is repeated 1000 times.
\begin{enumerate}
    \item \textbf{Alternating null and alternative periods:} Take
$$\mu_t=\id_{\{100(3k-1)+1\le t\le 300k,~k\in\N\}},~~t\in\T_{\ge 1}.$$
Consequently, the normally distributed data have mean $0$ for $200$ points (i.e., null hypotheses hold), then mean $1$ for $100$ points (i.e., alternative hypotheses hold), and so forth. Figure \ref{fig:reg} displays the alternating pattern.

\begin{figure}
    \centering
    \begin{center}\includegraphics[width=0.9\linewidth]{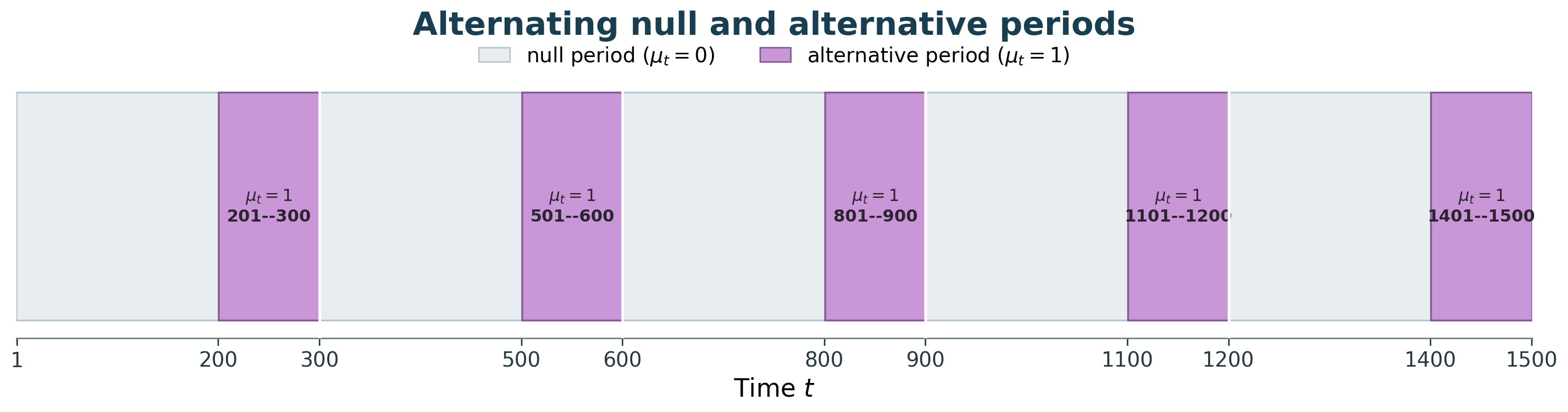}
    \end{center}
    \caption{The alternating pattern between null and alternative periods.}
    \label{fig:reg}
\end{figure}

Figure \ref{fig:iid_recurring} shows a representative log-transformed resetting process $\widetilde M$ corresponding to the test supermartingale \eqref{eq:iid_mtg}. The result shows that all the alternative periods ($\mu_t=1$, marked as green) trigger refined rejections of the resetting process (marked as red) --- although there is delay of the rejections after the beginning of the alternative periods, the first point of the rejection block still captures the starting point of the alternative periods, where the resetting process is near its block minimum in the block. Most of the data in the null periods ($\mu_t=0$) are dropped by the refinement rule.

\begin{figure}[t]
\centering
\includegraphics[width=0.96\textwidth]{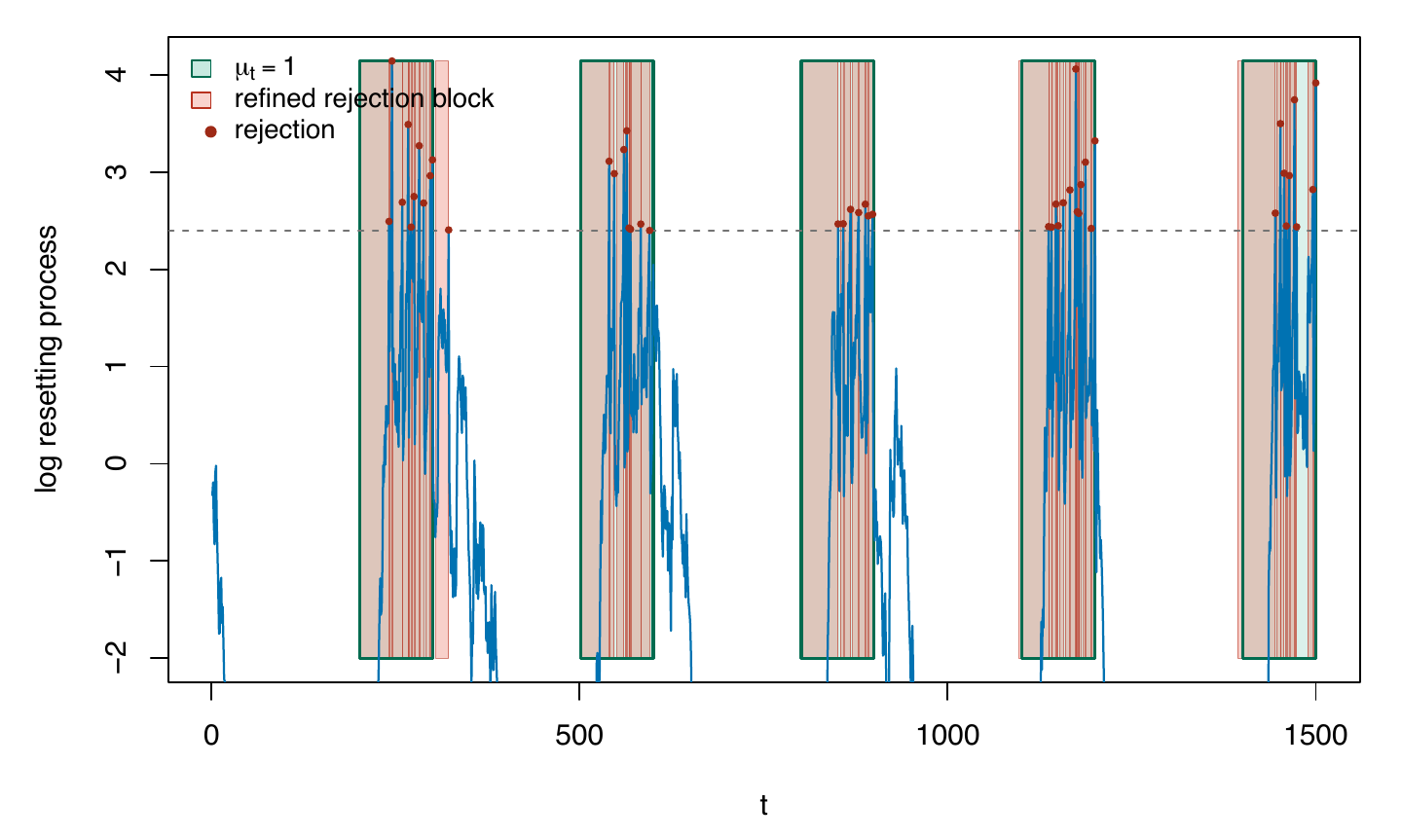}
\caption{Representative log-transformed resetting test supermartingale under independent normally distributed data with alternating null and alternative periods.}
\label{fig:iid_recurring}
\end{figure}

\item \textbf{Detecting single-point shocks:} Instead of studying the alternating periods, we generate independently normally distributed data with several significant single-point  shocks. Namely, take
$$\mu_t=50\id_{\{t=300k,~k\in\N\}},~~t\in\T_{\ge 1}.$$
The data are thus standard normal but change to $\mathrm{N}(50,1)$ every $300$ points.

Figure \ref{fig:iid_shocks} plots the corresponding log-transformed resetting process $\widetilde M$. Different from the previous case, we obtain narrow refined rejection blocks (marked in red) near each point corresponding to the shock with $\mu_t=50$ (marked in green). This shows that a sequential resetting procedure can also be used to detect instantaneous significant shocks.

\begin{figure}[t]
\centering
\includegraphics[width=0.96\textwidth]{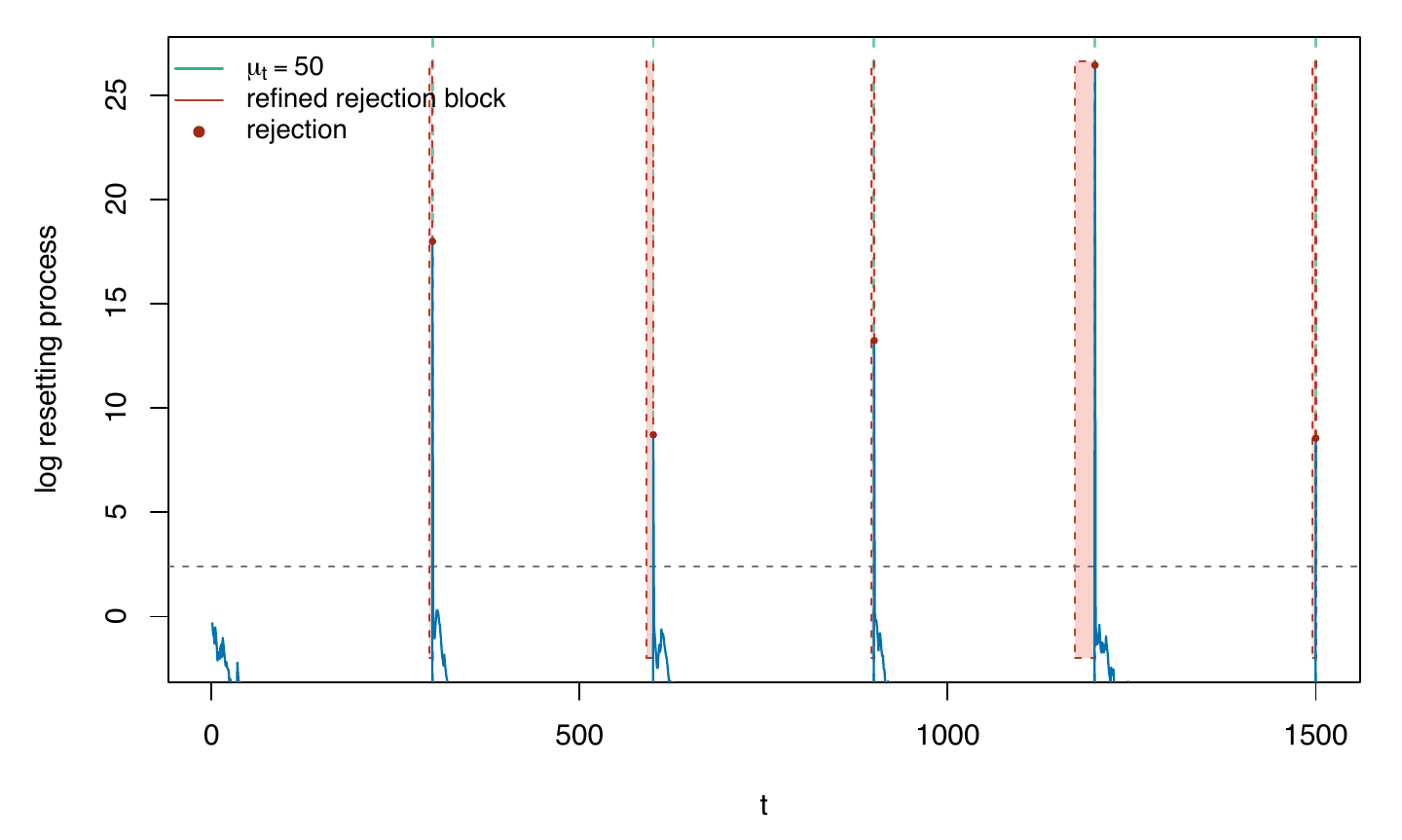}
\caption{Representative log-transformed resetting test supermartingale under independent normally distributed data with single-point shocks.}
\label{fig:iid_shocks}
\end{figure}
\end{enumerate}

Table \ref{tab:iid_power} presents the detailed average performance metrics of the two experiments above over the 1000 iterations. Both cases yield high power, showing that most of the alternative points are covered by the refined rejection blocks. The case of single-point shocks exhibits high coverage FDP and low IoU because the alternative time points are too sparse, which is acceptable given that the refined rejection blocks are narrow in Figure \ref{fig:iid_shocks}.
\begin{table}[t]
    \centering
\caption{Average performance metrics, over 1000 iterations, of the simulation studies of independent normal data with alternating periods and single-point shocks.}
\label{tab:iid_power}
    \begin{tabular}{c c c c c}
        \toprule
        Case & Power & Coverage FDP & IoU & Realized FDP\\\midrule
        Alternating periods & 0.9035 & 0.1128 & 0.8085 & 0.0083\\
        Single-point shocks & 0.9922 & 0.8377 & 0.1617 & 0.0483\\
        \bottomrule
    \end{tabular}
\end{table}

\subsection{An application to LLM watermark detection}
\label{sec:llm-experiment}

In this section, we discuss an application of our sequential resetting procedure to LLM watermark detection with e-processes \citep{SWZ26}. Complementary results are also discussed in Appendix \ref{sec:add_num}. 

\subsubsection{A mathematical formulation of LLM watermark}

A large language model (LLM) receives a prompt, represented by a sequence of tokens, and generates a continuation one token at a time.  At each token position, the prompt and the
preceding tokens together determine a probability distribution over the set of  vocabulary $\mathcal W$.\footnote{In this section, each token takes up one position $t\in\T_{\ge 1}$.} Let $W_t$ be the token generated at
position $t$.  Given the prompt and all previously generated tokens, the
model produces logits $z_t(w),\,w\in\mathcal W$.  At temperature $\vartheta>0$, its probability
of sampling token $w$ is given by
\[
 \pi_t(w)
 =\frac{\exp\left({z_t(w)}/{\vartheta}\right)}
 {\sum_{v\in\mathcal W}\exp\left({z_t(v)}/{\vartheta}\right)},
 \qquad w\in\mathcal W.
\]

An LLM watermark uses a secret pseudorandom key to change the coupling between $W_t$ and the key while preserving the conditional law $W_t\lawis\pi_t$. A detector that knows the key can test for this dependence. 
In our application, we use the Gumbel--max watermark and the tournament watermark, but our method is not tied
to these particular settings. Following \cite{SWZ26}, it can be applied to any
watermarking scheme that produces a token-level score with a known distribution
when no watermark is present. Such scores can be converted into e-values and
used directly in our resetting and refinement procedures.

\paragraph{Gumbel--max watermark.}
The Gumbel--max watermark construction was proposed by \cite{Aaronson23} and analyzed statistically by \cite{LRWLS25}. At every position, the secret key samples a random number
$U_t(w)$, uniformly distributed on $[0,1]$, for each token $w\in\mathcal W$. We assume that $\{U_t(w):t\in\T_{\ge 1},w\in\mathcal W\}$ are mutually independent and independent of the randomness used for standard sampling.
Standard token generation samples directly from $\pi_t$ and does not depend on $\{U_t(w)\}_{w\in\mathcal W}$.
On the other hand, watermarked token generation uses
\begin{equation}
 W_t=\argmax_{w\in\mathcal W}\frac{\log U_t(w)}{\pi_t(w)}.
 \label{eq:llm-gumbel}
\end{equation}
The Gumbel--max identity \citep[see, e.g.,][]{LRWLS25} says that the token
chosen by \eqref{eq:llm-gumbel} has the same distribution $\pi_t$, i.e., standard
and watermarked generation have the same next-token distribution.  Their only
difference is that the watermarked rule uses the current key to decide which
token is selected, but the standard rule does not.

By definition, at position $t$, the null hypothesis $H_t$ states that the current token was
selected without using the current key, i.e., under the standard sampling rule; the Gumbel--max watermark
sampling is the alternative hypothesis.  In our setting, under $H_t$, the statistic $Y_t=U_t(W_t)$ is exactly uniform on
$[0,1]$ conditional on the past.  It follows that $L_t=-\log(1-Y_t)$ 
is an e-value, since
$\E[L_t\mid\mathcal F_{t-1}]=1$. Under the Gumbel--max watermark with a nondegenerate token distribution $\pi_t$, \(Y_t\mid \mathcal F_{t-1}\) strictly stochastically dominates
a \(\mathrm{Unif}(0,1)\) random variable \citep{SWZ26}.
Consequently, $\E[L_t\mid\mathcal F_{t-1}]>1$ under a nondegenerate Gumbel--max alternative, whereas it equals one under the null; thus, $L_t$ is a nontrivial e-value.
To conclude, if we denote by $\p_t$ the true conditional law of $Y_t$ given $\mathcal F_{t-1}$, then $H_t=\{\p_t\in\mathcal P_t\}$, where $\mathcal P_t=\{\mathrm{Unif}(0,1)\}$.

\paragraph{Tournament watermark.}
We also consider the 30-layer Tournament watermark introduced by 
\cite{DathathriEtAl24}. 
At position $t\in\T_{\ge 1}$ and layer $\ell\in\{1,\dots,30\}$, the secret key assigns every token $w$ a
fresh independent bit
$g_{t,\ell}(w)\lawis\operatorname{Ber}(1/2)$.  Write
$P_t^{(0)}=\pi_t$.  Inductively, at layer $\ell$, two tokens are sampled independently
from $P_t^{(\ell-1)}$; the token with the larger bit is retained, with ties
broken uniformly, and $P_t^{(\ell)}$ denotes the distribution of the retained
token (conditional on the bits $(g_{t,j}(w))_{j=1,\ldots,\ell;\,w\in\mathcal W}$).  The watermarked token is sampled from $P_t^{(30)}$, whose marginal distribution remains $\pi_t$.

For the observed token, let
$S_t=\sum_{\ell=1}^{30}g_{t,\ell}(W_t)$.  At a null position, $W_t$ is
selected without using its current bit tables, and hence
$S_t\mid\mathcal F_{t-1}\lawis\operatorname{Binom}(30,1/2)$.  Let $F_0$ and
$p_0$ denote the CDF and PMF of this distribution, and let
$V_t\lawis\operatorname{Unif}(0,1)$ be independent of anything else.  We define
\begin{equation*}
 Y_t=F_0(S_t-1)+V_t p_0(S_t)
\end{equation*}
and set $L_t=-\log(1-Y_t)$. The following proposition establishes
its validity under the null and its nontriviality under the
Tournament watermark. We use $\succeq_{\mathrm{st}}$ to denote first-order stochastic dominance: $Y\succeq_{\mathrm{st}}X$ if $\mathbb P(X>x)\le\mathbb P(Y>x)$ for every $x\in\mathbb R$.

\begin{proposition}
\label{prop:tournament-pivot}
At a null position,
\[
Y_t\mid\mathcal F_{t-1}\lawis\operatorname{Unif}(0,1),
\qquad
\E[L_t\mid\mathcal F_{t-1}]=1.
\]
Moreover, under the Tournament watermark,
\[
S_t\mid\mathcal F_{t-1}
\succeq_{\mathrm{st}}\operatorname{Binom}(30,1/2),
\qquad
Y_t\mid\mathcal F_{t-1}
\succeq_{\mathrm{st}}\operatorname{Unif}(0,1).
\]
If $\pi_t$ is not concentrated on a single token, then
$\E[L_t\mid\mathcal F_{t-1}]>1$.
\end{proposition}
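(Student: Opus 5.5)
The plan has three parts. First I show that, at a null position, $S_t$ is exactly $\operatorname{Binom}(30,1/2)$ given the past; randomizing then makes $Y_t$ exactly uniform. Second, I show that the watermark stochastically increases $S_t$ one layer at a time. Third, I show that the increase is strict when $\pi_t$ is nondegenerate.

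\textbf{Null case.} Under the null, $W_t$ is chosen without using the current bit tables $\{g_{t,\ell}(w)\}$. These bits are i.i.d.\ $\operatorname{Ber}(1/2)$ and independent of $\mathcal F_{t-1}$ and of $W_t$. Conditional on $\mathcal F_{t-1}$ and $W_t$, the bits $g_{t,1}(W_t),\dots,g_{t,30}(W_t)$ are therefore i.i.d.\ fair coins, so $S_t\mid\mathcal F_{t-1}\sim\operatorname{Binom}(30,1/2)$. The randomized probability integral transform then applies. For $u\in[0,1]$, write $u=F_0(k-1)+v\,p_0(k)$. The event $\{Y_t\le u\}$ is $\{S_t\le k-1\}\cup\{S_t=k,\ V_t\le v\}$, which has probability $F_0(k-1)+v\,p_0(k)=u$. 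Hence $Y_t$ is conditionally uniform. Since $1-Y_t$ is also uniform, $L_t=-\log(1-Y_t)$ is $\operatorname{Exp}(1)$ and has conditional mean $1$.

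\textbf{Dominance under the watermark.} I will show, by induction on layers, that conditional on $\mathcal F_{t-1}$ the bits $g_{t,1}(W_t),\dots,g_{t,30}(W_t)$ are independent with $\P(g_{t,\ell}(W_t)=1)\ge 1/2$. The cleanest way is a direct computation at a single layer.

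Fix layer $\ell$ and write $P=P^{(\ell-1)}_t$. The new bits $g_{t,\ell}(\cdot)$ are independent of $P$, of the earlier layers, and of the later layers' inputs. Draw $A,B$ i.i.d.\ from $P$ and retain the winner $C$. If $A=B$, the winner's bit is $g(A)$, which is fair. If $A\ne B$, the bits $g(A)$ and $g(B)$ are independent fair coins. The winner's bit equals $\max(g(A),g(B))$, which is $1$ with probability $3/4$ (ties are broken uniformly, but a tie gives the same bit either way). So $\P(g_{t,\ell}(C)=1\mid\mathcal F_{t-1},\text{earlier layers})=\tfrac12+\tfrac14 c_\ell$, where $c_\ell=1-\sum_w P(w)^2$ is the collision-free probability.

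Later layers use fresh bits. So the bit of the final token at layer $\ell$ is the bit of the layer-$\ell$ winner (the final token is a layer-$\ell$ winner's descendant, but its layer-$\ell$ bit distribution is tilted by the same argument).

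\textbf{Main obstacle.} The hardest step is making the previous parenthetical remark rigorous. My cleaner route is to condition on the bit tables of all other layers. Then, by exchangeability of the fresh bits across tokens and the monotone selection, the probability that the selected token has $g_{t,\ell}=1$ is at least $1/2$, whatever the other layers' conditioning is. This gives a sequential conditional domination, and the standard coupling lemma then yields $S_t\succeq_{\mathrm{st}}\operatorname{Binom}(30,1/2)$.

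\textbf{Transfer to $Y_t$ and $L_t$.} The map $(s,v)\mapsto F_0(s-1)+v\,p_0(s)$ is increasing in $s$ lexicographically with $v$, so $Y_t\succeq_{\mathrm{st}}\operatorname{Unif}(0,1)$. Because $y\mapsto-\log(1-y)$ is strictly increasing, $\E[L_t\mid\mathcal F_{t-1}]\ge 1$.

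\textbf{Strictness.} If $\pi_t$ is not a point mass, then at the first layer $c_1>0$. So $\P(g_{t,1}=1)>1/2$, $S_t$ strictly dominates, and $Y_t$ strictly dominates on a set of positive measure. Strict monotonicity of $-\log(1-y)$ then gives $\E[L_t\mid\mathcal F_{t-1}]>1$.
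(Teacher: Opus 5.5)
Your null case, the transfer from $S_t$ to $Y_t$ through the randomized probability integral transform, and the strictness step all match the paper. The gap is the dominance $S_t\succeq_{\mathrm{st}}\operatorname{Binom}(30,1/2)$ itself, which you flag as the main obstacle but do not resolve.

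\textbf{The independence claim is false.} Your single-layer formula $\tfrac12+\tfrac14c_\ell$ describes the layer-$\ell$ winner. It does transfer to the final token, via the martingale identity $\E[P^{(j)}_t(w)\mid\text{layers }1,\dots,j-1]=P^{(j-1)}_t(w)$, but only as a marginal. The bits $g_{t,1}(W_t),\dots,g_{t,30}(W_t)$ are not conditionally independent. Take $\pi_t$ uniform on two tokens. Then $\P(g_{t,1}(W_t)=1)=5/8$ and $\P(g_{t,2}(W_t)=1)=39/64$, but $\P(g_{t,1}(W_t)=g_{t,2}(W_t)=1)=47/128\neq 195/512$; by the martingale identity, later layers do not change these numbers.

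\textbf{The fallback does not feed the coupling lemma.} That lemma needs a filtration $(\mathcal H_\ell)$ in which $g_{t,1}(W_t),\dots,g_{t,\ell}(W_t)$ are $\mathcal H_\ell$-measurable and $\P(g_{t,\ell+1}(W_t)=1\mid\mathcal H_\ell)\ge 1/2$. The $\sigma$-fields generated by $(g_{t,j})_{j\ne\ell}$ are not nested. They also do not determine $g_{t,j}(W_t)$, because $W_t$ depends on the layer-$\ell$ bits and the sampling randomness. Per-coordinate bounds $\P(X_\ell=1\mid\mathcal H^{(\ell)})\ge1/2$ under unrelated $\sigma$-fields cannot control $\sum_\ell X_\ell$; take $X_1=\dots=X_{30}$ equal to one fair coin. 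The per-layer bound is also only asserted. Later layers select on token identities, which are correlated with the layer-$\ell$ bits, so \emph{exchangeability and monotone selection} is not yet an argument.

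\textbf{How the paper closes it.} The paper inducts over layers against an arbitrary increasing $\varphi$. Conditioning on layers $1,\dots,\ell-1$, the transition formula gives
\[
\E\bigl[P^{(\ell)}_t(w)\id_{\{g_{t,\ell}(w)=1\}}\,\big|\,\cdot\,\bigr]=\tfrac14P^{(\ell-1)}_t(w)\bigl(3-P^{(\ell-1)}_t(w)\bigr)\ge \tfrac12 P^{(\ell-1)}_t(w),
\]
and the bit-$0$ term is $\tfrac14P^{(\ell-1)}_t(w)(1+P^{(\ell-1)}_t(w))$. Hence
\[
\E\Bigl[\sum_w P^{(\ell)}_t(w)\varphi(s_\ell(w))\Bigr]\ge\E\Bigl[\sum_w P^{(\ell-1)}_t(w)\psi(s_{\ell-1}(w))\Bigr],
\]
with the increasing $\psi(s)=\tfrac12(\varphi(s)+\varphi(s+1))$. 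Iterating gives $\E[\varphi(S_t)]\ge\E[\varphi(B)]$ for $B\sim\operatorname{Binom}(30,1/2)$.

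\textbf{Repairing your coupling route.} Take $\mathcal H_\ell$ generated by $\mathcal F_{t-1}$, the bit tables of layers $1,\dots,\ell$, and the final token $W_t$. The same computation plus the martingale identity gives $\P(g_{t,\ell+1}(W_t)=1\mid\mathcal H_\ell)=\tfrac14\bigl(3-P^{(\ell)}_t(W_t)\bigr)\ge 1/2$. That is exactly the sequential bound the lemma needs.
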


It follows from Proposition \ref{prop:tournament-pivot} that $L_t=-\log(1-Y_t)$ is an e-value for $H_t$, 
and is nontrivial under a nondegenerate Tournament
watermark alternative. The rest follows in the same way as Gumbel--max.

\subsubsection{The e-processes and the role of resetting}\label{sec:refreshing exp}

{For each of the two watermark regimes, we apply the following e-process to its corresponding statistic $Y_t$.}
Following \cite[Section 4.2]{SWZ26}, we use the \emph{weight-adaptive} (WA) e-process, given by
\begin{align}
     M_t^{\mathrm{WA}}=\prod_{s=1}^t
 (1-\lambda_s+\lambda_sL_s),\label{eq:WA}
\end{align}
where $\lambda_1=0$ and
\begin{align}
    \lambda_t=\argmax_{\lambda\in[0,\frac{1}{2}]}
 \sum_{s=1}^{t-1}\log(1-\lambda+\lambda L_s),\qquad t\geq 2.\label{eq:lambda def}
\end{align}
This is the empirical growth rule known as GRAPA---growth-rate adaptive to the
particular alternative \citep{WSR24}.  Following \cite{SWZ26}, the argmax is taken over $[0,1/2]$, which prevents an early unusually small score from
destroying almost all of the accumulated wealth.  Since $\lambda_t$ is chosen before $Y_t$ is observed, each e-factor has
conditional mean one whenever the current position is null.  Consequently,
$(M_t^{\mathrm{WA}})_{t\in\T}$ is an e-process under the global null.
The one-time detectors introduced by \cite{SWZ26} stop after they first cross a threshold.  Our resetting process $\widetilde M$
defined in Section \ref{sec:e} breaks this barrier and allows us to detect several separated stretches of watermarked text.  

Note that after a reset, $\lambda_t$ is still computed from all earlier scores, i.e., the sum in \eqref{eq:lambda def} starts from $1\in\T$, not from the previous reset.  In addition, Assumption \ref{ass:ind} need not hold for the WA e-process, so we will refer to the upper bound given by Theorem \ref{thm:FDR_barrier}.  In Appendix \ref{subsec:llm-comparators} below, we introduce and compare other e-processes, including the resetting version of the ones in \cite{SWZ26} and a fixed-$\lambda$ version that satisfies  Assumption \ref{ass:ind}.

Other approaches to locating watermarked content in mixed-source text include changepoint segmentation \citep{BRK25}, adaptive online locator \citep{ZLWL25}, and multiple testing at the token level \citep{BCLLLS26}.

\subsubsection{Experimental design}

We use the public pretrained \texttt{facebook/opt-1.3b} model of
\cite{ZhangOPT22} with 20 fixed prompts and temperature given by $0.75$ or $1$.\footnote{A smaller temperature makes the next-token distribution more concentrated, and thus temperature $0.75$ serves as a stress test.}  For each prompt and each temperature, we generate 25
independently seeded paths, giving 500 paths per temperature. 
{We run the same design separately for the Gumbel--max and Tournament watermarks, using 500 newly generated paths per temperature for each watermark.}
Each path contains 600 generated tokens, all following the same null--alternative sequence:
\begin{align*}
 A_T
 &= \{51,\ldots,250\}\cup\{301,\ldots,500\},\\
 A_T^c
 &= \{1,\ldots,50\}\cup\{251,\ldots,300\}
    \cup\{501,\ldots,600\}.
\end{align*}
Here $A_T$ is simply the set of alternative (watermarked) positions; see Figure \ref{fig:llm-schedule}. 

\begin{figure}[t]
\centering
\begin{tikzpicture}[x=0.21cm,y=0.72cm,font=\small]
  \fill[black!12] (0,0) rectangle (5,1);
  \fill[purple!38] (5,0) rectangle (25,1);
  \fill[black!12] (25,0) rectangle (30,1);
  \fill[purple!38] (30,0) rectangle (50,1);
  \fill[black!12] (50,0) rectangle (60,1);
  \draw[black!65] (0,0) rectangle (60,1);
  \foreach \x in {5,25,30,50}{\draw[black!65] (\x,0)--(\x,1);}
  \node at (2.5,.5) {$\mathrm N$};
  \node at (15,.5) {$\mathrm W_1$};
  \node at (27.5,.5) {$\mathrm N$};
  \node at (40,.5) {$\mathrm W_2$};
  \node at (55,.5) {$\mathrm N$};
  \draw[->,black!70] (0,-.42)--(61,-.42)
       node[right,font=\footnotesize]{position};
  \foreach \x/\lab in {0/1,5/50,25/250,30/300,50/500,60/600}{
    \draw[black!70] (\x,-.34)--(\x,-.50);
    \node[below,font=\scriptsize] at (\x,-.50) {$\lab$};
  }
\end{tikzpicture}
\caption{The null--alternative sequence for our primary experiment.  Grey segments marked $\mathrm N$
are standard, and purple segments marked $\mathrm W$ are watermarked.}
\label{fig:llm-schedule}
\end{figure}

Every resetting process uses the same threshold $\gamma=49$.  Theorem
\ref{thm:FDR_barrier} gives
\[
 \mathrm{FDR}_\infty^{\mathrm{ref}}
 \le \frac{1+\log 49}{49}\approx 0.0998<0.1,
\]
which gives the 10\% uniform FDR calibration used in the experiment.

\subsubsection{Results on the sequential resetting procedure}

Table \ref{tab:llm-primary-results} evaluates the union of all reported intervals using power, coverage FDP,
and IoU as defined in \eqref{eq:metrics}.  It also reports the uniform realized FDP, whose expectation is the uniform FDR in
\eqref{eq:FDR_infty}.

Figure
\ref{fig:llm-primary-trajectories} shows representative paths of the resetting
e-process. {Our results show that resetting substantially improves detection relative to the classic one-shot approach of stopping after
the first alarm.  At \(\vartheta=0.75\), its power is \(0.751\)--\(0.871\),
compared with \(0.055\) for the one-shot procedure; at \(\vartheta=1\),
the corresponding values are \(0.927\)--\(0.990\), compared with \(0.031\).
The ordinary resetting procedure admits a higher power, but the refined procedure has much lower coverage FDP and higher
IoU.  Thus, the refinement rule better identifies the
watermarked positions, at the cost of omitting a small proportion of alternative positions.
All empirical uniform realized FDP values are below \(1\%\), well below the theoretical \(10\%\) bound.}

\begin{table}[t]
\centering
\small
\setlength{\tabcolsep}{3.2pt}
\caption{Reports of the resetting e-processes {under the Gumbel--max watermark}. Green bold values mark the best
direction for the key metrics.}
\label{tab:llm-primary-results}
\begin{tabular}{@{}cccccc@{}}
\toprule
Procedure & Mean reports & Power & Coverage FDP & IoU
& Uniform realized FDP\\
\midrule
\multicolumn{6}{l}{$\vartheta=0.75$}\\
First alarm + block min., $\gamma=49$
  & 0.98 & 0.055 & 0.094 & 0.054
  & \green{\textbf{0.0000}}\\
Reset + whole block, $\gamma=49$
  & 24.50 & \green{\textbf{0.871}} & 0.215 & 0.699 & 0.0002\\
Reset + block min., $\gamma=49$
  & 24.50 & 0.751 & \green{\textbf{0.034}}
  & \green{\textbf{0.731}} & 0.0002\\
\midrule
\multicolumn{6}{l}{$\vartheta=1$}\\
First alarm + block min., $\gamma=49$
  & 1.00 & 0.031 & 0.124 & 0.031
  & \green{\textbf{0.0060}}\\
Reset + whole block, $\gamma=49$
  & 58.55 & \green{\textbf{0.990}} & 0.204 & 0.789 & 0.0062\\
Reset + block min., $\gamma=49$
  & 58.55 & 0.927 & \green{\textbf{0.035}}
  & \green{\textbf{0.897}} & 0.0064\\
\bottomrule
\end{tabular}
\end{table}

\begin{figure}[t]
\centering
\includegraphics[width=\textwidth]{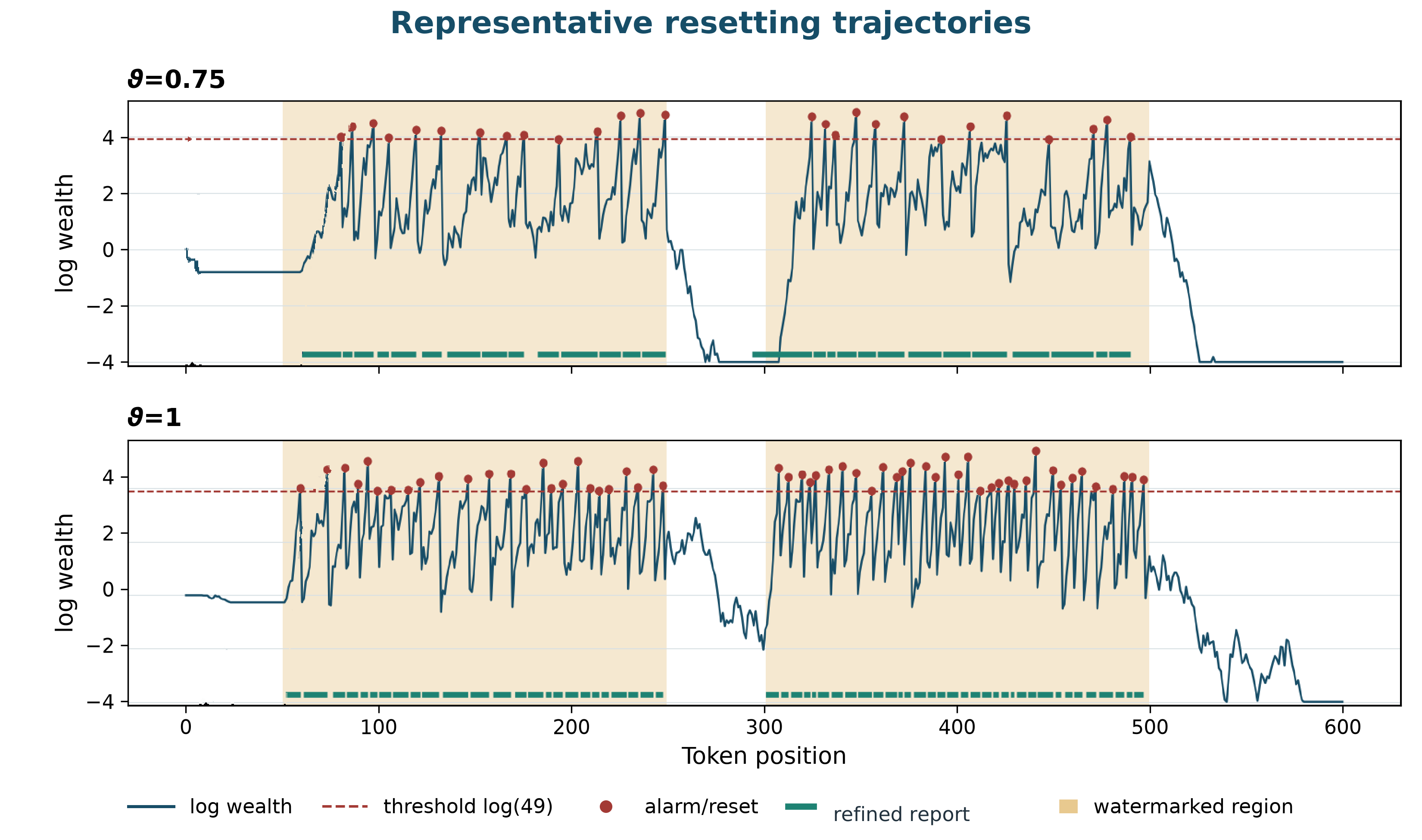}
\caption{{Representative resetting WA e-process trajectories {under the Gumbel--max watermark} at rejection threshold $\gamma=49$. Values below $-4$ are clipped from the drawing.}}
\label{fig:llm-primary-trajectories}
\end{figure}

{The Tournament experiment gives the same main conclusions.  At
$\vartheta=0.75$, the refined sequential resetting procedure has power
$0.740$, coverage FDP $0.037$, and IoU $0.718$; at $\vartheta=1$, the
corresponding values are $0.925$, $0.036$, and $0.894$.  These values are close
to the Gumbel--max results, showing that the benefits of resetting and
refinement persist under both watermark constructions.}

\begin{table}[t]
\centering
\small
\setlength{\tabcolsep}{3.2pt}
\caption{Tournament-watermark reports of the resetting e-processes. Green bold
values mark the best direction for the key metrics.}
\label{tab:llm-tournament-primary-results}
\begin{tabular}{@{}cccccc@{}}
\toprule
Procedure & Mean reports & Power & Coverage FDP & IoU
& Uniform realized FDP\\
\midrule
\multicolumn{6}{l}{$\vartheta=0.75$}\\
First alarm + block min., $\gamma=49$
  & 0.98 & 0.071 & 0.094 & 0.070
  & \green{\textbf{0.0060}}\\
Reset + whole block, $\gamma=49$
  & 19.09 & \green{\textbf{0.865}} & 0.217 & 0.693
  & 0.0063\\
Reset + block min., $\gamma=49$
  & 19.09 & 0.740 & \green{\textbf{0.037}}
  & \green{\textbf{0.718}} & 0.0071\\
\midrule
\multicolumn{6}{l}{$\vartheta=1$}\\
First alarm + block min., $\gamma=49$
  & 1.00 & 0.037 & 0.120 & 0.037
  & \green{\textbf{0.0020}}\\
Reset + whole block, $\gamma=49$
  & 49.08 & \green{\textbf{0.984}} & 0.205 & 0.785
  & \green{\textbf{0.0020}}\\
Reset + block min., $\gamma=49$
  & 49.08 & 0.925 & \green{\textbf{0.036}}
  & \green{\textbf{0.894}} & 0.0022\\
\bottomrule
\end{tabular}
\end{table}

\begin{figure}[t]
\centering
\includegraphics[width=\textwidth]{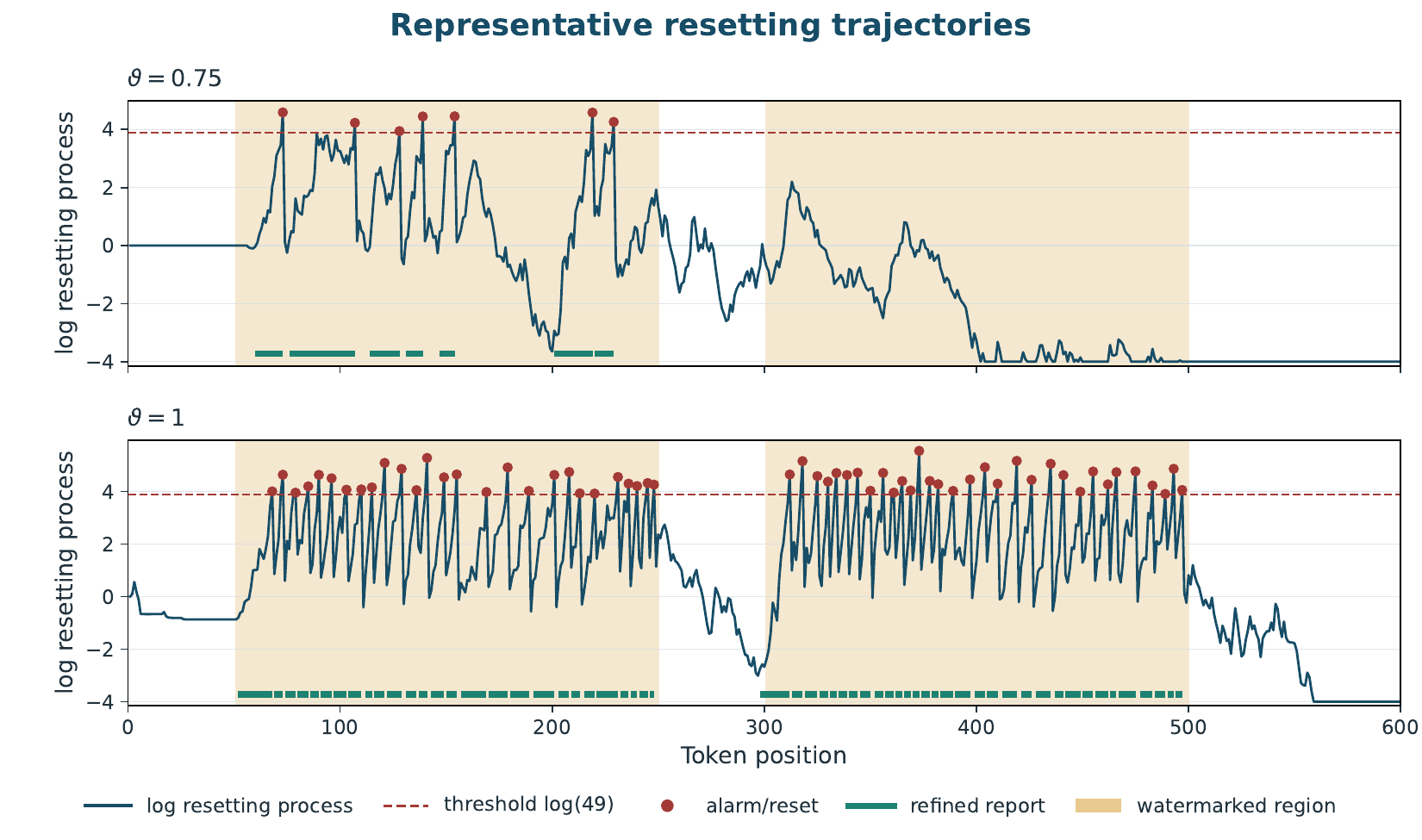}
\caption{{Representative resetting WA e-process trajectories for the Tournament experiment at rejection threshold $\gamma=49$. Values
below $-4$ are clipped from the drawing.}}
\label{fig:llm-tournament-primary-trajectories}
\end{figure}

Later in Appendix
\ref{subsec:four-interval}, we perform a four-interval stress test of the same experiment.

\subsubsection{Mixed human and watermarked text: {case studies}}
\label{sec:llm-mixed-case}

In this section, we develop a concrete example of the watermark detection application. We illustrate our sequential resetting procedure using {the four prose paragraphs in the Prologue of Bradley
Efron's 2010 book \emph{Large-Scale Inference}~\citep{Efron10}}. We partition the
excerpt at sentence boundaries into five alternating human/source-replacement
blocks, with target lengths 80, 100, 80, 100, and 80 OPT tokens. We retain the
three human blocks and replace each of the other two blocks by newly generated
watermarked OPT-1.3B sentences with temperature $\vartheta=1$. Each generated
block ends at the first sentence boundary reached after at least 90 tokens,
subject to a maximum of 140 tokens. The realized Gumbel--max mixed document has
$T=447$ tokens and block lengths 86, 98, 77, 123, and 63; the Tournament mixed
document has $T=425$ tokens and block lengths 86, 107, 77, 92, and 63. We use the resetting WA e-process (defined in \eqref{eq:WA}) with threshold $\gamma=49$.

We recall that the coverage FDP ($\mathrm{FDP}^{\mathrm{cov}}_T$)  in \eqref{eq:metrics} is token-wise, which is different from the rejection-wise realized FDP in \eqref{eq:FDP-FDR}. For example, if a test has one refined rejection block containing four positions (N, A, A, A), where ``N" represents a null position and ``A" represents an alternative position, then the coverage FDP is $1/4$ whereas the realized FDP is $0$ for this single block. Also, we note that the coverage FDR does not have a theoretical FDR control.

For the Gumbel-max watermark, Table~\ref{tab:llm-mixed-block-coverage} summarizes the blockwise coverage results, and the resetting e-process trajectory is given by Figure \ref{fig:efron} below.

\begin{table}[t]
\centering
\small
\setlength{\tabcolsep}{4pt}
\caption{Blockwise coverage on the Gumbel-max mixed document. ``Reported'' means included
in the union of the refined reports.}
\label{tab:llm-mixed-block-coverage}
\begin{tabular}{@{}ccccc@{}}
\toprule
Source block & Positions & No.~of tokens & Reported & Reported proportion \\
\midrule
Human source 1 & 1--86 & 86 & 56 & 65.1\% \\
Watermarked OPT-1.3B block 1 & 87--184 & 98 & 85 & 86.7\% \\
Human source 2 & 185--261 & 77 & 0 & 0.0\% \\
Watermarked OPT-1.3B block 2 & 262--384 & 123 & 116 & 94.3\% \\
Human source 3 & 385--447 & 63 & 0 & 0.0\% \\
\bottomrule
\end{tabular}
\end{table}

\begin{figure}[t]
\centering
\includegraphics[width=\textwidth]{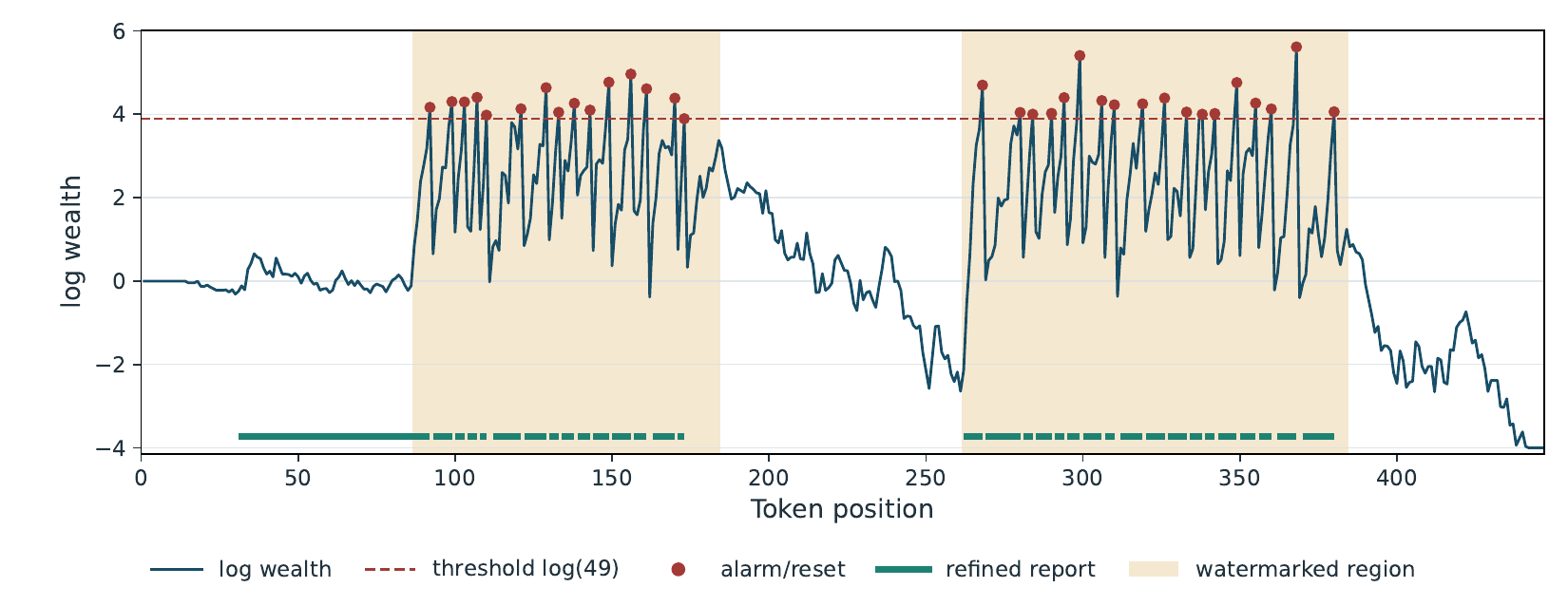}
\caption{{Resetting WA e-process trajectories for the Gumbel-max mixed document watermark detection example. Values below $-4$ are clipped from the drawing.}}
\label{fig:efron}
\end{figure}

The refined reports include 201 of the 221 watermarked tokens and 56 of the
226 human tokens. The resulting token-wise metrics are given by
\[
  \operatorname{Power}_{447}=0.910,\qquad
  \operatorname{FDP}^{\mathrm{cov}}_{447}=0.218,\qquad
  \operatorname{IoU}_{447}=0.726.
\]
The procedure contains 33 refined reports, with a zero realized FDP and
uniform realized FDP because every reported interval contains watermarked text.

{We repeat the same experiment using two complete, sentence-aligned
Tournament-watermarked continuations.
Table~\ref{tab:llm-tournament-mixed-block-coverage} and
Figure~\ref{fig:llm-tournament-efron} display the corresponding results.} See Figure~\ref{fig:llm-mixed-passage} for the full mixed document.

\begin{table}[t]
\centering
\small
\setlength{\tabcolsep}{4pt}
\caption{{Blockwise coverage on the Tournament mixed document.
``Reported'' means included in the union of the refined reports.}}
\label{tab:llm-tournament-mixed-block-coverage}
\begin{tabular}{@{}ccccc@{}}
\toprule
Source block & Positions & No.~of Tokens & Reported & Reported proportion \\
\midrule
Human source 1 & 1--86 & 86 & 1 & 1.2\% \\
Watermarked OPT-1.3B block 1 & 87--193 & 107 & 97 & 90.7\% \\
Human source 2 & 194--270 & 77 & 0 & 0.0\% \\
Watermarked OPT-1.3B block 2 & 271--362 & 92 & 87 & 94.6\% \\
Human source 3 & 363--425 & 63 & 0 & 0.0\% \\
\bottomrule
\end{tabular}
\end{table}

\begin{figure}[t]
\centering
\includegraphics[width=\textwidth]{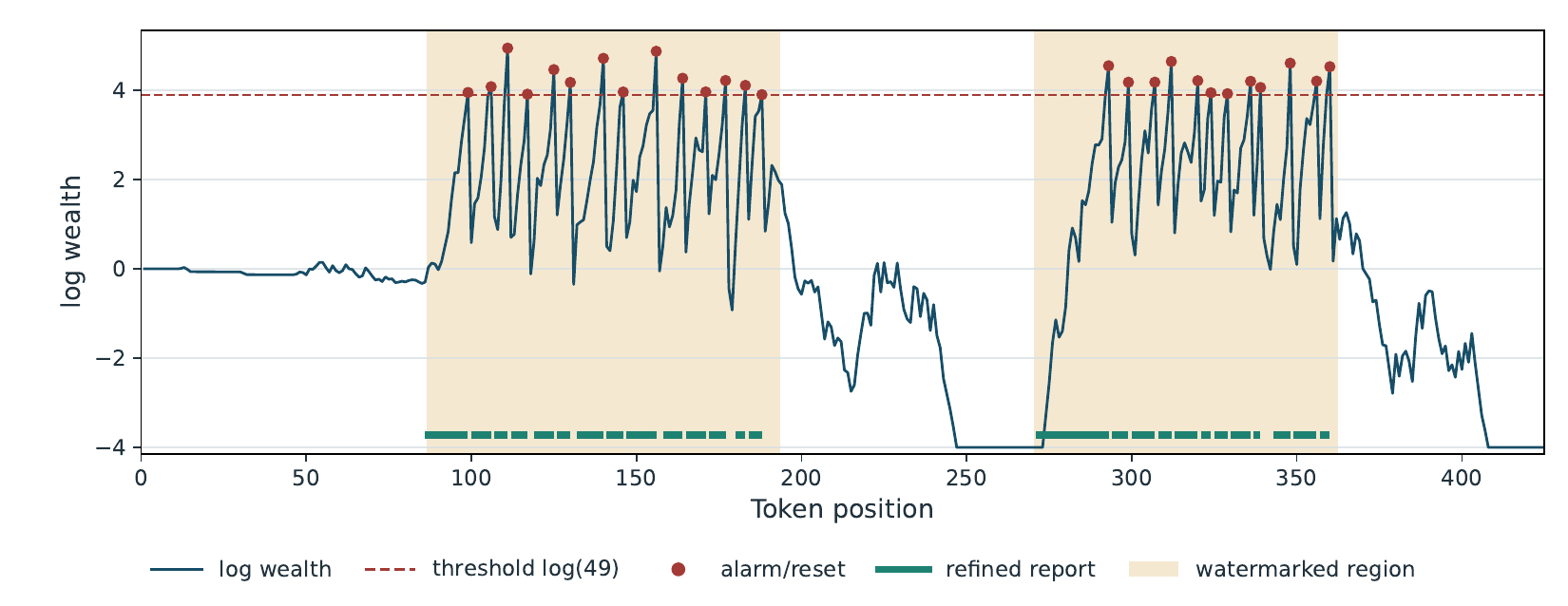}
\caption{{Resetting WA e-process trajectory for the
Tournament mixed-document watermark detection example. Values below $-4$ are
clipped from the drawing.}}
\label{fig:llm-tournament-efron}
\end{figure}

{
The refined reports include 184 of the 199 watermarked tokens and one of the
226 human tokens. Thus,
\[
  \operatorname{Power}_{425}=0.925,\qquad
  \operatorname{FDP}^{\mathrm{cov}}_{425}=0.005,\qquad
  \operatorname{IoU}_{425}=0.920.
\]
The procedure contains 26 refined reports, and its realized FDP and
uniform realized FDP are both zero.
}

\subsection{An application to financial backtesting}
\label{sec:backtest}

\subsubsection{Model-free sequential backtests of ES}

Backtesting Expected Shortfall (ES) is an important task in financial regulation. In a sequential manner, financial institutions report risk measure forecasts of Value-at-Risk (VaR) and ES. The regulator assesses the forecast quality according to the observed financial losses. Let $\M_0(\R)$ be the set of probability measures on $(\R,\mathcal B(\R))$ and $\M_1(\R)\subset\M_0(\R)$ be the set of those with finite first moment. \cite{WWZ26} introduced a model-free sequential testing procedure to backtest ES, defined at level $p\in(0,1)$ as
$$\ES_{p}(Q)=\frac{1}{1-p}\int^1_p\VaR_u(Q)\,\d u,~~Q\in\M_1(\R),$$
using the auxiliary information of VaR, defined as
$$\VaR_p(Q)=\inf\{x\in\R:Q((-\infty,x])\ge p\},~~Q\in \M_0(\R).$$
Specifically, for a fixed time horizon $T\in\N\cup\{\infty\}$,
let $X_t$ at time $t\in\T_{\ge 1}$ be the negated percentage log-return of the closing indices at $t$ and $t-1$. The forecasts $r_t$ and $z_t$ are $\mathcal F_{t-1}$-measurable and target, respectively, $\mathrm{ES}_p(X_t|\mathcal F_{t-1})$ and $\mathrm{VaR}_p(X_t|\mathcal F_{t-1})$. At time $t\in\T_{\ge 1}$, let
\[
  \mathcal P_t
  =\left\{Q\in\M_1(\R):
      \mathrm{ES}_p(Q)\leq r_t,
      \ \mathrm{VaR}_p(Q)=z_t
    \right\}.
\]
Recall that $\p_t$ is the true conditional law of $X_t$ given $\mathcal F_{t-1}$.
The null hypothesis of the backtest is thus given by $H_t=\{\p_t\in\mathcal P_t\}.$
In other words, this is a one-sided testing hypothesis where ES is not underestimated and the auxiliary forecast of VaR is correctly reported conditional on the past information.

Use the convention $0/0=0$. By Theorem 1 of \cite{WWZ26}, using the \emph{backtest e-statistic}
\[
  E_t=e^{\mathrm{ES}}_p(X_t,r_t,z_t)
  =\frac{(X_t-z_t)_+}{(1-p)(r_t-z_t)},~~r_t\ge z_t,~t\in\T_{\ge 1},
\]
for any $\mathcal F_{t-1}$-measurable $\lambda_t$ taking values in $[0,1]$, the following process with $M_0=1$ and
\begin{equation}\label{eq:backtest_mtg}
    M_t=M_t(\bm\lambda)=(1-\lambda_t+\lambda_tE_t)M_{t-1}=\prod^t_{s=1}(1-\lambda_s+\lambda_sE_s),~~t\in\T_{\ge 1},
\end{equation}
is a test supermartingale relative to $(\mathcal P_t)_{t\in\T_{\ge 1}}$. Here, the process $\bm\lambda=(\lambda_t)_{t\in\T_{\ge 1}}$ is called a \emph{betting process}, which can be calculated by various approaches to ensure the test power. Let $w\in\T_{\ge 1}$ be a rolling window size. Following \cite{WWZ26}, the \emph{growth rate for empirical e-statistics (GREE)} method estimates a growth-rate-optimal bet from past realized e-statistics: $\lambda_t^{\mathrm{GREE}}
  =0$ for $t=1,\dots,w$, and
\[
  \lambda_t^{\mathrm{GREE}}
  =\argmax_{\lambda\in[0,1/2]}
    \sum_{s=1}^{w}
    \log(1-\lambda+\lambda e^{\mathrm{ES}}_p(X_{t-s},r_{t-s},z_{t-s})),~~t\in\T_{\geq w+1},
\]
whereas the \emph{growth rate for empirical losses (GREL)} method applies today's forecasts to the historical losses: $\lambda_t^{\mathrm{GREL}}
  =0$ for $t=1,\dots,w$, and
\[
 \lambda_t^{\mathrm{GREL}}=\lambda_t^{\mathrm{GREL}}(r_t,z_t)
  =\argmax_{\lambda\in[0,1/2]}
    \sum_{s=1}^{w}
    \log(1-\lambda+\lambda e^{\mathrm{ES}}_p(X_{t-s},r_t,z_t)),~~t\in\T_{\geq w+1}.
\]
Here, we set an upper bound $1/2$ to the maximizer of the empirical criteria only to prevent ill-behaving events causing the process to take the value $0$. The bound is not restrictive under the current context because practical optimal betting processes are usually very small.
The numerical implementation uses Taylor approximation to these empirical criteria:
$$\lambda_t^{\mathrm{GREL}} \approx 0 \vee \frac{\sum_{s=1}^{w} e^{\mathrm{ES}}_p\left(X_{t-s}, r_t, z_t\right)-w}{\sum_{s=1}^{w}\left(e^{\mathrm{ES}}_p\left(X_{t-s}, r_t, z_t\right)-1\right)^2} \wedge \frac{1}{2},~~t\in\T_{\geq w+1}.$$
The approximated betting process for the GREE method can be calculated in a similar way by replacing $r_t$ and $z_t$ by $r_{t-s}$ and $z_{t-s}$.

The GREE and GREL methods are asymptotically optimal under different scenarios \citep[see Theorem 3 of][for details]{WWZ26}. To guarantee the test power when the practical data pattern is unclear, the \emph{growth rate for empirical mixture (GREM)} method is recommended, which takes the average of the test supermartingales generated by the GREE and GREL methods:
\[
  M_t(\bm\lambda^{\mathrm{GREM}})
    =\tfrac12M_t(\bm\lambda^{\mathrm{GREE}})+\tfrac12M_t(\bm\lambda^{\mathrm{GREL}}),~~t\in\T_{\geq 1},
\]
where
\[
\lambda_t^{\mathrm{GREM}}
  =\frac{M_{t-1}(\bm\lambda^{\mathrm{GREE}})\lambda_t^{\mathrm{GREE}}
         +M_{t-1}(\bm\lambda^{\mathrm{GREL}})\lambda_t^{\mathrm{GREL}}}
        {M_{t-1}(\bm\lambda^{\mathrm{GREE}})+M_{t-1}(\bm\lambda^{\mathrm{GREL}})}.
\]
We use the GREM method throughout.

\subsubsection{Revisiting the problem with sequential resetting procedures}

The method of \cite{WWZ26} is a powerful tool for financial regulators to backtest ES, especially under the sequential testing setup and when the underlying model is unknown. However, such a test is one-shot, which means that the testing procedure is regarded as complete after one rejection occurs. The issue matters
in market-risk surveillance because model adequacy can deteriorate, recover, and deteriorate again as
volatility, tail behavior, portfolio exposures, and estimation windows change. A regulator may therefore
want distinct warnings for the global financial crisis, the pandemic, and later episodes rather than one permanent verdict based on the first crisis.

Our sequential resetting procedure solves the practical problem described above by defining a resetting process $\widetilde M$ based on the test supermartingale $(M_t)_{t\in\T}$ in \eqref{eq:backtest_mtg}. Under the multiple testing framework, the general FDR bound in Theorem \ref{thm:FDR_barrier} applies. Specifically, we choose the rejection threshold $\gamma=1/\alpha\approx 19.972$ to guarantee an FDR bound of $\alpha(1+\log(1/\alpha))= 0.2$.\footnote{We set a higher FDR bound in this section as $0.2$ because it is usually beneficial to get early warnings for financial backtests even though it becomes more likely to obtain false positives, which is not a severe problem.}

For comparison, we adopt the same data setup as in \cite{WWZ26} but extend the sample to contain the daily negated percentage log-returns of the NASDAQ Composite index from January 17, 1996 through December 31, 2025. The forecasts of $\VaR_{0.975}$ and $\ES_{0.975}$ are based on an AR(1)--GARCH(1,1) model estimated with a moving window. The estimation window and the computation window for the betting process are both set to be $w=500$ days.
Therefore, the computed test supermartingale contains 6,539 out-of-sample data points from January 3, 2000 through December 31, 2025.
Innovations are assumed to be i.i.d.~with mean zero and variance one, following normal, t, or skewed-t distributions; an empirical forecast obtained by a nonparametric rolling window method is also included, together with a deliberately conservative specification that increases the skewed-t ES forecasts by 10\%.
We plot the losses and forecasts in
Figure~\ref{fig:extended-figure5}.

\begin{figure}[t]
\centering
\includegraphics[width=0.98\textwidth]{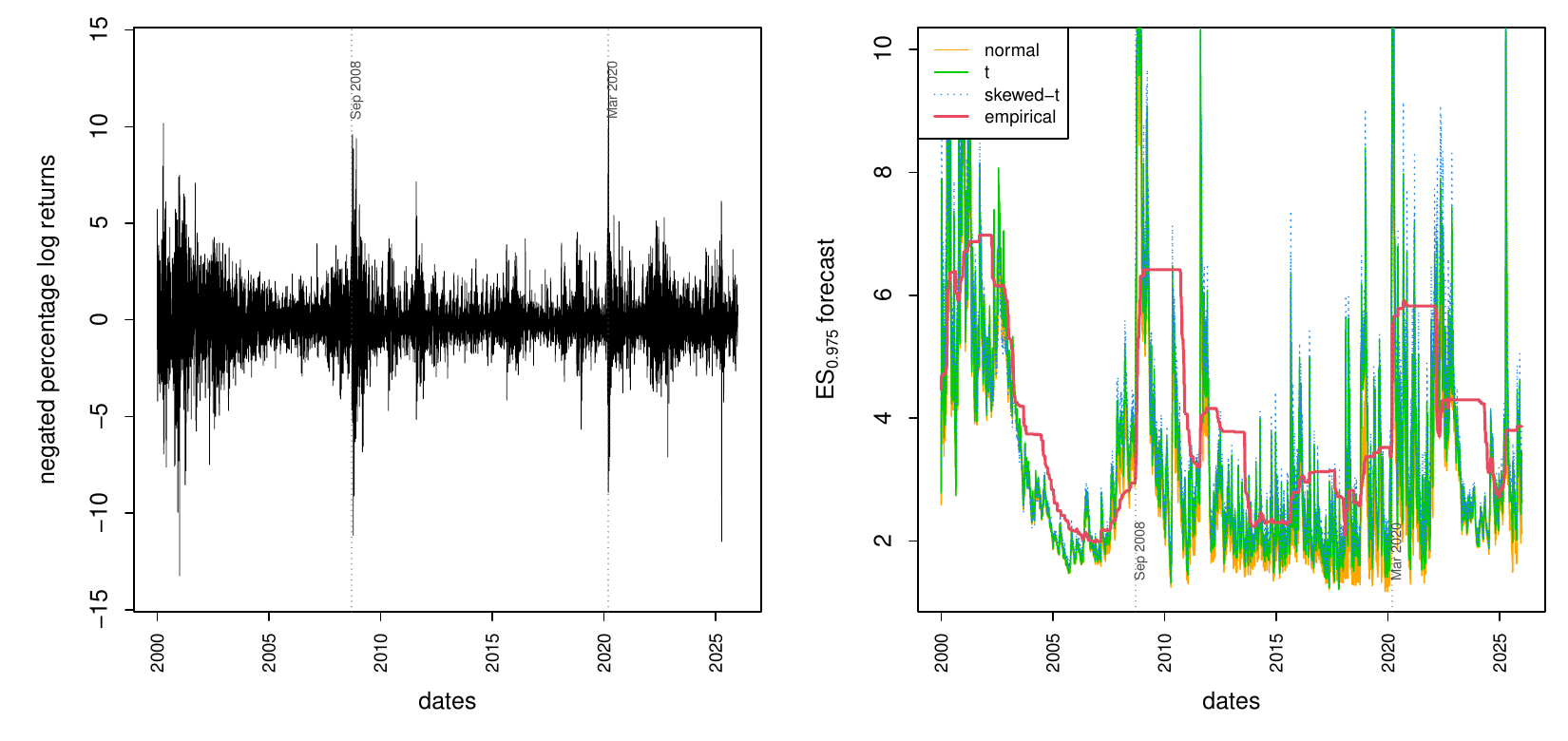}
\caption{Negated percentage log-returns of the NASDAQ Composite index (left panel) and $\ES_{0.975}$ forecasts (right panel) from January 3, 2000, to December 31, 2025.}
\label{fig:extended-figure5}
\end{figure}

\begin{figure}[!t]
\centering
\includegraphics[width=0.98\linewidth]{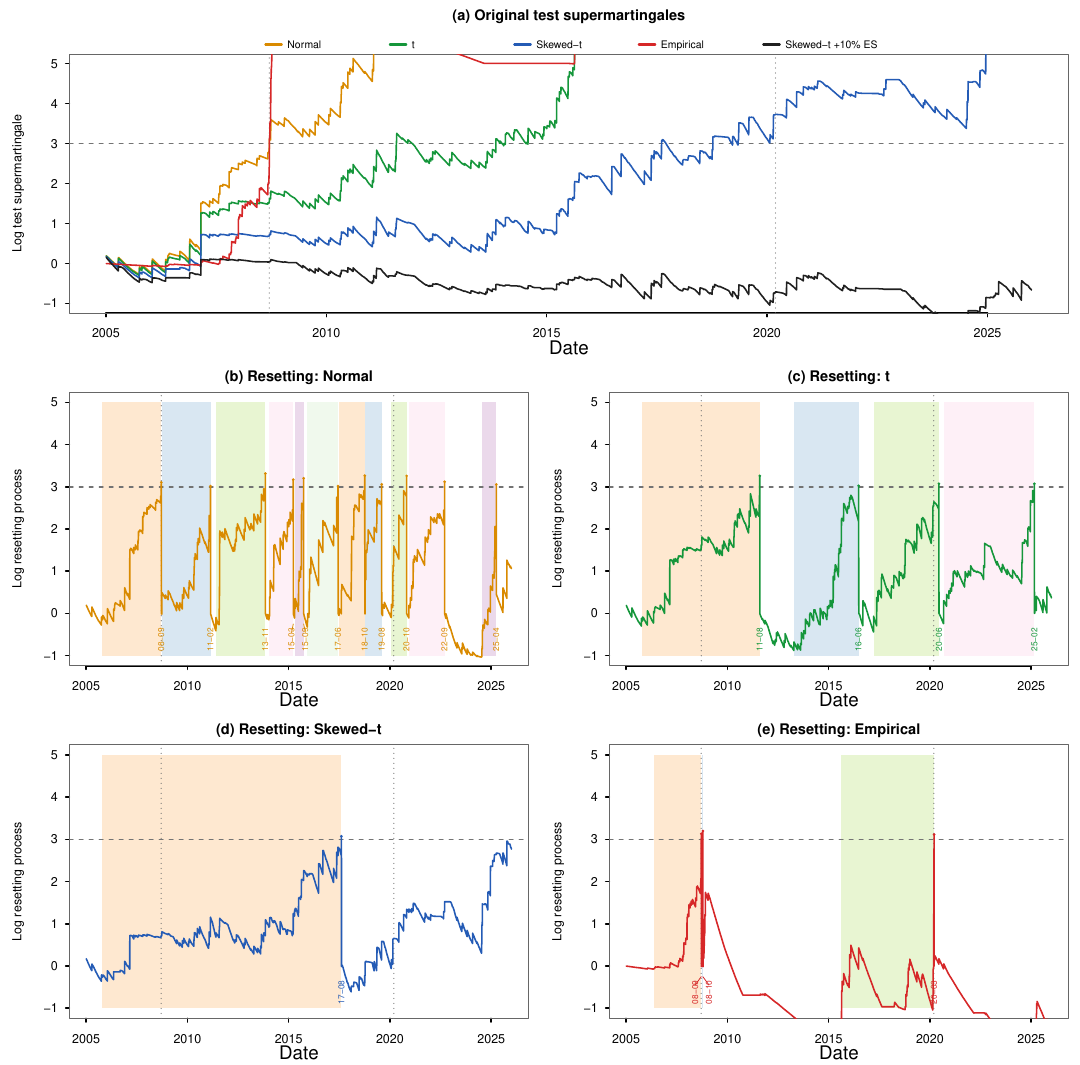}
\caption{Log-transformed original test supermartingales (a) and sequential resetting processes (b--e) for normal, t, skewed-t, and empirical forecasts, testing $\ES_{0.975}$ for the NASDAQ Composite index from January 4, 2005, to December 31, 2025. The forecast methods are shown in orange, green, blue, and red, respectively; the black line in (a) uses skewed-t ES forecasts increased by $10\%$. Colored areas mark refined rejection intervals, and labels give rejection dates (YY-MM). Horizontal dashed lines show the rejection threshold. Vertical dotted lines mark September 15, 2008 (financial crisis) and March 11, 2020 (COVID-19).}
\label{fig:e-backtest}
\end{figure}

We take the computed test supermartingales starting from January 4, 2005 through December 31, 2025 and focus on the impact of the 2008 financial crisis (marked on September 15, 2008) and the COVID-19 pandemic period (marked on March 11, 2020).\footnote{We do not record the test supermartingale between January 3, 2000 and January 3, 2005 because it is not helpful to show the effect of the 2008 financial crisis.}  Figure \ref{fig:e-backtest} shows the refined rejection blocks of backtesting $\ES_{0.975}$ forecasts under different forecasting methods (normal, t, skewed-t, and empirical). The normal forecasts generate 11 warning blocks, whereas the t and
skewed-t forecasts generate four and one, respectively. The deliberately conservative skewed-t $+10\%$ ES forecasts are not
rejected throughout. This pattern is broadly consistent with the increasing
conservativeness of the normal, t, skewed-t, and skewed-t $+10\%$ forecasts. The empirical forecast generates three warning blocks. As shown in
Figure~\ref{fig:extended-figure5}, it varies much less from day to day than
the parametric forecasts because it is based on a rolling historical window.
It may therefore adapt slowly when tail risk rises abruptly. This helps explain
its rapid warnings following the large market losses in September 2008 and
March 2020, but does not imply that the empirical specification is generally
less conservative or more frequently rejected than all parametric
specifications.

For the 2008 financial crisis, normal and empirical forecasts are both rejected on September 17, 2008, two trading days after September 15, 2008. Their refined intervals begin on October 5, 2005 and May 11, 2006, respectively, so the block-minimum times identify long evidence-bearing periods culminating in the crisis warning rather than claiming that the crisis began at those refined starts. The t and skewed-t forecasts yield longer rejection blocks covering the crisis. However, they also exhibit early warnings during the crisis and can be rejected when lower rejection thresholds are set, yielding a higher FDR bound.
For the COVID-19 pandemic period, the empirical forecasts are rejected on March 16, 2020, immediately after the March 2020 market break. The heavier-tailed parametric forecasts respond later: t forecasts are rejected on June 11 and normal on October 28, while skewed-t forecasts are not rejected in 2020. After 2021, normal forecasts are rejected again on September 13, 2022 and April 3, 2025, and t forecasts are rejected on February 27, 2025, showing that the extension also captures later evidence not attributable solely to the initial pandemic shock.

\section{Concluding remarks}

The new   testing procedure proposed in this paper builds on e-values and allows for continuous identification of alternatives through resetting after each rejection. The refinement technique further identifies locations of the alternatives more precisely, for which FDR bounds are established.
 The proposed approach is different from  existing methods for resetting procedures, such as ARL-based methods in changepoint detection \citep[Sections~1.1 and~2.5]{SRR24}, which have different error criteria than ours. Compared with the ARL-based methods, our procedures are often more conservative in terms of rejections, because they have infinite ARL under the global null: Ville's inequality gives $\P(\tau_1<\infty)\le\alpha<1$. This is necessary for the FDR control under global null. %

The bound expressed through geometric random variables in Theorem~\ref{thm:FDR_barrier} is sharp, and the explicit bound $\alpha(1+\log(1/\alpha))$ therein is asymptotically sharp as $\alpha\downarrow0$. The sharpness construction in Example~\ref{ex:infty} uses a particular data-dependent arrangement of null and alternative times. In our numerical experiments such as watermark detection, we observed that the empirical FDRs are well below the corresponding theoretical bounds. This is a typical situation for e-value-based methods, and there is room for further improvements of power if additional assumptions are available. 

We discussed  two applications in the paper, and there are potentially more; for instance, 
manufacturing process monitoring and anomaly detection  are two application areas in which our refined procedures can be useful.

\section*{Acknowledgments}

We thank Xiaoming Huo, Aaditya Ramdas and Qihe Tang for helpful discussions. 
Ruodu Wang was supported by the Natural Sciences and Engineering Research Council of Canada (CRC-2022-00141, RGPIN-2024-03728).
Zhenyuan Zhang was supported by the Jump Trading Fellowship.
GPT-5.6 Sol and GPT-6 Astra were used in the development of technical results and in the production of codes for numerical experiments.
The paper is human-written.
All replication files for the numerical results can be found at \url{https://github.com/qwangan/Sequential-Resetting-Procedures}.

\setcounter{lemma}{0}
\renewcommand{\thelemma}{\thesection.\arabic{lemma}}
\setcounter{proposition}{0}
\renewcommand{\theproposition}{\thesection.\arabic{proposition}}
\setcounter{theorem}{0}
\renewcommand{\thetheorem}{\thesection.\arabic{theorem}}
\setcounter{definition}{0}
\renewcommand{\thedefinition}{\thesection.\arabic{definition}}
\setcounter{corollary}{0}
\renewcommand{\thecorollary}{\thesection.\arabic{corollary}}
\setcounter{example}{0}
\renewcommand{\theexample}{\thesection.\arabic{example}}
\setcounter{equation}{0}
\renewcommand{\theequation}{\thesection.\arabic{equation}}
\renewcommand{\thefigure}{\thesection.\arabic{figure}}
\renewcommand{\thetable}{\thesection.\arabic{table}}

\appendix

\renewcommand{\theHequation}{\theHsection.\arabic{equation}}
\renewcommand{\theHlemma}{\theHsection.\arabic{lemma}}
\renewcommand{\theHproposition}{\theHsection.\arabic{proposition}}
\renewcommand{\theHtheorem}{\theHsection.\arabic{theorem}}
\renewcommand{\theHdefinition}{\theHsection.\arabic{definition}}
\renewcommand{\theHcorollary}{\theHsection.\arabic{corollary}}
\renewcommand{\theHexample}{\theHsection.\arabic{example}}
\renewcommand{\theHfigure}{\theHsection.\arabic{figure}}
\renewcommand{\theHtable}{\theHsection.\arabic{table}}

\section{Additional illustrations and remarks}

\label{app:sec:r}

First, we present in Figure~\ref{fig:ex1} an illustration of the FDP paths  in 
Example~\ref{ex:non-iid-alpha-counterexample}, which gives the computation 
\[
\begin{aligned}
\mathrm{FDR}^{\mathrm{ord}}_4
&= q\left(
    \frac34\times \frac1{25}
    +\frac23\times\frac4{25}
    +\frac45
    \right)
    +(1-q)\frac12\times\frac1{25}= \frac{11q}{12}+\frac1{50}
   \xrightarrow[q\uparrow1/5]{}
   \frac{61}{300}>\frac15.
\end{aligned}
\]

\begin{figure}[t]
  \includegraphics[width=\linewidth]{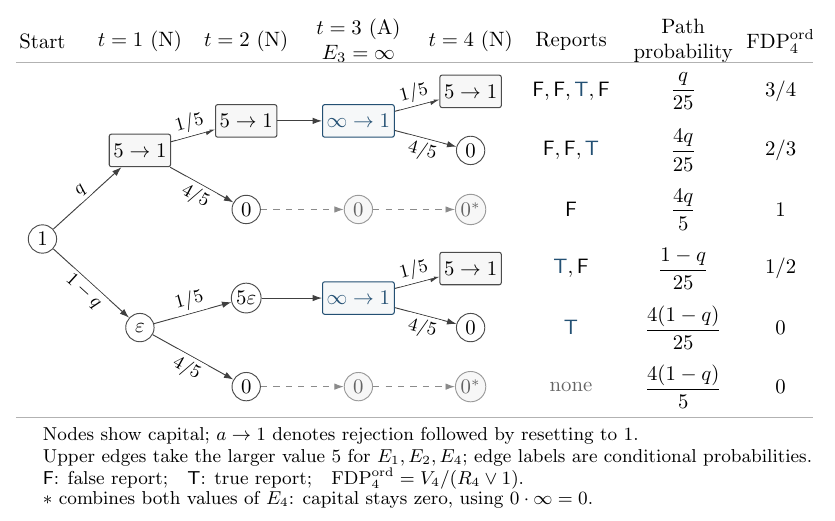}
  \caption{Computation of the ordinary FDR in Example~\ref{ex:non-iid-alpha-counterexample}.
    N and A denote null and alternative times. } 
\label{fig:ex1}
 
\end{figure}

Second, we discuss 
 FDR bound with a deterministic number of rejections. In this 
simpler setting, we  fix a number $r^*\in\N$ and stop the testing procedure once the total number of rejections reaches $r^*$, i.e., at time $\tau_{r^*}$. On the complementary event where $\tau_{r^*}=\infty$, we declare no rejections. The number of false positives at time $\tau_{r^*}$ is thus given by $$V^\sigma_{\tau_{r^*}}=\id_{\{\tau_{r^*}<\infty\}}\sum^{r^*}_{r=1}\min\left\{1-\id_{\{\p_t\notin\mathcal P_t\}}:t\in(\sigma_r,\tau_r]\right\}=\id_{\{\tau_{r^*}<\infty\}}\sum^{r^*}_{r=1}\id_{\{\nu^\sigma_r>\tau_r\}},$$
    and the na\"{i}ve truncated FDR of the testing procedure  with target $r^*$ is defined as
    $$\mathrm{FDR}^{\mathrm{ref}}_{\mathrm{tr}}(r^*)=\E\left[\frac{V_{\tau_{r^*}}^\sigma}{r^*}\right].$$
Because $r^*\in\N$ is deterministic, we have
\begin{align*}
    \mathrm{FDR}^{\mathrm{ref}}_{\mathrm{tr}}(r^*)=\frac{1}{r^*}\,\E[V_{\tau_{r^*}}^\sigma]\leq \frac{1}{r^*}\,\E\bigg[\sum^{r^*}_{r=1}\id_{\{\tau_r<\infty,\,\nu^\sigma_r>\tau_r\}}\bigg]
        \le \alpha,
\end{align*}
where the last inequality follows from Lemma \ref{lem:ville} in Appendix \ref{app:sec:proof_thm2}. Although yielding a smaller FDR upper bound, the current approach is overly conservative because it sets the error to zero whenever fewer than $r^*$ rejections occur, which understates the genuine FDR and may lead to lower power.

\section{Proofs of the main results}
\label{app:sec:proofs}

\subsection{Proof of Theorem \ref{thm:localized-main}}
\label{app:sec:proof_thm1}

\subsubsection{Monotonicity and weighted capital costs under mutual independence}
\label{sec:tech_ind}

For a fixed $\alpha\in(0,1)$, we introduce the following definitions that will be used throughout this section.
For $m\in\mathbb N_0$, $c\in[0,1/\alpha]$, and
$\mathbf z=(z_1,\ldots,z_m)\in[0,\infty]^m$,
define recursively
$Y_0(c,\mathbf z)=C_0(c,\mathbf z)=c$,
$A_0(c,\mathbf z)=0$,
\begin{equation}
\label{eq:reset-maps}
\begin{aligned}
&Y_k(c,\mathbf z)=C_{k-1}(c,\mathbf z)z_k,~C_k(c,\mathbf z)
=\begin{cases}
1,
&
\mbox{if}~Y_k(c,\mathbf z)\ge\frac{1}{\alpha},\\
Y_k(c,\mathbf z),
&
\mbox{if}~Y_k(c,\mathbf z)<\frac{1}{\alpha},
\end{cases}\\
\mbox{and}~~& A_k(c,\mathbf z)
=A_{k-1}(c,\mathbf z)
+\id_{\{Y_k(c,\mathbf z)\ge1/\alpha\}},~~k=1,\ldots,m.
\end{aligned}
\end{equation}
Thus, $Y_m(c,\mathbf z)$ and $C_m(c,\mathbf z)$ are the terminal capital before and after the reset
decision at time $m$, respectively, and $A_m(c,\mathbf z)$ is the total number of resets up to time $m$.

The following lemma shows that a deterministic process, which resets to $1$ whenever it exceeds a threshold, only yields a larger number of resets for a higher starting capital.

\begin{lemma}\label{lem:monotone}
Fix $m\in\mathbb N_0$, $\alpha\in(0,1)$, and deterministic multipliers
$\mathbf z=(z_1,\dots,z_m)\in[0,\infty]^m$. For each $c\in[0,1/\alpha]$, denote $Y_k(c)=Y_k(c,\mathbf z)$, $C_k(c)=C_k(c,\mathbf z)$, and $A_k(c)=A_k(c,\mathbf z)$ for $k=1,\dots,m$ as in \eqref{eq:reset-maps}.
Then the mapping $x\mapsto A_m(x)$ is increasing on $[0,1/\alpha]$.
Namely,
\[
0\le c_1\le c_2\le \frac{1}{\alpha}\implies A_m(c_1)\le A_m(c_2).
\]
\end{lemma}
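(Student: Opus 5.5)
The plan is a strong induction on the horizon $m$, peeling off the first multiplier and proving an auxiliary ``at most one extra reset'' inequality jointly with the monotonicity.

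\textbf{First-step recursion.} Directly from \eqref{eq:reset-maps}, the path after time $1$ is the same reset map started from $C_1(c)$ and driven by the shifted multipliers $\mathbf z'=(z_2,\dots,z_m)$. Hence
\[
A_m(c,\mathbf z)=\id_{\{cz_1\ge 1/\alpha\}}+A_{m-1}\bigl(C_1(c),\mathbf z'\bigr),
\qquad C_1(c)\in[0,1/\alpha).
\]
The case $m=0$ is trivial. Fix $c_1\le c_2$ and assume the statement for all horizons below $m$. The convention $0\times\infty=0$ only affects which branch applies and does not change the comparisons below.

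\textbf{Easy cases.} Compare the two starting capitals after the first multiplier.
\begin{itemize}
\item If neither resets at time $1$, then $C_1(c_1)=c_1z_1\le c_2z_1=C_1(c_2)$, and the induction hypothesis for horizon $m-1$ gives the claim.
\item If both reset, then both continue from $1$ and the counts agree.
\item The case where $c_1$ resets but $c_2$ does not is impossible, since $c_1z_1\le c_2z_1$.
\end{itemize}

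\textbf{Remaining case.} Here $c_2$ resets and $c_1$ does not. We need
\[
A_{m-1}(x,\mathbf z')\le 1+A_{m-1}(1,\mathbf z')\qquad\text{with } x=c_1z_1<1/\alpha .
\]
I would prove, jointly in the same induction on $m$, the auxiliary claim
\[
(\ast)\qquad A_m(x,\mathbf z)\le 1+A_m(y,\mathbf z)\quad\text{for all } x\in[0,1/\alpha),\ y\ge 1 .
\]
It suffices to treat $y=1$, since by monotonicity $A_m(y)\ge A_m(1)$ for $y\ge 1$.

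\textbf{Proof plan for $(\ast)$.} Let $j$ be the first reset time of the $x$-path. If there is none, the left side is $0$. Otherwise, after time $j$ the $x$-path sits at $1$.
\begin{itemize}
\item If the $1$-path has not reset during $[1,j]$, its value at $j$ is $P=z_1\cdots z_j$. The reset of $x$ at $j$ gives $xP\ge 1/\alpha$ with $x<1/\alpha$, hence $P>1$. Either $P\ge1/\alpha$, so the $1$-path also resets at $j$ and both continue from $1$ with equal counts. Or $P\in(1,1/\alpha)$, and then monotonicity at horizon $m-j<m$ yields $A_{m-j}(1)\le A_{m-j}(P)$. So the total is at most $1+A_m(1)$.
\item If the $1$-path reset at some earlier time $s<j$, apply $(\ast)$ inductively from time $s$ onward. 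At $s$ the $1$-path restarts at $1$ and the $x$-path is at some value in $[0,1/\alpha)$ with no resets yet, so the counts before $s$ are $0$ versus $\ge1$, which only helps.
\end{itemize}

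\textbf{Main obstacle.} The delicate point is that, once the larger path has reset and the smaller has not, pathwise ordering of capitals is lost. Monotonicity must then be recovered through the counting inequality $(\ast)$. The key observation making this work is that a reset of the lower path at a time when the upper path has not reset forces the intervening product to exceed $1$ (because $C<1/\alpha$ always), so the upper path's capital dominates the fresh value $1$. The bookkeeping needed to interleave $(\ast)$ and monotonicity in a single induction on $m$, with neither used at a horizon not yet established, is the step I expect to require the most care.
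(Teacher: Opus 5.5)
Your argument is correct, and it takes a genuinely different route from the paper.

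\textbf{The paper's route.} It runs a single forward induction in time on the pair of paths started from $c_1\le c_2$. It carries the invariant $A_k(c_2)\ge A_k(c_1)$ and $C_k(c_2)\ge \alpha^{A_k(c_2)-A_k(c_1)}C_k(c_1)$. Once pathwise ordering of the capitals is lost, it is replaced by ordering up to a factor $\alpha$ per extra reset of the upper path. A four-case check at each step closes the induction. The only delicate case is when the lower path resets and the upper one does not; this forces the reset gap to be at least one.

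\textbf{Your route.} You exploit the regenerative structure: after a reset the process restarts at $1$ and is driven by the shifted multipliers. You then run a strong induction on the horizon and handle the loss of ordering through the auxiliary bound $(\ast)$, proved by splitting at the first reset times $j$ and $s$ of the two paths.

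\textbf{Why the joint induction closes.} With $y=1$, $(\ast)$ at horizon $m$ uses monotonicity only at horizon $m-j$ and $(\ast)$ only at horizon $m-s$, where $j,s\ge 1$. Monotonicity at horizon $m$ needs only both statements at $m-1$. The reduction from general $y\ge 1$ to $y=1$ is never used, so you can drop it.

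\textbf{A small correction.} In your first bullet you mean that the $1$-path has not reset during $[1,j-1]$, so that $Y_j(1)=P$.

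\textbf{A remark on $(\ast)$.} It is itself an instance of the lemma at horizon $m+1$. Prepend the multiplier $z_0=1$ and start from $x$ and from $1/\alpha$: the first path continues at $x$, while the second resets and continues at $1$. This explains why the plain first-step induction cannot close without an extra ingredient.

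\textbf{Comparison.} The paper's route is shorter, needs only ordinary induction and no auxiliary statement, and also gives quantitative control of the capitals. Yours is more combinatorial. As a byproduct it yields the explicit bound that a start below $1/\alpha$ produces at most one more reset than a fresh start at $1$.
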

\begin{proof}
Run the two processes with the same multipliers, starting from
\(0\le c_1\le c_2\le1/\alpha\).
We prove by induction on \(k=0,\dots,m\) that
\begin{equation}
        A_k(c_2)\ge A_k(c_1),
        \qquad
        C_k(c_2)\ge
        \alpha^{A_k(c_2)-A_k(c_1)}C_k(c_1).
        \label{eq:monotone-invariant}
\end{equation}
The assertion \eqref{eq:monotone-invariant} holds for \(k=0\).  For $k\in\{1,\dots,m\}$, suppose \eqref{eq:monotone-invariant} holds at time \(k-1\), and
put \(d=A_{k-1}(c_2)-A_{k-1}(c_1)\ge 0\). It follows that \begin{equation}\label{eq:lem1_induction}
    Y_k(c_2)=C_{k-1}(c_2)z_k\ge \alpha^d C_{k-1}(c_1)z_k=\alpha^d Y_k(c_1).
\end{equation}
If $Y_k(c_2),Y_k(c_1)\in [0,1/\alpha)$ or $Y_k(c_2),Y_k(c_1)\in [1/\alpha,\infty]$, then $A_k(c_2)-A_k(c_1)=d$ is kept unchanged and thus
$C_k(c_2)\ge\alpha^{d}C_k(c_1)$ by \eqref{eq:reset-maps} and \eqref{eq:lem1_induction}. If $Y_k(c_2)\ge 1/\alpha$ while $Y_k(c_1)<1/\alpha$, then
$A_k(c_2)-A_k(c_1)=d+1$ and by \eqref{eq:reset-maps},
$$C_k(c_2)=1\geq \alpha^d> \alpha^{d+1}Y_k(c_1)=\alpha^{d+1}C_k(c_1).$$
If $Y_k(c_1)\ge 1/\alpha$ while $Y_k(c_2)<1/\alpha$, we have $d\ge 1$. Otherwise, if $d=0$, we have by \eqref{eq:lem1_induction} that $$\frac{1}{\alpha}>Y_k(c_2)\ge Y_k(c_1)\ge \frac{1}{\alpha},$$
which leads to a contradiction. Therefore, $A_k(c_2)-A_k(c_1)=d-1\ge 0$. By \eqref{eq:reset-maps} and \eqref{eq:lem1_induction},
$$C_k(c_2)=Y_k(c_2)\ge\alpha^d Y_k(c_1)\ge \alpha^{d-1}=\alpha^{d-1}C_k(c_1).$$
By the arguments above, \eqref{eq:monotone-invariant} holds for all $k=0,\dots,m$ and the proof is thus complete.
\end{proof}

The following lemma indicates that the conditional expectation of the weighted capital cost $c-Q$ is bounded below by the conditional expectation of the weighted benchmark $(1/\alpha-1)N$. It serves as a deterministic building block of Lemma \ref{lem:clean-start} below, which will change the initial capital $c$ to a random variable.

\begin{lemma}
\label{lem:clean-budget}
Fix $m\in\N_0$ and \(c\in[0,1]\). Let $\bm\zeta=(\zeta_1,\ldots,\zeta_m)$ be a nonnegative random vector with mutually independent components satisfying
$\E[\zeta_i]\le1$ for $i=1,\ldots,m$. Write $Y_i=Y_i(c,\bm\zeta)$, $C_i=C_i(c,\bm\zeta)$, and $A_i=A_i(c,\bm\zeta)$
for $i=0,\dots,m$ as in \eqref{eq:reset-maps}.
Then, for every bounded decreasing function
\(w:[0,1/\alpha]\to[0,\infty)\),
\begin{equation}
        \E\left[
        w(C_m)\bigl(C_m+(\alpha^{-1}-1)A_m\bigr)
        \right]
        \le
        c\,\E[w(C_m)].
        \label{eq:clean-budget}
\end{equation}
\end{lemma}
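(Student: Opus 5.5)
The plan is to prove a slightly stronger statement by induction on $m$, peeling off the \emph{first} multiplier $\zeta_1$. Starting capitals $c\in(1,1/\alpha)$ arise after one step, so the induction hypothesis must cover them. The claim I would carry is therefore:
\[
\Psi_m(c):=\E\bigl[w(C_m)\bigl(C_m+(\alpha^{-1}-1)A_m\bigr)\bigr]\le c\,g_m(c),\qquad g_m(c):=\E[w(C_m)],
\]
for all $c\in[0,1/\alpha]$, all $m\in\N_0$, and all independent nonnegative $\zeta_i$ with $\E[\zeta_i]\le 1$. The case $m=0$ is an equality, since $C_0=c$ and $A_0=0$.

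For the induction step I would condition on $\zeta_1$ and set $y=c\zeta_1$. Write $\boldsymbol\zeta'=(\zeta_2,\dots,\zeta_m)$; it is independent of $\zeta_1$ and has $m-1$ components.
\begin{itemize}
\item If $y<1/\alpha$, then $C_1=y$ and $A_1=0$. The remaining path is the $(m-1)$-step recursion \eqref{eq:reset-maps} started at $y$, so the hypothesis gives a conditional value of at most $y\,g_{m-1}(y)$.
\item If $y\ge 1/\alpha$, then $C_1=1$ and $A_1=1$. The hypothesis at start $1$, plus the extra $\alpha^{-1}-1$ reset cost, gives at most $(1+\alpha^{-1}-1)g_{m-1}(1)=\alpha^{-1}g_{m-1}(1)\le y\,g_{m-1}(1)$.
\end{itemize}
Define $h(y)=g_{m-1}(y)$ for $y<1/\alpha$ and $h(y)=g_{m-1}(1)$ for $y\ge1/\alpha$. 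Then $h(y)$ is exactly the conditional mean of $w(C_m)$ given $\zeta_1$, so $g_m(c)=\E[h(c\zeta_1)]$ and $\Psi_m(c)\le c\,\E[\zeta_1 h(c\zeta_1)]$. Suppose $\zeta_1\mapsto h(c\zeta_1)$ is decreasing. Chebyshev's association inequality then gives $\E[\zeta_1h(c\zeta_1)]\le \E[\zeta_1]\E[h(c\zeta_1)]\le g_m(c)$, which closes the induction. Specialising to $c\in[0,1]$ yields \eqref{eq:clean-budget}.

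The main obstacle is precisely this monotonicity. Two facts are needed:
\begin{enumerate}
\item $g_{m-1}$ is decreasing on $[0,1/\alpha)$.
\item $g_{m-1}(1)\le g_{m-1}(y)$ for every $y<1/\alpha$, i.e.\ the jump at the reset goes the right way. This is not automatic, since the reset moves the capital down from $y\ge1/\alpha$ to $1$.
\end{enumerate}
Pathwise, $C_{m-1}$ is not monotone in the starting capital; a higher start can trigger an earlier reset. I would therefore first try to exploit the coupling invariant \eqref{eq:monotone-invariant} from the proof of Lemma~\ref{lem:monotone}, namely $C_k(c_2)\ge\alpha^{A_k(c_2)-A_k(c_1)}C_k(c_1)$.

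If that coupling does not give monotonicity of $\E[w(C_{m-1})]$ directly, the fallback is to absorb the weight into the quantity being tracked. One would show that, with the multipliers fixed, the map $c\mapsto w(C_{m-1}(c))\bigl(C_{m-1}(c)+(\alpha^{-1}-1)A_{m-1}(c)\bigr)-c\,w(C_{m-1}(c))$ behaves well enough for a one-step supermartingale/convexity argument. This would use Lemma~\ref{lem:monotone} (more resets at higher start) to offset the nonmonotone $C$ by the $(\alpha^{-1}-1)$ reset credit.
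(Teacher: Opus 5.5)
Your plan has a genuine gap. The strengthened hypothesis $\Psi_m(c)\le c\,g_m(c)$ for all $c\in[0,1/\alpha]$ is false once $c>1$, already for $m=1$.

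Take $\alpha=1/4$, $c=2$, $w=\id_{[0,1]}$, and $\zeta_1\in\{3/4,\,2\}$ with probabilities $4/5$ and $1/5$, so $\E[\zeta_1]=1$.
If $\zeta_1=2$, then $Y_1=4$ and the process resets: $C_1=1$, $A_1=1$, $C_1+(\alpha^{-1}-1)A_1=4$, and $w(C_1)=1$.
If $\zeta_1=3/4$, then $C_1=3/2$ with no reset, and $w(C_1)=0$.
Hence $\Psi_1(2)=4/5>2/5=2\,g_1(2)$.

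A start $c\le 1$ can reach $y=c\zeta_1\in(1,1/\alpha)$ after one step. For example, take $c=1$, $\zeta_1=2$ with positive probability, and $\zeta_2$ distributed as above. Given $\zeta_1=2$, the conditional value is $\Psi_1(2)=4/5>2\,g_1(2)$. So your first bullet fails in a legitimate instance of the lemma, and the first-step induction does not close.

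The same example also refutes both monotonicity facts you list:
\[
g_1(1)=4/5>0=g_1(3/2),\qquad g_1(1.9)=0<1/5=g_1(2).
\]
So neither the coupling invariant \eqref{eq:monotone-invariant} nor any other device can supply them, and your fallback names no replacement inequality. The lemma genuinely depends on $c\le 1$. Any decomposition that moves the starting capital above $1$ loses that hypothesis.

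The paper avoids this by keeping $c$ fixed and peeling off the \emph{last} step.
\begin{enumerate}
\item With $B_i=C_i+(\alpha^{-1}-1)A_i$, $B$ is a supermartingale, so $\E[B_m]\le c$.
\item Apply the layer-cake formula to the signed measure $\mu(\cdot)=\E[(B_m-c)\id_{\{C_m\in\cdot\}}]$. Superlevel sets of a decreasing $w$ are initial segments, so \eqref{eq:clean-budget} reduces to $\E[(B_m-c)\id_{\{C_m\le y\}}]\le 0$ for every $y$.
\item For $y\ge c$, this follows from $\E[B_m]\le c$ together with $B_m\ge C_m>y\ge c$ on $\{C_m>y\}$.
\item For $y<c$, the paper proves by induction on $i$ the stronger bound $\E[(B_i-c+(c/y-1)C_i)\id_{\{C_i\le y\}}]\le 0$.
\end{enumerate}
In the last step, $C_i\le y<c\le 1$ forces $C_i<1$, so no reset occurs at step $i$. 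Conditioning on $\zeta_i=z$ then only rescales the threshold to $y/z$, with the same $c$. Each resulting case is covered by the induction hypothesis, by the $y\ge c$ case, or by $\E[B_{i-1}]\le c$. This is exactly where $c\le 1$ is used, and it survives the induction because only the threshold moves, never the starting capital.
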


\begin{proof}
The case \(c=0\) or $m=0$ is trivial, so it suffices to assume \(c>0\) and $m\in\N$.
For \(i=0,\ldots,m\), define
$
B_i=C_i+(1/\alpha-1)A_i.
$

We first show that
$\E[B_m]\le c.$
Let \(\mathcal G_i=\sigma(\zeta_1,\ldots,\zeta_i)\). Conditional on
\(\mathcal G_{i-1}\), if \(Y_i<1/\alpha\), then
\[
B_i-B_{i-1}
=
Y_i-C_{i-1}
=
C_{i-1}(\zeta_i-1).
\]
If \(Y_i\ge1/\alpha\), then
\[
B_i-B_{i-1}
=
\frac{1}{\alpha}-C_{i-1}
\le
Y_i-C_{i-1}
=
C_{i-1}(\zeta_i-1).
\]
Therefore,
\[
\E[B_i-B_{i-1}\mid \mathcal G_{i-1}]
\le
C_{i-1}\bigl(\E[\zeta_i]-1\bigr)
\le0,
\]
where we use the independence of \(\zeta_i\) from \(\mathcal G_{i-1}\) and
\(\E[\zeta_i]\le1\). Hence \((B_i)_{i=0}^m\) is a supermartingale, and
$\E[B_m]\le B_0=c.$

We next prove the lower-tail inequality
\begin{equation}
        \E\left[(B_m-c)\id_{\{C_m\le y\}}\right]\le0,~~\mbox{for all}~ 0\le y\le\frac{1}{\alpha}.
        \label{eq:lower-tail}
\end{equation}
For \(y\ge c\), on \(\{C_m>y\}\) we have \(B_m\ge C_m>y\ge c\). Hence
\begin{equation}\label{eq:lem2_case1}
        \E[(B_m-c)\id_{\{C_m\le y\}}]
        =
        \E[B_m-c]-\E[(B_m-c)\id_{\{C_m>y\}}]\le0.
\end{equation}
For \(0<y<c\), we prove the following stronger inequality by
induction for all \(i=0,\dots,m\): 
\begin{equation}
        \E\left[
        \left(B_i-c+\left(\frac cy-1\right)C_i\right)
        \id_{\{C_i\le y\}}
        \right]\le0,
        ~~\mbox{for all}~ 0<y<c.
        \label{eq:strong-lower-tail}
\end{equation}
For \(i=0\), we have \(C_0=c\). The event \(\{C_0\le y\}\) is empty because \(y<c\).
For $i\in\{1,\dots,m\}$, assume the claim \eqref{eq:strong-lower-tail} holds for length \(i-1\), uniformly over all thresholds
\(y\in(0,c)\). Namely,
\begin{equation}\label{eq:lem2_m-1}
    \E\left[
        \left(B_{i-1}-c+\left(\frac cy-1\right)C_{i-1}\right)
        \id_{\{C_{i-1}\le y\}}
        \right]\le 0,
        ~~\mbox{for all}~ 0<y<c.
\end{equation}
Conditional on $\zeta_i=z$ for any fixed $z\ge 0$, on the event $\{C_{i}\le y\}$, because $y<c\le 1$, we have $C_i<1$ and thus $C_{i}=Y_i<1/\alpha$ by \eqref{eq:reset-maps}. It follows that
$$
        C_i=zC_{i-1},
        \qquad\mbox{and}\qquad
        B_i=B_{i-1}+(z-1)C_{i-1}.
$$
Write the new threshold $y'=y/z$, where \(y/0=\infty\). We have
\begin{equation}
        \E\left[
        \left(B_i-c+\left(\frac cy-1\right)C_i\right)
        \id_{\{C_i\le y\}}
        \right]=\E\left[
        \left(
        B_{i-1}-c+\left(\frac{c}{y'}-1\right)C_{i-1}
        \right)
        \id_{\{C_{i-1}\le y'\}}
        \right].
        \label{eq:last-step-recursion}
\end{equation}
We check the possible values of \(z\) and \(y'\).

\begin{itemize}
\item If \(z>0\) and \(y'<c\), then
\eqref{eq:last-step-recursion} is nonpositive by \eqref{eq:lem2_m-1}.

\item If \(z>0\) and \(c\le y'\le1/\alpha\), then
${c}/{y'}-1\le 0.$
The previous case in \eqref{eq:lem2_case1} yields that
$\E[(B_{i-1}-c)\id_{\{C_{i-1}\le y'\}}]\le 0.$
Together with \(C_{i-1}\ge 0\), this makes
\eqref{eq:last-step-recursion} nonpositive.

\item If \(z>0\) and \(y'>1/\alpha\), then ${c}/{y'}-1\le 0$ and
\(\id_{\{C_{i-1}\le y'\}}=1\) because
\(0\le C_{i-1}<1/\alpha\).
Hence, \eqref{eq:last-step-recursion} is at most
$\E[B_{i-1}-c]\le 0.$

\item If \(z=0\), then \eqref{eq:last-step-recursion} becomes
$
\E[B_{i-1}-c-C_{i-1}]\le 0.
$
\end{itemize}
Averaging over \(z\) proves \eqref{eq:strong-lower-tail}. The endpoint \(y=0\) follows by
dominated convergence of \eqref{eq:lower-tail} as \(y\downarrow0\). Indeed, because
$
0\le C_m\le 1/\alpha$ and
$0\le A_m\le m,$
\(B_m\) is bounded by a constant depending only on \(\alpha\) and \(m\). Therefore, \eqref{eq:lower-tail} holds true.

Now define the finite signed measure
\[
\mu(A)=\E\left[(B_m-c)\id_{\{C_m\in A\}}\right],
\qquad A\in\mathcal B([0,1/\alpha]).
\]
By \eqref{eq:lower-tail},
$\mu([0,y])\le 0$
for all $y\in[0,1/\alpha]$.
Moreover, \(\mu([0,y))\le 0\) by continuity from below.
We claim that
$
\int w\,\mathrm d\mu\le0
$
for any bounded nonnegative decreasing Borel function \(w\). Let
\(\mu=\mu^+-\mu^-\) be the Jordan decomposition of $\mu$.
Tonelli's theorem applied separately to \(\mu^+\) and \(\mu^-\) gives
\[
        \int w\,\mathrm{d}\mu
        =\int\int_0^\infty \id_{\{w(x)>u\}}\,\mathrm du\,\d\mu
        =
        \int_0^\infty \mu(\{x:w(x)>u\})\,\mathrm{d}u .
\]
For each \(u\ge 0\), because \(w\) is decreasing, the superlevel set
\(\{x:w(x)>u\}\) is either empty, the whole interval
\([0,1/\alpha]\), or an initial interval of the form \([0,y]\) or
\([0,y)\). Each such set has nonpositive \(\mu\)-mass. Therefore
$
\int w\,\mathrm d\mu\le 0.
$

Finally, by the definition of \(\mu\),
\[
0\ge \int w\,\mathrm d\mu
=
\E[w(C_m)(B_m-c)].
\]
Substituting \(B_m=C_m+(1/\alpha-1)A_m\) gives
\eqref{eq:clean-budget}.
\end{proof}

Fix $t\in\T$. For \(s=0,\dots,t\), let
\begin{equation}
\label{eq:ab-def}
\begin{aligned}
        a(s)=\min\{u\in\{s+1,\ldots,t\}:u\notin\H_0\}~~
        \mbox{and}~~
        b(s)=(a(s)-1)\wedge t,
\end{aligned}
\end{equation}
so \((s,b(s)]\) is the all-null interval immediately following \(s\) within the terminal horizon \(t\). The e-values $E_1,\dots,E_t$ are the same as those in \eqref{eq:M_tilde}. Define the $\sigma$-field generated by the e-values as
\begin{equation}\label{eq:sigma_increments}
    \mathcal G_u=\sigma(E_1,\ldots,E_u),~~u=1,\dots,t.
\end{equation}
Let $\mathcal G_0$ be the trivial $\sigma$-field.
The following result extends Lemma \ref{lem:clean-budget} into the probabilistic version.

\begin{lemma}
\label{lem:clean-start}
Fix $t\in\T$.
Suppose that \(E_1,\ldots,E_t\) satisfy Assumption \ref{ass:ind}. For $s=0,\dots,t$, write \(\mathbf E_s=(E_{s+1},\dots,E_{b(s)})\) and $\tilde b(s)=b(s)-s$.
Let \(S\) be a \((\mathcal G_u)\)-stopping time taking values in $\{0,\dots,t\}$.
Let \(X\) be \(\mathcal G_S\)-measurable taking values in \([0,1]\).
Let
$Q=C_{\tilde b(S)}(X,\mathbf E_S)$, $N=A_{\tilde b(S)}(X,\mathbf E_S)$ as in \eqref{eq:reset-maps}, and
$w:[0,1/\alpha]\times\Omega\to[0,\infty)$
be a bounded and
\(\mathcal B([0,1/\alpha])\otimes\mathcal G_S\)-measurable function such that
$q\mapsto w(q,\omega)$
is decreasing for every \(\omega\in\Omega\). Write
$w(Q)(\omega)=w(Q(\omega),\omega).$
We have
\begin{equation}
\E\left[
w(Q)\bigl(Q+(\alpha^{-1}-1)N\bigr)
\,\middle|\,
\mathcal G_S
\right]
\le
X\,\E\left[w(Q)\,\middle|\,\mathcal G_S\right]~~\mbox{a.s.}
\label{eq:clean-start}
\end{equation}
\end{lemma}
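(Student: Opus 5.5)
The plan is to reduce \eqref{eq:clean-start} to Lemma~\ref{lem:clean-budget} by conditioning on $\mathcal G_S$ and on the alternative e-values. We handle one value of $S$ at a time. Since $S$ takes finitely many values in $\{0,\dots,t\}$, it suffices to prove, for each fixed $s$ and each $G\in\mathcal G_S$,
\[
\E\bigl[\id_{G\cap\{S=s\}}\,w(Q)\bigl(Q+(\alpha^{-1}-1)N-X\bigr)\bigr]\le 0 .
\]
On $\{S=s\}$ the event $G\cap\{S=s\}$ lies in $\mathcal G_s$. Moreover, $X$ and $w(\cdot,\omega)$ coincide there with $\mathcal G_s$-measurable objects, and $Q=C_{\tilde b(s)}(X,\mathbf E_s)$, $N=A_{\tilde b(s)}(X,\mathbf E_s)$ depend only on $X$ and on the block $\mathbf E_s=(E_{s+1},\dots,E_{b(s)})$. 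By the definition of $b(s)$ in \eqref{eq:ab-def}, every index in $(s,b(s)]$ belongs to $\H_0$, since $\H_0$ is deterministic.

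The key step is the conditional independence structure. By Assumption~\ref{ass:ind}, the null e-values $(E_u)_{u\in\H_0}$ are mutually independent and independent of the alternative vector. Hence the block $\mathbf E_s$, which consists of null indices all exceeding $s$, is independent of $\sigma(E_1,\dots,E_s)=\mathcal G_s$. Indeed, $\mathcal G_s$ is generated by null e-values with indices at most $s$ (disjoint from the block) together with alternative e-values, and the joint family of other nulls and alternatives is independent of the block. The components of $\mathbf E_s$ are mutually independent, and each satisfies $\E[E_u]\le 1$ because of \eqref{eq:evalue} at a null time, integrated over $\mathcal F_{u-1}$. Combining this with the independence from the past gives the unconditional mean bound. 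A small point to check: \eqref{eq:evalue} gives $\E[E_u\mid\mathcal F_{u-1}]\le 1$ under $\p$ since $\p_u\in\mathcal P_u$, and taking expectations yields $\E[E_u]\le1$. That marginal bound is all Lemma~\ref{lem:clean-budget} needs.

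We then condition on $\mathcal G_s$, which amounts to freezing $\omega$ in the $\mathcal G_s$-part. Because of the independence just established, the conditional law of $\mathbf E_s$ is its unconditional product law; formally this is the freezing (substitution) lemma for independent variables. For fixed frozen values, $c=X(\omega)\in[0,1]$ and $q\mapsto w(q,\omega)$ is bounded, nonnegative and decreasing on $[0,1/\alpha]$. Lemma~\ref{lem:clean-budget} with $m=\tilde b(s)$ and $\bm\zeta=\mathbf E_s$ then gives
\[
\E\bigl[w(C_m)\bigl(C_m+(\alpha^{-1}-1)A_m\bigr)\bigr]\le c\,\E[w(C_m)] .
\]
Integrating this over $G\cap\{S=s\}$ and summing over $s$ yields \eqref{eq:clean-start}.

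The main obstacle is measure-theoretic rather than conceptual, and it has two parts. First, the argument must justify the substitution step for the jointly measurable $w$ and the random starting capital $X$; this needs measurability of $(c,\mathbf z)\mapsto C_m,A_m$, which holds because these are compositions of Borel maps and indicators. Second, the case where some $E_u=\infty$ must be handled. There $Y_k=\infty\ge1/\alpha$ and a reset occurs, so $C_m$ stays bounded by $1/\alpha$ and $w(Q)$ is bounded. We would also double-check that $\tilde b(s)$ may be $0$, for example when $s+1\notin\H_0$ or $s=t$; in that case $Q=X$, $N=0$, and the inequality holds trivially with equality.
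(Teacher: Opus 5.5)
Your proposal is correct and is essentially the paper's proof. Like the paper, you split on $\{S=s\}$, replace $X$ and $w$ by $\mathcal G_s$-measurable versions, and use Assumption~\ref{ass:ind} to get that $\mathbf E_s$ is independent of $\mathcal G_s$ with product law; you then integrate out $\mathbf E_s$ by the substitution step and apply Lemma~\ref{lem:clean-budget} pointwise in $\omega$ with $c=X(\omega)\in[0,1]$ before summing over $s$. Your separate worry about $E_u=\infty$ is moot, since $\E[E_u]\le 1$ at null times already forces $E_u<\infty$ almost surely, which is how the paper disposes of it.
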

\begin{proof}
Put $B=Q+(1/\alpha-1)N$.
For $s\in\{0,\dots,t\}$, because \(X\) is \(\mathcal G_S\)-measurable, there exists a
\(\mathcal G_s\)-measurable random variable \(X_s\) such that $X_s=X$ on $\{S=s\}$;
similarly, there exists a
\(\mathcal B([0,1/\alpha])\otimes\mathcal G_s\)-measurable function
\(w_s:[0,1/\alpha]\times\Omega\to[0,\infty)\) such that $w_s(q,\omega)=w(q,\omega)$, decreasing in $q$, for all $q\in[0,1/\alpha]$ and $\omega\in\{S=s\}$.

For $\omega\in\Omega$ and \(\mathbf z\in[0,\infty)^{\tilde b(s)}\), define 
\[
\begin{aligned}
        h_s(\omega,\mathbf z)
        =
        w_s\left(C_{\tilde b(s)}(X_s(\omega),\mathbf z),\omega\right)
        \left(
        C_{\tilde b(s)}(X_s(\omega),\mathbf z)
        +(\alpha^{-1}-1) A_{\tilde b(s)}(X_s(\omega),\mathbf z)
        -X_s(\omega)
        \right).
\end{aligned}
\]
It is clear that \(h_s\) is bounded and
\(\mathcal G_s\otimes\mathcal B([0,\infty)^{\tilde b(s)})\)-measurable.
Let $\Pi_s=
        \otimes_{j=1}^{\tilde b(s)}\mathcal L(E_{s+j})$, with the convention that \(\Pi_s\) is the point mass at \(\varnothing\) when \(\tilde b(s)=0\).
By \eqref{eq:ab-def}, $u\in\H_0$ for all \(u\in\{s+1,\ldots,b(s)\}\). Note that
\(\mathbb E[E_u]\le1\) and  \(E_u<\infty\) almost surely for all \(u\in\{s+1,\ldots,b(s)\}\).
 Moreover,
Assumption~\ref{ass:ind} implies that the components of
\(\mathbf E_{s}\) are mutually independent and
\(\mathbf E_{s}\) is independent of \(\mathcal G_s\).  In particular,
\(\mathcal L(\mathbf E_{s})=\Pi_s\).
For any bounded nonnegative $\mathcal G_S$-measurable random variable $H$, we have \(H\id_{\{S=s\}}\in\mathcal G_s\). Hence,
\[
\begin{aligned}
\E\left[H w(Q)(B-X)\right] &=
\sum_{s=0}^t
\E\left[
H\id_{\{S=s\}}
h_s(\cdot,\mathbf E_{s})
\right]\\
&=\sum_{s=0}^t
\E\left[
H\id_{\{S=s\}}
\E[h_s(\cdot,\mathbf E_{s})\,|\,\mathcal G_s]
\right]
=
\sum_{s=0}^t
\E\left[
H\id_{\{S=s\}}
\int h_s(\cdot,\mathbf z)\,\Pi_s(\mathrm d\mathbf z)
\right].
\end{aligned}
\]
For any fixed $\omega\in\Omega$ and $s\in\{0,\dots,t\}$, $X_s(\omega)\in[0,1]$, and $q\mapsto w_s(q,\omega)$ is bounded and decreasing on $q\in[0,1/\alpha]$. Hence, Lemma~\ref{lem:clean-budget}
gives
$\int h_s(\omega,\mathbf z)\,\Pi_s(\mathrm d\mathbf z)\le 0$ and thus $\int h_s(\cdot,\mathbf z)\,\Pi_s(\mathrm d\mathbf z)\le 0$.
Furthermore, because \(H\id_{\{S=s\}}\ge 0\), we have
$\E\left[H w(Q)(B-X)\right]\le 0$, and thus
$$\E\left[w(Q)(Q+(\alpha^{-1}-1)N-X)\,\middle|\,\mathcal G_S\right]\le 0~~\mbox{a.s.}$$
Therefore, \eqref{eq:clean-start} holds by the \(\mathcal G_S\)-measurability of \(X\).
\end{proof}

\subsubsection{Clean intervals and their properties}
\label{sec:clean-stretches}
In this section, we construct clean intervals used in the proof of Theorem \ref{thm:localized-main}. Fix any time horizon $t\in\T$. Recall that \(Z_s\) in \eqref{eq:M_tilde} denotes the capital carried
forward after the reset at time \(s\in\{0,\dots,t\}\), $R_s$ is the number of rejections up to $s$, and $\tau_{R_s}$ is
the most recent rejection time by time \(s\). By definition, no rejection occurs on
\((\tau_{R_s},s]\). Define
$L_s=\min_{\tau_{R_s}\le u\le s}Z_u$
as the running minimum of the currently active process in $[\tau_{R_s},s]$.

Use $a(\cdot)$ and $b(\cdot)$ given by \eqref{eq:ab-def}. Set $G_1=\Omega$ and $\lambda_1=0$. Recursively for $\ell=1,\dots,t+1$, define
\begin{equation}\label{eq:stretch_variables}
    \beta_\ell= b(\lambda_\ell)\id_{G_\ell} + t\id_{G_\ell^c},~~X_\ell=Z_{\lambda_\ell}\id_{G_\ell},~~ Q_\ell=Z_{\beta_\ell}\id_{G_\ell},\quad\mbox{and}\quad N_\ell=(R_{\beta_\ell}-R_{\lambda_\ell})\id_{G_\ell}.
\end{equation}
Define the random interval
\[ I_\ell(\omega) = \begin{cases} \{\lambda_\ell(\omega)+1,\ldots,\beta_\ell(\omega)\}, & \omega\in G_\ell,\\ \emptyset, & \omega\notin G_\ell. \end{cases} \]
On the event \(G_\ell\), define
\begin{equation}\label{eq:reject_minimum}
\begin{aligned}
    &\theta_\ell = \inf\left\{i\in\{a(\lambda_\ell)\wedge(t+1),\ldots,t\}: \widetilde M_i\ge\frac{1}{\alpha}\right\}\\
\mbox{and}~~&U_\ell = \inf\left\{ i\in\{a(\lambda_\ell)\wedge(t+1),\ldots,t\}: \widetilde M_q<\frac{1}{\alpha}\ \text{for all }q=a(\lambda_\ell)\wedge(t+1),\ldots,i,~ Z_i\le L_{\beta_\ell} \right\}.
\end{aligned}
\end{equation}
By definition, $\theta_\ell\neq U_\ell$ on $G_\ell\cap \{\theta_\ell\wedge U_\ell<\infty\}$.
On \(\omega\notin G_\ell\), set $\theta_\ell=U_\ell=\infty.$
Now define
\begin{equation}\label{eq:stretch_begin}
    G_{\ell+1} = G_\ell \cap \{\theta_\ell\wedge U_\ell<\infty\}\qquad\mbox{and}\qquad\lambda_{\ell+1}=\theta_\ell\wedge U_\ell\wedge t. 
\end{equation}

After this recursion has been defined, we explain the notations. For $\ell\in\{1,\ldots,t+1\}$, because \(\beta_\ell=b(\lambda_\ell)\) on \(G_\ell\), the interval \(I_\ell=(\lambda_\ell,\beta_\ell]\) contains only null times. We call the all-null interval $I_\ell$ the $\ell$-th \emph{clean interval}.

A clean interval may or may not be empty: On $G^c_{\ell+1}$, $I_{\ell+1}=\emptyset$ is called \emph{dummy}, in which case $\theta_\ell=U_\ell=\infty$, $\lambda_{\ell+1}=t$, and $X_{\ell+1}=Q_{\ell+1}=N_{\ell+1}=0$. On $G_{\ell+1}$, the $(\ell+1)$-st clean interval $I_{\ell+1}$ is called \emph{genuine}, which starts at $\lambda_{\ell+1}+1$. Here, $\lambda_{\ell+1}$ is either a true rejection time $\theta_{\ell}$ or the time $U_{\ell}$ corresponding to a running minimum $L_{\beta_\ell}$ in the block \([\tau_{R_{\beta_\ell}},\beta_\ell]\), between which we choose the earlier time:\footnote{We take $\ell$ up to $t+1$ only for the purpose of bookkeeping. In the real problem, genuine clean intervals do not have one-to-one correspondence with the time indices $1,\dots,t+1$.}
\begin{itemize}
    \item When $\theta_\ell<U_\ell$, a rejection occurs before the test supermartingale returns to the running minimum $L_{\beta_\ell}$. There exists one and only one rejection in the interval $(\beta_\ell,\lambda_{\ell+1}]$ at $\lambda_{\ell+1}=\theta_\ell$, which is true under the refined law. On the other hand, all rejections in the clean intervals $(\lambda_\ell,\beta_\ell]$ and $(\lambda_{\ell+1},\beta_{\ell+1}]$ are false under the refined law. See Figure \ref{fig:localized-decomposition}(a) for an illustration.
    \item When $\theta_\ell>U_\ell$, the test supermartingale drops below the running minimum $L_{\beta_\ell}$ before it exceeds the rejection threshold. In this case, $\lambda_{\ell+1}=U_\ell$. All data in the interval $(\beta_\ell,\lambda_{\ell+1}]$ should be dropped according to the refinement criteria. Therefore, there are no true rejections between the two consecutive genuine clean intervals $(\lambda_\ell,\beta_\ell]$ and $(\lambda_{\ell+1},\beta_{\ell+1}]$. See Figure \ref{fig:localized-decomposition}(b) for this scenario.
\end{itemize}

\begin{figure}[t]
\centering
\begin{tikzpicture}[x=0.74cm,y=1.76cm,>=Latex,font=\scriptsize]
\useasboundingbox (-1.75,-4.00) rectangle (13.55,3.22);
\fill[white] (-1.75,-4.00) rectangle (13.55,3.22);
\tikzset{
        every node/.append style={text=black},
        caplabel/.style={anchor=east,fill=white,inner sep=1.2pt,text=black},
        timelabel/.style={align=center,fill=white,inner sep=1.1pt,text=black},
        callout/.style={align=center,fill=white,inner sep=1.4pt,text=black},
        guide/.style={black!50,dotted},
        stretch/.style={fill=black!6},
        stretchB/.style={fill=black!12}
}

\begin{scope}
        \node[anchor=west,font=\small\bfseries] at (-1.5,2.62)
                {(a) $\theta_\ell<U_\ell$: rejection before returning to the running minimum};
        \draw[->,line width=0.85pt] (0,0) -- (12.75,0) node[right]{time};
        \draw[->,line width=0.85pt] (0,0) -- (0,2.38);
        \node[rotate=90] at (-1.25,1.20) {capital levels};
        \draw[black!55,dashed] (0.15,2.12) -- (12.18,2.12) node[right,text=black]{\(1/\alpha\)};
        \draw[black!50,dotted] (0.15,0.52) -- (12.18,0.52) node[right,text=black]{running min. $L_{\beta_\ell}$};
        \draw[guide] (0.10,1.18) -- (-0.02,1.18) node[caplabel]{\(X_\ell\)};
        \draw[guide] (0.10,0.86) -- (-0.02,0.86) node[caplabel]{\(Q_\ell\)};
        \fill[stretch] (0.82,0) rectangle (4.25,2.12);
        \fill[stretchB] (8.72,0) rectangle (11.70,2.12);
        \node[fill=white,inner sep=1pt] at (2.54,1.84) {clean interval $I_\ell$};
        \node[callout, text width=2.5cm, fill=white,inner sep=1pt] at (10.22,1.84) {next clean interval $I_{\ell+1}$};
        \draw[black,line width=1.12pt]
                plot[smooth] coordinates {
                (0.82,1.18) (1.42,0.86) (2.15,0.52) (3.00,0.68)
                (4.25,0.86) (5.35,1.14) (6.72,1.32) (8.10,1.56)
                (8.72,2.12)};
        \draw[black,line width=1.12pt] (8.72,2.12) -- (8.72,1.18);
        \draw[black,line width=1.12pt]
                plot[smooth] coordinates {
                (8.72,1.18) (9.28,0.98) (10.10,0.76)
                (10.92,0.88) (11.70,0.94)};
        \foreach \x/\y/\lab in {
                0.82/1.18/{\lambda_\ell},
                4.25/0.86/{\beta_\ell},
                5.35/1.14/{a(\lambda_\ell)},
                8.72/2.12/{\theta_\ell=\lambda_{\ell+1}}} {
                \draw[guide] (\x,0) -- (\x,\y);
                \fill (\x,\y) circle (1.25pt);
                \node[timelabel,below=6pt] at (\x,0) {\(\lab\)};
        }
        \node[callout] at (7.30,1.82) {refined\\true reset};
        \draw[->,black!65] (7.68,1.92) -- (8.62,2.08);
\end{scope}

\begin{scope}[yshift=-5.95cm]
        \node[anchor=west,font=\small\bfseries] at (-1.5,2.62)
                {(b) $\theta_\ell>U_\ell$: dropping below the running minimum before rejection};
        \draw[->,line width=0.85pt] (0,0) -- (12.75,0) node[right]{time};
        \draw[->,line width=0.85pt] (0,0) -- (0,2.38);
        \node[rotate=90] at (-1.25,1.20) {capital levels};
        \draw[black!55,dashed] (0.15,2.12) -- (12.18,2.12) node[right,text=black]{\(1/\alpha\)};
        \draw[black!50,dotted] (0.15,0.40) -- (12.18,0.40) node[right,text=black]{running min. $L_{\beta_\ell}$};
        \draw[guide] (0.10,1.18) -- (-0.02,1.18) node[caplabel]{\(X_\ell\)};
        \draw[guide] (0.10,0.90) -- (-0.02,0.90) node[caplabel]{\(Q_\ell\)};
        \draw[guide] (0.10,0.66) -- (-0.02,0.66) node[caplabel]{\(Q_{\ell+1}\)};
        \draw[guide] (0.10,0.40) -- (-0.02,0.40) node[caplabel]{\(X_{\ell+1}\)};
        \fill[stretch] (0.82,0) rectangle (4.25,2.12);
        \fill[stretchB] (8.00,0) rectangle (11.20,2.12);
        \node[fill=white,inner sep=1pt] at (2.54,1.84) {clean interval $I_\ell$};
        \node[callout, text width=2.5cm, fill=white,inner sep=1pt] at (9.60,1.84) {next clean interval $I_{\ell+1}$};
        \draw[black,line width=1.12pt]
                plot[smooth] coordinates {
                (0.82,1.18) (1.42,0.90) (2.15,0.40) (3.00,0.62)
                (4.25,0.90) (5.35,1.14) (6.25,0.95) (7.25,0.68)
                (8.00,0.40) (8.55,0.32) (9.20,0.50)
                (9.85,0.36) (10.55,0.60) (11.20,0.66)};
        \foreach \x/\y/\lab in {
                0.82/1.18/{\lambda_\ell},
                4.25/0.90/{\beta_\ell},
                5.35/1.14/{a(\lambda_\ell)}} {
                \draw[guide] (\x,0) -- (\x,\y);
                \fill (\x,\y) circle (1.25pt);
                \node[timelabel,below=6pt] at (\x,0) {\(\lab\)};
        }
        \draw[guide] (8.00,0) -- (8.00,0.40);
        \fill (8.00,0.40) circle (1.25pt);
        \node[timelabel,below=6pt] at (8.00,0) {\(U_\ell=\lambda_{\ell+1}\)};
        \node[callout] at (9.46,1.20) {first drop below\\running minimum};
        \draw[->,black!65] (9.00,1.00) -- (8.08,0.44);
\end{scope}
\end{tikzpicture}
\caption{Graphical illustration of clean intervals.}
\label{fig:localized-decomposition}
\end{figure}

 For the $\ell$-th clean interval, $X_{\ell}$ and $Q_{\ell}$ represent its initial and terminal capitals, respectively, and $N_\ell$ counts the total number of rejections within the interval.

Recall the $\sigma$-fields $(\mathcal G_u)$ in \eqref{eq:sigma_increments}. The following lemma verifies the stopping time and measurability properties of the quantities defined above.

\begin{lemma}
\label{lem:decomposition}
The recursive construction has the following properties.
\begin{enumerate}
\item[(i)] For every \(\ell\in\{1,\ldots,t+1\}\), \(\lambda_\ell\) is a stopping time and
\(G_\ell\in\mathcal G_{\lambda_\ell}\).
\item[(ii)] For every \(\ell\in\{1,\ldots,t+1\}\), \(\beta_\ell=b(\lambda_\ell)\) is a stopping time, and $u\in\H_0$ for all
\(u\in(\lambda_\ell,\beta_\ell]\) on \(G_\ell\).
\item[(iii)] For every \(\ell\in\{1,\ldots,t+1\}\), \(X_\ell\) is \(\mathcal G_{\lambda_\ell}\)-measurable, while
\(Q_\ell\) and \(N_\ell\) are \(\mathcal G_{\beta_\ell}\)-measurable.
\item[(iv)] For every \(\ell\in\{1,\ldots,t\}\), if $I_\ell$ and $I_{\ell+1}$ are both genuine, then
\[
        \lambda_\ell
        <\beta_\ell+1= b(\lambda_\ell)+1
        =a(\lambda_\ell)
        \le \lambda_{\ell+1}.
\]
If $I_{\bar \ell}$ is dummy for some $\bar \ell\in\{1,\ldots,t+1\}$, then $I_\ell$ is dummy for every $\ell\in\{\bar\ell+1,\dots,t+1\}$.
\end{enumerate}
\end{lemma}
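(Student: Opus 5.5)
We prove (i)--(iv) together by induction on $\ell$, in the order (i) $\Rightarrow$ (ii) $\Rightarrow$ (iii) for each $\ell$, and then pass to $\ell+1$ via the recursion \eqref{eq:stretch_begin}. Two facts are used throughout. First, $\H_0$ is deterministic under Assumption~\ref{ass:ind}, so $a(\cdot)$ and $b(\cdot)$ in \eqref{eq:ab-def} are deterministic functions with $s\le b(s)\le t$. Second, by \eqref{eq:M_tilde} the processes $\widetilde M_u$, $Z_u$, $R_u$, $\tau_{R_u}$ and $L_u$ are $\mathcal G_u$-measurable, i.e.\ $(\mathcal G_u)$-adapted.

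\emph{Base case.} $\lambda_1=0$ is a stopping time and $G_1=\Omega\in\mathcal G_0$. For (ii), $\{b(\lambda_\ell)\le u\}=\bigcup_{s:\,b(s)\le u}\{\lambda_\ell=s\}$. Since $b(s)\ge s$, each such $s$ satisfies $s\le u$, so this union lies in $\mathcal G_u$. Hence $b(\lambda_\ell)$ is a stopping time. Then $\beta_\ell=b(\lambda_\ell)$ on $G_\ell$ and $\beta_\ell=t$ on $G_\ell^c$. Because $G_\ell\in\mathcal G_{\lambda_\ell}\subseteq\mathcal G_{b(\lambda_\ell)}$, $\beta_\ell$ is a stopping time; this is the meaning of ``$\beta_\ell=b(\lambda_\ell)$'' on the relevant event. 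Null-ness on $(\lambda_\ell,\beta_\ell]$ is immediate from the definition of $a(\lambda_\ell)$.

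For (iii), $X_\ell=Z_{\lambda_\ell}\id_{G_\ell}$ is a stopped adapted process times an indicator in $\mathcal G_{\lambda_\ell}$, so it is $\mathcal G_{\lambda_\ell}$-measurable. Likewise $Q_\ell=Z_{\beta_\ell}\id_{G_\ell}$ and $N_\ell=(R_{\beta_\ell}-R_{\lambda_\ell})\id_{G_\ell}$ are $\mathcal G_{\beta_\ell}$-measurable, using $\lambda_\ell\le\beta_\ell$ and $\mathcal G_{\lambda_\ell}\subseteq\mathcal G_{\beta_\ell}$.

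\emph{Inductive step.} We show $\theta_\ell$ and $U_\ell$ are stopping times. For $\theta_\ell$: its search window starts at $a(\lambda_\ell)\wedge(t+1)=\beta_\ell+1$ on $G_\ell$, which is determined by time $\beta_\ell$. For $i\ge\beta_\ell+1$, the event $\{\theta_\ell\le i\}$ is $G_\ell\cap\{\exists\, q\in[\beta_\ell+1,i]:\widetilde M_q\ge1/\alpha\}$, which lies in $\mathcal G_i$. For $U_\ell$: the threshold $L_{\beta_\ell}$ is $\mathcal G_{\beta_\ell}$-measurable, and both the condition ``no crossing of $1/\alpha$ on $[\beta_\ell+1,i]$'' and the comparison $Z_i\le L_{\beta_\ell}$ are $\mathcal G_i$-measurable. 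So $U_\ell$ is a stopping time, and so is $\lambda_{\ell+1}=\theta_\ell\wedge U_\ell\wedge t$ (setting both to $\infty$ off $G_\ell$). Finally, $G_{\ell+1}=G_\ell\cap\{\theta_\ell\wedge U_\ell\le t\}$, and on this event $\lambda_{\ell+1}=\theta_\ell\wedge U_\ell$, so $G_{\ell+1}\cap\{\lambda_{\ell+1}\le u\}\in\mathcal G_u$; that is, $G_{\ell+1}\in\mathcal G_{\lambda_{\ell+1}}$.

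The main care point is exactly this step: writing events of the form ``$\{\text{random start}\le q\le i\}$'' as countable unions over the values of $\beta_\ell$, and treating the case $G_\ell^c$ consistently.

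\emph{Part (iv).} If $I_{\ell+1}$ is genuine, then $G_{\ell+1}\subseteq G_\ell$ and $\theta_\ell\wedge U_\ell<\infty$. The infimum is taken over indices $\ge a(\lambda_\ell)\wedge(t+1)$, and finiteness forces $a(\lambda_\ell)\le t$. Hence $b(\lambda_\ell)=a(\lambda_\ell)-1$ and $\lambda_{\ell+1}\ge a(\lambda_\ell)=\beta_\ell+1>\lambda_\ell$; here $\lambda_\ell\le b(\lambda_\ell)$ gives the strict inequality. Monotonicity of dummies follows from $G_{\ell+1}\subseteq G_\ell$: if $I_{\bar\ell}$ is dummy then $\omega\notin G_{\bar\ell}$, hence $\omega\notin G_\ell$ for all $\ell\ge\bar\ell$.
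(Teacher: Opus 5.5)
Your proposal is correct and follows essentially the same route as the paper: induction on $\ell$ for (i)--(iii), with $\beta_\ell$ handled via $\{b(\lambda_\ell)\le u\}=\bigcup_{s:\,b(s)\le u}\{\lambda_\ell=s\}$, $\lambda_{\ell+1}$ obtained from the stopping times $\theta_\ell,U_\ell$, and (iv) read off directly from \eqref{eq:ab-def} and \eqref{eq:stretch_begin}. You justify the stopping-time property of $\theta_\ell$ and $U_\ell$ more explicitly than the paper does (via $a(\lambda_\ell)\wedge(t+1)=\beta_\ell+1$ on $G_\ell$), and your separate treatment of $G_\ell^c$ for $\beta_\ell$ is harmless but unnecessary: $\lambda_\ell=t$ off $G_\ell$ and $b(t)=t$, so $\beta_\ell=b(\lambda_\ell)$ holds identically.
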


\begin{proof}
Statement (iv) follows directly from \eqref{eq:ab-def} and \eqref{eq:stretch_begin}.
We prove statements (i)--(iii) by induction.
For $\ell=1$, $\lambda_1=0$ and $G_1=\Omega$. Hence,
$$\beta_1=b(0)=\left(\min\{u\in\{1,\ldots,t\}:u\notin\H_0\}-1\right)\wedge t
$$
is a stopping time. Moreover, $X_1=Z_0=1$, and $Q_1=Z_{\beta_1}$ and $N_1=R_{\beta_1}$ are $\mathcal G_{\beta_1}$-measurable. Therefore, statements (i)--(iii) are true for $\ell=1$.

Suppose that statements (i)--(iii) are true for $\ell\in\{1,\dots,t\}$. We prove they also hold true for $\ell+1$. Because $\lambda_\ell$ is a stopping time, $\theta_\ell$ and $U_\ell$ are both stopping times by \eqref{eq:reject_minimum}.
For statement (i), we have $\{\lambda_{\ell+1}\le t\}=\Omega\in\mathcal G_t$. For all $n<t$, we have $G_\ell\in\mathcal G_{s}$ on $\{\lambda_\ell=s\}$ for any $s=0,\dots,n-1$. Hence,
\[
        \{\lambda_{\ell+1}\le n\}
        =
        \bigcup_{s=0}^{n-1}
        \left(
        \{\lambda_\ell=s\}\cap G_\ell
        \cap\{\theta_\ell\wedge U_\ell\le n\}
        \right)\in\mathcal G_{n}.
\]
Thus, $\lambda_{\ell+1}$ is a stopping time.
Moreover, $G_{\ell+1}=G_\ell\cap\{\theta_\ell\wedge U_\ell<\infty\}\in\mathcal G_{\lambda_{\ell+1}}$ because $G_{\ell}\in\mathcal G_{\lambda_\ell}$, $\lambda_{\ell+1}\ge \lambda_\ell$, and $\theta_\ell\wedge U_\ell\le t$ on $\{\theta_\ell\wedge U_\ell<\infty\}$.

The assertion that $u\in\H_0$ for all $u\in(\lambda_{\ell+1},\beta_{\ell+1}]$ on $G_{\ell+1}$ is straightforward by definition. The rest of statement (ii) follows from the fact that $\lambda_{\ell+1}$ is a stopping time and
\[
        \{\beta_{\ell+1}\le n\}
        =
        \bigcup_{s:\,b(s)\le n}
        \{\lambda_{\ell+1}=s\}\in\mathcal G_n~~\mbox{for all }n=0,\dots,t,
\]
because $s\le b(s)\le n$ for all $s=0,\dots,t$ such that $b(s)\le n$.

Statement (iii) follows from the definition \eqref{eq:stretch_variables}, $G_{\ell+1}\in\mathcal G_{\lambda_{\ell+1}}$, $\beta_{\ell+1}\ge \lambda_{\ell+1}$, and the adaptiveness of $(Z_u)_{u=0}^t$ and $(R_u)_{u=0}^t$. The proof is complete.
\end{proof}

For the fixed terminal horizon \(t\in\T\), define the break-index set
\begin{equation}\label{eq:break_set}
    \mathcal K = \{0,t+1\} \cup \{\ell\in\{1,\ldots,t\}:\theta_\ell<U_\ell\}.
\end{equation}
The next lemma shows that rejections in clean intervals count exactly the refined false rejections, whereas all refined true rejections are characterized by the set $\mathcal K$. %

\begin{lemma}\label{lem:classification}
It holds that
$V^\sigma_t=\sum_{\ell=1}^{t+1}N_\ell$ and $K^\sigma_t=|\mathcal K|-2$ for $t\in\T$.
\end{lemma}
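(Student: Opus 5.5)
We argue pathwise for a fixed sample path and classify each rejection $r$ with $\tau_r\le t$ by its position relative to the clean intervals.

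\textbf{Partition of time.} By Lemma~\ref{lem:decomposition}(iv), the genuine clean intervals $I_1,I_2,\dots,I_{\ell^*}$ are ordered and disjoint. The gaps between them are $(\beta_\ell,\lambda_{\ell+1}]$, and the last gap is $(\beta_{\ell^*},t]$. Each gap is of one of three kinds:
\begin{enumerate}[(a)]
\item a gap ending at $\theta_\ell<U_\ell$, which contains exactly one rejection, namely at $\theta_\ell$, because $\theta_\ell$ is the first rejection after $a(\lambda_\ell)$;
\item a gap ending at $U_\ell<\theta_\ell$, which contains no rejection;
\item the terminal gap after the last genuine interval, where $\theta_{\ell^*}=U_{\ell^*}=\infty$, which also contains no rejection by time $t$.
\end{enumerate}
Hence every rejection by time $t$ lies either inside some clean interval or at one of the times $\theta_\ell$ with $\ell\in\mathcal K\setminus\{0,t+1\}$. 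This gives
\[
R_t=\sum_{\ell}N_\ell+|\mathcal K|-2 .
\]
It therefore suffices to prove two claims: (A) every rejection inside a clean interval is refined-false, and (B) every rejection at $\theta_\ell<U_\ell$ is refined-true. Both identities then follow.

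\textbf{Claim (B).} Let $r$ be the index with $\tau_r=\theta_\ell$. We must show $\sigma_r<a(\lambda_\ell)$, so that the alternative time $a(\lambda_\ell)\le\tau_r$ lies in $(\sigma_r,\tau_r]$.

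The block $(\tau_{r-1},\tau_r]$ contains $\beta_\ell$. The reason is that no rejection occurs in $[a(\lambda_\ell),\theta_\ell)$, and $a(\lambda_\ell)=\beta_\ell+1$, so $\tau_{r-1}=\tau_{R_{\beta_\ell}}\le\beta_\ell$. Inside this block, $M^{[r]}$ is just $Z$ (and $\widetilde M$ at the endpoint).

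By definition, $L_{\beta_\ell}$ is the minimum of $Z$ over $[\tau_{r-1},\beta_\ell]$. Since $U_\ell>\theta_\ell$, for every $i\in[a(\lambda_\ell),\theta_\ell)$ we have $Z_i>L_{\beta_\ell}$. At $\theta_\ell$ itself, $\widetilde M_{\theta_\ell}\ge1/\alpha>L_{\beta_\ell}$. So the block minimum is attained only at times at or before $\beta_\ell$, and in particular its last attainment satisfies $\sigma_r\le\beta_\ell<a(\lambda_\ell)$.

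One point needs care: $M^{[r]}_0=1=Z_{\tau_{r-1}}$, so the block's starting value is included in $L_{\beta_\ell}$. This holds because $L$ takes the minimum over the closed range starting at $\tau_{R_s}$.

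\textbf{Claim (A).} Take a rejection $\tau_r\in(\lambda_\ell,\beta_\ell]$. We must show that $(\sigma_r,\tau_r]$ contains only nulls, i.e.\ $\sigma_r\ge\lambda_\ell$. Since $(\lambda_\ell,\beta_\ell]$ is all-null, it suffices to show that $(\sigma_r,\lambda_\ell]$ contains no alternatives. There are two cases.

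\emph{Case $\tau_{r-1}\ge\lambda_\ell$.} Then $\sigma_r\ge\tau_{r-1}\ge\lambda_\ell$, and we are done.

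\emph{Case $\tau_{r-1}<\lambda_\ell$.} Here $\lambda_\ell$ is not a rejection time, so $\lambda_\ell=U_{\ell-1}$ and the gap leading to it is of type (b). In that situation $Z_{\lambda_\ell}\le L_{\beta_{\ell-1}}$. Also, no rejection occurs between $\beta_{\ell-1}$ and $\lambda_\ell$, so the same block is still running and $L_{\beta_{\ell-1}}$ is the minimum over $[\tau_{r-1},\beta_{\ell-1}]$. The path stays above $L_{\beta_{\ell-1}}$ strictly before $U_{\ell-1}$ on $[a(\lambda_{\ell-1}),U_{\ell-1})$. Hence $Z_{\lambda_\ell}$ is a minimum over $[\tau_{r-1},\lambda_\ell]$. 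Since $\sigma_r$ is the \emph{last} minimizer over the whole block, we get $\sigma_r\ge\lambda_\ell$.

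For $\ell=1$ we have $\lambda_1=0=\tau_0$, so this situation does not arise.

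\textbf{Main obstacle.} The delicate step is (A) in the case where the block spans several earlier clean intervals and gaps. There I would need an induction over $\ell$ showing that $Z_{\lambda_\ell}\le\min_{[\tau_{R_{\lambda_\ell}},\lambda_\ell]}Z$ whenever $\lambda_\ell$ is not a rejection time, handling ties at $U$ with the "last minimum" convention. With this invariant in hand, summing the indicators over clean intervals yields $V^\sigma_t=\sum_\ell N_\ell$, and counting the breaks yields $K^\sigma_t=|\mathcal K|-2$.
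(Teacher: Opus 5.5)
Your proof is correct and follows essentially the paper's route: the paper also splits rejections into those inside clean intervals (``charged'') and those at $\theta_\ell<U_\ell$, proves $\sigma_r\ge\lambda_\ell$ for the former because $\lambda_\ell$ is the last running-minimum time of the current block, and proves $\sigma_r\le\beta_\ell<a(\lambda_\ell)\le\tau_r$ for the latter. The concern in your last paragraph is unnecessary: $L_{\beta_{\ell-1}}$ is already the minimum over the whole current block $[\tau_{r-1},\beta_{\ell-1}]$, so your direct argument in the case $\tau_{r-1}<\lambda_\ell$ is complete as written, with no induction over earlier clean intervals (the paper's ``induction'' never uses its hypothesis either).
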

\begin{proof}
For $r\in\N$ and $t\in\T$, we say that a rejection time \(\tau_r\) is \textit{charged} if there exists \(\ell\in\{1,\ldots,t+1\}\) such that $\omega\in G_\ell$ and $\lambda_\ell<\tau_r\le \beta_\ell$.
By construction, for any $\ell=1,\dots,t+1$,
\[
        N_\ell
        =
        \sum_{r\in\N}
        \id_{G_\ell}
        \id_{\{\lambda_\ell<\tau_r\le\beta_\ell\}}.
\]
By Lemma~\ref{lem:decomposition}(iv), the genuine clean intervals are ordered and disjoint. Hence,
$\sum_{\ell=1}^{t+1}N_\ell$ is the number of charged rejections up to time $t$.

(i) We first show that every charged rejection $\tau_{r_1}\in(\lambda_{\ell_1},\beta_{\ell_1}]$, for some $r_1\in\N$ and $\ell_1\in\{1,\ldots,t+1\}$, is false under the refined law. We prove the following statement for all $\ell=1,\dots,t+1$ by induction:
\begin{equation}\label{eq:lem6_invariant}
\lambda_\ell=\mu_{\lambda_\ell}~~\mbox{on }G_\ell,~~\mbox{where }
\mu_s=\max\left\{u\in\{\tau_{R_{s}},\ldots,s\}:Z_u=L_s\right\},~~s=0,\dots,t.
\end{equation}
Indeed, \eqref{eq:lem6_invariant} holds for $\ell=1$ because $\lambda_1=0$. Suppose that \eqref{eq:lem6_invariant} holds for $\ell\in\{1,\ldots,t\}$. We have $\theta_\ell\wedge U_\ell<\infty$ on $G_{\ell+1}$. If $\theta_\ell<U_\ell$, then $\theta_\ell$ is a rejection time and thus $\mu_{\theta_\ell}=\theta_\ell=\lambda_{\ell+1}$. If $\theta_\ell>U_\ell$, then $Z_{U_\ell}=L_{U_\ell}$ and hence $\mu_{U_\ell}=U_\ell=\lambda_{\ell+1}$.
Therefore, \eqref{eq:lem6_invariant} is true for all $\ell=1,\dots,t+1$.

We claim that the block-minimum time \(\sigma_{r_1}\) of the rejected block in \eqref{eq:sigma} satisfies $ \sigma_{r_1}\ge \lambda_{\ell_1}$.
Indeed, if the previous reset time \(\tau_{r_1-1}\) satisfies
\(\tau_{r_1-1}\ge\lambda_{\ell_1}\), then trivially
\(\sigma_{r_1}\ge\tau_{r_1-1}\ge\lambda_{\ell_1}\).  If
\(\tau_{r_1-1}<\lambda_{\ell_1}\), then \eqref{eq:lem6_invariant} shows that $\lambda_{\ell_1}$ is the last time where the running minimum $L_{\lambda_{\ell_1}}$ occurs in the time interval $(\tau_{r_1-1},\lambda_{\ell_1}]$. It is clear that $\sigma_{r_1}\ge \lambda_{\ell_1}$ by definition.
Consequently,
$(\sigma_{r_1},\tau_{r_1}]\subseteq(\lambda_{\ell_1},\beta_{\ell_1}].$ By Lemma \ref{lem:decomposition}(ii), the rejection at \(\tau_{r_1}\) is false under the refined law.

(ii) Next, we show that every uncharged rejection is true under the refined law. If $\tau_{r_2}$ is uncharged for some $r_2\in\N$, then there exists $\ell_2\in\{1,\ldots,t\}$, such that $\tau_{r_2}\in(\beta_{\ell_2}, \lambda_{\ell_2+1}]$. By \eqref{eq:reject_minimum} and \eqref{eq:stretch_begin}, we have $\tau_{r_2}=\theta_{\ell_2}=\lambda_{\ell_2+1}<U_{\ell_2}$. Hence, $Z_s>L_{\beta_{\ell_2}}$ for all $s\in (\beta_{\ell_2}, \tau_{r_2})$. This implies that $\sigma_{r_2}<a(\lambda_{\ell_2})\le \tau_{r_2}$. Therefore, the rejection at $\tau_{r_2}$ is true. Arguments (i) and (ii) yield the assertion $V^\sigma_t=\sum_{\ell=1}^{t+1}N_\ell$ for all $t\in\T$.

(iii) It remains to characterize refined true rejections. For $\ell\in\{1,\ldots,t\}$, if $\theta_\ell<U_\ell$, then $\omega\in G_{\ell+1}$ and $\beta_\ell<\lambda_{\ell+1}=\theta_\ell$ by \eqref{eq:stretch_begin} and Lemma \ref{lem:decomposition}(iv). Hence, the rejection at $\theta_{\ell}$ is uncharged and is thus true under the refined law by argument (ii). Conversely, if a rejection $\tau_{r_3}$, for some $r_3\in\N$, is true under the refined law, it is uncharged by argument (i). Hence, we have by argument (ii) that $\tau_{r_3}=\theta_{\ell_3}<U_{\ell_3}$ for some $\ell_3\in\{1,\ldots,t\}$ and thus $\ell_3\in\mathcal K$. The above argument implies that $K^\sigma_t=|\mathcal K|-2$.
\end{proof}

The following lemma shows that the initial value of every genuine clean interval is between $0$ and $1$. Moreover, if there is no true rejection between two consecutive genuine clean intervals, then the initial capital of the later clean interval does not rise above the terminal capital of the previous clean interval.

\begin{lemma}
\label{lem:epoch-drop}
The following statements are true.
\begin{enumerate}[(i)]
\item For every \(\ell\in\{1,\ldots,t+1\}\), $0\le X_\ell\le1$. 
\item For every \(\ell\in\{1,\ldots,t\}\), $X_{\ell+1}\le Q_\ell$ on $G_\ell\cap G_{\ell+1}\cap\{U_\ell<\theta_\ell\}$.
\end{enumerate}
\end{lemma}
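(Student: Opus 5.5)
Both parts reduce to unpacking the recursive definitions \eqref{eq:stretch_variables}--\eqref{eq:stretch_begin} together with the dynamics \eqref{eq:M_tilde} of the post-reset capital $Z$. I would prove (i) by treating the three ways $\lambda_\ell$ can arise, and (ii) by a short chain of inequalities involving the running minimum $L$.

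For (i), the plan is to split by how $\lambda_\ell$ was produced. On $G_\ell^c$ we have $X_\ell=0$ by definition. For $\ell=1$, $X_1=Z_0=1$. For $\ell\ge2$ on $G_\ell$, we have $\lambda_\ell=\theta_{\ell-1}\wedge U_{\ell-1}$, finite and at most $t$. If $\lambda_\ell=\theta_{\ell-1}$, then $\widetilde M_{\theta_{\ell-1}}\ge 1/\alpha$, so by \eqref{eq:M_tilde} $Z_{\lambda_\ell}=1$. If instead $\lambda_\ell=U_{\ell-1}$, then $Z_{\lambda_\ell}\le L_{\beta_{\ell-1}}$. Since $L_s=\min_{\tau_{R_s}\le u\le s}Z_u$ and $Z_{\tau_{R_s}}=1$ (with $\tau_0=0$ and $Z_0=1$), we get $L_s\le 1$. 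Hence $X_\ell=Z_{\lambda_\ell}\le 1$ in every case, and nonnegativity is immediate because $Z\ge0$.

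For (ii), work on $G_\ell\cap G_{\ell+1}\cap\{U_\ell<\theta_\ell\}$, so that $\lambda_{\ell+1}=U_\ell$. The definition of $U_\ell$ gives $X_{\ell+1}=Z_{U_\ell}\le L_{\beta_\ell}$. Since $L_{\beta_\ell}$ is a minimum over a window ending at $\beta_\ell$, it satisfies $L_{\beta_\ell}\le Z_{\beta_\ell}=Q_\ell$. Chaining these gives $X_{\ell+1}\le Q_\ell$.

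The only point needing care is that the window defining $L_{\beta_\ell}$, namely $[\tau_{R_{\beta_\ell}},\beta_\ell]$, genuinely contains $\beta_\ell$. This holds by definition of $R$ and $\tau$. The comparison between $Z_{U_\ell}$ and $L_{\beta_\ell}$ also needs no reasoning about resets between $\beta_\ell$ and $U_\ell$, because $U_\ell$ is defined directly through the inequality $Z_i\le L_{\beta_\ell}$. The condition in $U_\ell$ that $\widetilde M_q<1/\alpha$ before $U_\ell$ is not needed for this inequality; it only ensures that no reset intervenes, which is relevant elsewhere. I therefore expect no real obstacle. The main thing to check is that the degenerate case $a(\lambda_\ell)>t$, where the infimum is empty and $U_\ell=\infty$, is excluded on $G_{\ell+1}\cap\{U_\ell<\theta_\ell\}$.
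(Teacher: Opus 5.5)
Your proposal is correct and follows the paper's proof essentially verbatim. In (i) you use the same case split ($G_\ell^c$; $\ell=1$; $\lambda_\ell=\theta_{\ell-1}$ giving $Z=1$; $\lambda_\ell=U_{\ell-1}$ giving $Z\le L_{\beta_{\ell-1}}\le 1$), and in (ii) the same chain $Z_{U_\ell}\le L_{\beta_\ell}\le Z_{\beta_\ell}=Q_\ell$. You also add the justification $L_s\le Z_{\tau_{R_s}}=1$, which the paper states without comment.
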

\begin{proof}
(i) For any $\ell\in\{1,\ldots,t+1\}$, on $G^c_{\ell}$, \(X_\ell=0\) by definition. On $G_{\ell}$, $X_\ell=Z_{\lambda_\ell}\ge 0$. We prove the upper bound \(X_\ell\le1\) on $G_\ell$ for any $\ell\in\{1,\ldots,t+1\}$.  First, we have $X_1=Z_0=1$. Next, for all $\ell\in\{1,\ldots,t\}$, on $G_{\ell+1}$, if $\theta_\ell<U_\ell$, then $X_{\ell+1}=Z_{\lambda_{\ell+1}}=Z_{\theta_\ell}=1$. If $\theta_\ell>U_\ell$, then $X_{\ell+1}=Z_{U_{\ell}}\le L_{\beta_\ell}\le 1$. Hence, $0\le X_\ell\le 1$.

(ii) Fix $\ell\in\{1,\ldots,t\}$. Suppose that the \((\ell+1)\)-st
genuine clean interval is generated by the event \(U_\ell<\theta_\ell\). It follows that
$$X_{\ell+1}=Z_{\lambda_{\ell+1}}=Z_{U_{\ell}}\le L_{\beta_\ell}\le Z_{\beta_\ell}=Q_{\ell}.$$
This completes the proof.
\end{proof}

The next lemma uses
Lemma~\ref{lem:epoch-drop} to turn the weighted capital drop
\(X_{\ell+1}\le Q_\ell\) into the bound used in the final FDR estimate.

\begin{lemma}
\label{lem:charge}
For $t\in\T$, we have $$\sum_{\ell=1}^{t+1}
        \frac{X_\ell-Q_\ell}{R_t-N_\ell+1}
        \le 1.$$
\end{lemma}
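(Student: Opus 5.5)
The plan is to group the genuine clean intervals into maximal runs separated by the break indices in $\mathcal K$, telescope $X_\ell-Q_\ell$ within each run, and then bound the weights $\gamma_\ell=(R_t-N_\ell+1)^{-1}$ using Lemma~\ref{lem:classification}. Dummy intervals contribute nothing, since there $X_\ell=Q_\ell=0$. Write the genuine indices as $\ell=1,\dots,L$; by Lemma~\ref{lem:decomposition}(iv) they form an initial segment. Order the elements of $\mathcal K\cap\{1,\dots,L\}$, together with $0$ and $L$, as $0=k_0<k_1<\dots<k_m=L$. Each $k_j$ with $j<m$ corresponds to a refined true rejection, so Lemma~\ref{lem:classification} gives $m-1\le K^\sigma_t$. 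Consider a run $\ell\in\{k_{j-1}+1,\dots,k_j\}$. For consecutive indices inside the run, $U_\ell<\theta_\ell$, so Lemma~\ref{lem:epoch-drop}(ii) gives $X_{\ell+1}\le Q_\ell$. The start of the run has $X_{k_{j-1}+1}\le 1$ by Lemma~\ref{lem:epoch-drop}(i), and the end has $Q_{k_j}\ge0$.

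Next I handle the weights. If $V^\sigma_t=0$, every $N_\ell=0$, so all weights equal $(R_t+1)^{-1}$. Telescoping each run then gives a total of at most $m/(R_t+1)$, and we need $m\le R_t+1$. This holds: here $R_t=K^\sigma_t\ge m-1$. If $V^\sigma_t>0$, I cannot telescope directly because the weights vary. Instead I use the partial-sum form
\[
\sum_{\ell\in\text{run}}\gamma_\ell(X_\ell-Q_\ell)\le \max_{\ell\in\text{run}}\gamma_\ell\cdot\Big(X_{\text{first}}-\sum(\text{drops}\ldots)\Big).
\]
This is not straightforward, since terms $X_\ell-Q_\ell$ can be negative. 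The cleaner route is an Abel-summation-type bound. Write $X_\ell-Q_\ell\le (X_\ell-X_{\ell+1})$ within the run, with $X_{\ell+1}$ replaced by $0$ at the run's end. Then $\sum\gamma_\ell(X_\ell-Q_\ell)\le\sum_\ell \gamma_\ell(X_\ell-X_{\ell+1})$. The difficulty is that $X_\ell-X_{\ell+1}$ may be negative when weights differ.

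To deal with this, I split the weights. For $\ell$ with $N_\ell=0$, $\gamma_\ell=(R_t+1)^{-1}$. For $N_\ell>0$, $\gamma_\ell\le N_\ell/R_t$, and $\gamma_\ell\ge (R_t+1)^{-1}$. The key is to bound each term by $(R_t+1)^{-1}(X_\ell-Q_\ell)+(\gamma_\ell-(R_t+1)^{-1})(X_\ell-Q_\ell)_+$. The first part telescopes per run to at most $m/(R_t+1)\le (K^\sigma_t+1)/(R_t+1)$. For the second part, use $(X_\ell-Q_\ell)_+\le X_\ell\le1$ and $\gamma_\ell-(R_t+1)^{-1}\le N_\ell/R_t-(R_t+1)^{-1}$ when $N_\ell>0$. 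Summing over $\ell$ with $N_\ell>0$ gives at most $V^\sigma_t/R_t-\#\{\ell:N_\ell>0\}/(R_t+1)$. Since $\#\{N_\ell>0\}\ge1$, the total is
\[
\frac{K^\sigma_t+1}{R_t+1}+\frac{V^\sigma_t}{R_t}-\frac{1}{R_t+1}=\frac{K^\sigma_t}{R_t+1}+\frac{V^\sigma_t}{R_t}\le \frac{K^\sigma_t+V^\sigma_t}{R_t}=1,
\]
matching the sketch \eqref{eq:map4}.

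The main obstacle is making the telescoping bound $\sum_{\text{run}}(X_\ell-Q_\ell)\le X_{\text{first}}\le 1$ rigorous. This requires checking that $X_{\ell+1}\le Q_\ell$ holds for every non-break consecutive pair, and that each break index starts a run whose first $X$ is at most $1$. I also need the run count to satisfy $m\le K^\sigma_t+1$, which depends on the exact indexing of $\mathcal K$ through $|\mathcal K|-2=K^\sigma_t$. The edge case $R_t=0$ also needs care: then $V^\sigma_t=0$ and the first case applies.
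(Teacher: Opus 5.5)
Your argument is correct. It differs from the paper's in how it handles the variable weights $\gamma_\ell=(R_t-N_\ell+1)^{-1}$.

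The paper keeps these weights inside each block $\mathcal I_j$ and performs an Abel summation. It bounds the block's contribution by the positive variation $\gamma_{\pi_{j-1}+1}+\sum_{\ell}(\gamma_\ell-\gamma_{\ell-1})_+$. It then controls this variation by $(R_t+1)^{-1}$ for blocks without false rejections and by $\sum_{N_\ell>0}\gamma_\ell$ otherwise (see \eqref{eq:epoch-charge}--\eqref{eq:epoch-variation-bound}), treating $V^\sigma_t=0$ and $V^\sigma_t>0$ as separate cases.

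You instead split $\gamma_\ell=c+(\gamma_\ell-c)$ with $c=(R_t+1)^{-1}\le\gamma_\ell$. You telescope only the constant-weight part, which needs just $X_{\ell+1}\le Q_\ell$ inside runs (Lemma~\ref{lem:epoch-drop}(ii)) and the fact that each run starts with capital at most $1$. You bound the nonnegative excess crudely via $(X_\ell-Q_\ell)_+\le X_\ell\le 1$.

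What your route buys:
\begin{itemize}
\item It is more elementary and handles both cases at once, since the excess vanishes when $V^\sigma_t=0$.
\item Its intermediate bound $(K^\sigma_t+1-p)\,c+\sum_{N_\ell>0}\gamma_\ell$, where $p=\#\{\ell:N_\ell>0\}$, is never worse than the paper's. This is because $p$ is at least the number of blocks containing a false rejection.
\item The paper's Abel form retains finer information about the capital levels $X_\ell$, but none of it is needed to reach $K^\sigma_t/(R_t+1)+V^\sigma_t/R_t\le 1$.
\end{itemize}

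For a clean write-up:
\begin{itemize}
\item Delete the abandoned attempts, namely the display with $\max_{\ell}\gamma_\ell$ and the rewriting via $X_\ell-X_{\ell+1}$.
\item Justify $c\le\gamma_\ell\le N_\ell/R_t$ (the upper bound when $N_\ell\ge1$) using $0\le N_\ell\le R_t$ and $(R_t-N_\ell)(N_\ell-1)\ge 0$.
\item Say that only $k_1,\dots,k_{m-1}$, not $k_0=0$, correspond to refined true rejections. This gives $m-1\le K^\sigma_t$, in fact with equality.
\end{itemize}
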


\begin{proof}
By Lemma~\ref{lem:classification}, the number of false rejections up to time $t\in\T$ is $V^\sigma_t=\sum^{t+1}_{\ell=1}N_\ell$. Hence, the number of true rejections up to $t$ is $K^\sigma_t=R_t-V^\sigma_t=R_t-\sum^{t+1}_{\ell=1}N_\ell$.

Because $\mathcal K$ in \eqref{eq:break_set} is a finite set, let $0=\pi_0<\pi_1<\cdots<\pi_J=t+1$
be the ordered elements of \(\mathcal K\), where $J=|\mathcal K|-1$. By Lemma \ref{lem:classification}, $J=K^\sigma_t+1$. For \(j\in\{1,\ldots,J\}\), put
$\mathcal I_j=\{\pi_{j-1}+1,\ldots,\pi_j\}$.
For every \(j\in\{1,\ldots,J\}\) and $\ell\in \I_j\setminus\{\pi_j\}$, we have $U_\ell<\theta_\ell$ on $G_{\ell+1}$, and $X_{\ell+1}=0\le Q_\ell$ on $G^c_{\ell+1}$.
Therefore, Lemma~\ref{lem:epoch-drop}(ii) gives
$X_{\ell+1}\le Q_\ell$.

Fix $j\in\{1,\dots,J\}$. Put $\gamma_\ell=\gamma_\ell(j)=(R_t-N_\ell+1)^{-1}$ for $\ell\in\I_j$. We have
\begin{equation} \begin{aligned} \sum_{\ell=\pi_{j-1}+1}^{\pi_j}\gamma_\ell(X_\ell-Q_\ell)&= \gamma_{\pi_{j-1}+1}X_{\pi_{j-1}+1}+\sum_{\ell=\pi_{j-1}+2}^{\pi_j}\gamma_\ell X_\ell -\sum_{\ell=\pi_{j-1}+1}^{\pi_j-1}\gamma_\ell Q_\ell-\gamma_{\pi_j}Q_{\pi_j} \\
&\le \gamma_{\pi_{j-1}+1}X_{\pi_{j-1}+1}+\sum_{\ell=\pi_{j-1}+2}^{\pi_j}(\gamma_\ell-\gamma_{\ell-1})X_\ell -\gamma_{\pi_j}Q_{\pi_j} \\
&\le \gamma_{\pi_{j-1}+1}+\sum_{\ell=\pi_{j-1}+2}^{\pi_j}(\gamma_\ell-\gamma_{\ell-1})_+, \end{aligned} \label{eq:epoch-charge} 
\end{equation}
where the first inequality uses the fact that $X_{\ell+1}\le Q_\ell$ for all $\ell\in\I_j\setminus\{\pi_j\}$, and the second inequality uses the fact that $0\le X_\ell\le 1$ for all $\ell\in\I_j$ by Lemma \ref{lem:epoch-drop}(i) and $Q_{\pi_j}\ge 0$. Next, we claim that
\begin{equation}
\gamma_{\pi_{j-1}+1}+\sum_{\ell=\pi_{j-1}+2}^{\pi_j}(\gamma_\ell-\gamma_{\ell-1})_+\le \id_{\left\{\sum_{\ell\in\I_j}N_\ell=0\right\}} \frac{1}{R_t+1} + \sum_{\ell\in\mathcal I_j:\,N_\ell>0} \frac{1}{R_t-N_\ell+1}. \label{eq:epoch-variation-bound}
\end{equation}
Indeed, if $\sum_{\ell\in\I_j}N_\ell=0$, then $\gamma_\ell=(R_t+1)^{-1}$ for all $\ell\in\I_j$ and thus the left-hand side of \eqref{eq:epoch-variation-bound} equals $(R_t+1)^{-1}$; whereas if
$\ell^*_j=\min\{\ell\in\I_j:N_\ell>0\}<\infty$, then
$$\begin{aligned}
    \gamma_{\pi_{j-1}+1}+\sum_{\ell=\pi_{j-1}+2}^{\pi_j}(\gamma_\ell-\gamma_{\ell-1})_+&=\gamma_{\ell^*_j}+\sum_{\ell^*_j<\ell\le\pi_j:N_\ell>0}(\gamma_\ell-\gamma_{\ell-1})_+\\
    &\le \gamma_{\ell^*_j}+\sum_{\ell^*_j<\ell\le\pi_j:N_\ell>0}\gamma_\ell=\sum_{\ell^*_j\le\ell\le\pi_j:N_\ell>0}\gamma_\ell=\sum_{\ell\in\mathcal I_j:\,N_\ell>0} \frac{1}{R_t-N_\ell+1}.
\end{aligned}$$
Hence, \eqref{eq:epoch-variation-bound} holds.

If $V^\sigma_t=0$, then $\sum^{t+1}_{\ell=1}N_\ell=0$ and $R_t=K^\sigma_t=J-1$. Summing \eqref{eq:epoch-charge} and \eqref{eq:epoch-variation-bound} over \(j\in\{1,\ldots,J\}\) gives
\[ \sum_{\ell=1}^{t+1} \frac{X_\ell-Q_\ell}{R_t-N_\ell+1} \le \frac{R_t+1}{R_t+1} = 1. \]
If \(V^\sigma_t>0\), then at least one index block $\I_j$ contains a false rejection (i.e.~$\ell^*_j<\infty$). Thus, because \(J=K^\sigma_t+1\), at most \(K^\sigma_t\) index blocks have no false rejections (i.e.~$\sum_{\ell\in\I_j}N_\ell=0$).
Moreover, for \(N_\ell\ge1\), because $N_\ell(R_t-N_\ell+1)-R_t=(R_t-N_\ell)(N_\ell-1)\ge 0$, we have $(R_t-N_\ell+1)^{-1}\le N_\ell/R_t$. Therefore, summing \eqref{eq:epoch-variation-bound} over \(j\in\{1,\ldots,J\}\),
\[ \sum_{\ell=1}^{t+1} \frac{X_\ell-Q_\ell}{R_t-N_\ell+1} \le \frac{K^\sigma_t}{R_t+1}+\frac{V^\sigma_t}{R_t} \le1 . \]
This completes the proof.
\end{proof}

\subsubsection{Conclusion of the proof}

\begin{proof}[Proof of Theorem~\ref{thm:localized-main}]
Fix \(t\in\T\). Let
\[
        \mathbf E^A=(E_u)_{u=1,\dots,t,\,u\notin \H_0} ~~\mbox{and}~~\mathcal A=\sigma(\mathbf E^A).
\]
Hence, $\mathcal A$ is the \(\sigma\)-field generated by the alternative e-values up to time
\(t\). For $\mathcal L(\mathbf E^A)$-almost every realization $\mathbf e=(e_u)_{u=1,\dots,t,\,u\notin \H_0}$, all alternative e-values are fixed. By Assumption \ref{ass:ind}, the null e-values $(E_u)_{u=1,\dots,t,\,u\in \H_0}$ are mutually independent and are independent of $\mathbf E^A$. Hence, the conditional law of $(E_1,\dots,E_t)$ given $\mathbf E^A=\mathbf e$ is the product law
\[
\mathbb Q_{\mathbf e} = \bigotimes_{u=1}^t \mu_u^{\mathbf e},\quad\mbox{where}~~\mu_u^{\mathbf e} = \begin{cases} \mathcal L(E_u), & u\in\H_0,\\ \delta_{e_u}, & u\notin\H_0. \end{cases}
\]
It follows that $\E^{\mathbb Q_{\mathbf e}}[E_u]\le 1$ for all $u\in\H_0\cap\{1,\ldots,t\}$. We first prove the bound under the conditional law $\mathbb Q_{\mathbf e}$.

Fix \(\ell\in\{1,\ldots,t+1\}\). For \(s=0,\dots,t\), let \(\Pi_s=\otimes^t_{u=s+1}\mu^{\mathbf e}_u\) be the product law of
\((E_{s+1},\ldots,E_t)\) under $\mathbb Q_{\mathbf e}$. Define the random kernel
\begin{equation*}
        g_\ell(c,\omega)
        =
        \sum_{s=0}^{t}\id_{\{\lambda_\ell=s\}}(\omega)
        \int
        \frac{1}{R_s(\omega)+A_{t-b(s)}(c,\mathbf u)+1}
        \,\Pi_{b(s)}(\mathrm{d}\mathbf u),~~c\in[0,1/\alpha],~\omega\in\Omega,
\end{equation*}
where $A_{t-b(s)}(c,\mathbf u)$ is as in \eqref{eq:reset-maps} with $s=0,\dots,t$, $c\in[0,1/\alpha]$, and $\mathbf u=(u_{b(s)+1},\dots,u_t)\in[0,\infty]^{t-b(s)}$. By Lemma \ref{lem:monotone}, the mapping $c\mapsto A_{t-b(s)}(c,\mathbf u)$ is increasing on $[0,1/\alpha]$. By Lemma \ref{lem:decomposition}(i), \(\lambda_\ell\) is a stopping time and $R_s$ is $\mathcal G_s$-measurable for $s=0,\dots,t$.  Therefore,
\((c,\omega)\mapsto g_\ell(c,\omega)\) is bounded, jointly \(\mathcal B([0,1/\alpha])\otimes\mathcal G_{\lambda_\ell}\)-measurable, and decreasing in $c$.
By Lemma~\ref{lem:decomposition}(iii), \(X_\ell\) is  \(\mathcal G_{\lambda_\ell}\)-measurable, and $Q_\ell,N_\ell$ are \(\mathcal G_{\beta_\ell}\)-measurable.  Because
Lemma~\ref{lem:epoch-drop}(i) gives \(0\le X_\ell\le 1\), applying
Lemma~\ref{lem:clean-start} with \(S=\lambda_\ell\),
\(X=X_\ell\), \(Q=Q_\ell\), \(N=N_\ell\), and
\(w(c,\omega)=g_\ell(c,\omega)\) gives
\begin{equation*}
        \E^{\mathbb Q_{\mathbf e}}\left[
        g_\ell(Q_\ell)\bigl(Q_\ell+(\alpha^{-1}-1)N_\ell\bigr)
        \,\middle|\,
        \mathcal G_{\lambda_\ell}
        \right]
        \le
        X_\ell
        \,\E^{\mathbb Q_{\mathbf e}}\left[
        g_\ell(Q_\ell)
        \,\middle|\,
        \mathcal G_{\lambda_\ell}
        \right]~~\mbox{$\mathbb Q_{\mathbf e}$-a.s.},
\end{equation*}
which is equivalent to
\begin{equation}
        \E^{\mathbb Q_{\mathbf e}}\left[
        N_\ell g_\ell(Q_\ell)
        \,\middle|\,
        \mathcal G_{\lambda_\ell}
        \right]
        \le
        \frac{\alpha}{1-\alpha}
        \E^{\mathbb Q_{\mathbf e}}\left[
        (X_\ell-Q_\ell)g_\ell(Q_\ell)
        \,\middle|\,
        \mathcal G_{\lambda_\ell}
        \right]~~\mbox{$\mathbb Q_{\mathbf e}$-a.s.}
        \label{eq:stretch-fdr-bound}
\end{equation}

On the actual path,
\begin{equation*}
        R_t
        =
        R_{\lambda_\ell}
        +N_\ell
        +A_{t-\beta_\ell}
        (Q_\ell,E_{\beta_\ell+1},\ldots,E_t).
\end{equation*}
Hence,
\begin{equation}
        \frac{N_\ell}{R_t\vee1}
        \le
        \frac{N_\ell}{R_{\lambda_\ell}
        +A_{t-\beta_\ell}(Q_\ell,E_{\beta_\ell+1},\ldots,E_t)+1}=:Y_\ell,
        \label{eq:fraction-to-suffix}
\end{equation}
because \(R_t\vee1\ge R_{\lambda_\ell}
+A_{t-\beta_\ell}(Q_\ell,E_{\beta_\ell+1},\ldots,E_t)+1\) whenever
\(N_\ell\ge1\), and both sides vanish when \(N_\ell=0\).

We claim that
\begin{equation} \label{eq:suffix-average-general} \E^{\mathbb Q_{\mathbf e}}\left[Y_\ell\,\middle|\,\mathcal G_{\beta_\ell}\right] = N_\ell g_\ell(Q_\ell) \qquad\mbox{$\mathbb Q_{\mathbf e}$-a.s.}
\end{equation}
Indeed, let \(H\) be an arbitrary bounded \(\mathcal G_{\beta_\ell}\)-measurable random variable. On the event \(\{\lambda_\ell=s\}\), we have $\beta_\ell=b(s)$ and $R_{\lambda_\ell}=R_s$.
Hence, using the independence of \((E_{b(s)+1},\ldots,E_t)\) from \(\mathcal G_{b(s)}\), we obtain
\[ \begin{aligned}
\E^{\mathbb Q_{\mathbf e}}[H Y_\ell] &= \sum_{s=0}^{t} \E^{\mathbb Q_{\mathbf e}}\left[ H\id_{\{\lambda_\ell=s\}} \frac{N_\ell}{ R_s+ A_{t-b(s)} \bigl(Q_\ell,(E_{b(s)+1},\ldots,E_t)\bigr) +1} \right] \\ &= \sum_{s=0}^{t} \E^{\mathbb Q_{\mathbf e}}\left[ H\id_{\{\lambda_\ell=s\}} N_\ell \int \frac{1}{ R_s+ A_{t-b(s)}(Q_\ell,\mathbf u) +1} \,\Pi_{b(s)}(\mathrm d\mathbf u) \right]= \E^{\mathbb Q_{\mathbf e}}\left[ H N_\ell g_\ell(Q_\ell) \right].
\end{aligned} \]
Therefore, \eqref{eq:suffix-average-general} follows.
Similarly,
\begin{equation}
        \E^{\mathbb Q_{\mathbf e}}\left[
        \frac{X_\ell-Q_\ell}{R_{\lambda_\ell}
        +A_{t-\beta_\ell}(Q_\ell,E_{\beta_\ell+1},\ldots,E_t)+1}
        \,\middle|\,
        \mathcal G_{\beta_\ell}
        \right]
        =
        (X_\ell-Q_\ell)g_\ell(Q_\ell)~~\mbox{$\mathbb Q_{\mathbf e}$-a.s.}
        \label{eq:suffix-average-charge}
\end{equation}
Combining \eqref{eq:fraction-to-suffix}, \eqref{eq:suffix-average-general},
\eqref{eq:suffix-average-charge}, and \eqref{eq:stretch-fdr-bound} yields
\begin{equation}
        \E^{\mathbb Q_{\mathbf e}}\left[
        \frac{N_\ell}{R_t\vee1}
        \right]
        \le
        \frac{\alpha}{1-\alpha}
        \E^{\mathbb Q_{\mathbf e}}\left[
        \frac{X_\ell-Q_\ell}{R_t-N_\ell+1}
        \right].
        \label{eq:one-stretch-fdr}
\end{equation}
Summing \eqref{eq:one-stretch-fdr} over all $\ell=1,\dots,t+1$ and using
Lemmas~\ref{lem:classification} and \ref{lem:charge},
\[
\begin{aligned}
        \E^{\mathbb Q_{\mathbf e}}\left[\frac{V^\sigma_t}{R_t\vee1}\right]
        =
        \sum_{\ell=1}^{t+1}
        \E^{\mathbb Q_{\mathbf e}}\left[\frac{N_\ell}{R_t\vee1}\right]                    \le
        \frac{\alpha}{1-\alpha}
        \E^{\mathbb Q_{\mathbf e}}\left[
        \sum_{\ell=1}^{t+1}
        \frac{X_\ell-Q_\ell}{R_t-N_\ell+1}
        \right]
        \le
        \frac{\alpha}{1-\alpha}.
\end{aligned}
\]
Integrating over $\mathbf e$ gives $\FDR^{\mathrm{ref}}_t\le \alpha/(1-\alpha)$.
Together with the trivial bound \(\FDR^{\mathrm{ref}}_t\le1\), this proves the result.
\end{proof}

\subsection{Proofs of Theorems \ref{thm:FDR_barrier} and \ref{cor:rejection-count-and-mfdr}}
\label{app:sec:proof_thm2}

In this section, we prove Theorem \ref{thm:FDR_barrier} and Theorem \ref{cor:rejection-count-and-mfdr}. To control the FDR, it remains to control the number of consecutive false discoveries in between two adjacent true rejections. The key observation is that, once we condition on the past up to a true rejection, every additional false rejection before the next true one must come from a reported refined interval that remains entirely under the null and nevertheless crosses the threshold $1/\alpha$.

More precisely, we denote the stopping times where refined true rejections occur by \begin{equation*}%
    \kappa^\sigma_j=\min\{s\in\T:K^\sigma_s=j\}\quad~~\mbox{for}~j\in\N_0.
\end{equation*}
Between each pair of adjacent true rejections $\kappa^\sigma_{j-1}$ and $\kappa^\sigma_j$, we count the number of false positives by
\begin{align}
    N^\sigma_j(t)=R_{\kappa^\sigma_j\wedge t}-R_{\kappa^\sigma_{j-1}\wedge t}
-\id_{\{\kappa^\sigma_{j}\le t\}}=\begin{cases}
        0&\text{ if }t<\kappa^\sigma_{j-1};\\
        R_t-R_{\kappa^\sigma_{j-1}}&\text{ if }\kappa^\sigma_{j-1}\leq t<\kappa^\sigma_{j};\\
        R_{\kappa^\sigma_j}-R_{\kappa^\sigma_{j-1}}-1&\text{ if }t\geq \kappa^\sigma_{j},
    \end{cases}\quad j\in\N,~t\in\T.
    \label{eq:num_false}
    \end{align}
It is clear that each $N^\sigma_j(t)$ is $\mathcal{F}_{\kappa^\sigma_j\wedge t}$-measurable and 
$V^\sigma_t=\sum_{j\in\N}N^\sigma_j(t)$.
For any $r\in\N$ and $t\in\T$, define the event that the $r$-th rejection is false by
\begin{align}
       A^\sigma_{r}(t)=\{\tau_{r}\le t,~\nu^\sigma_{r}>\tau_{r}\}\in\mathcal F_{\tau_{r}}.\label{eq:Ar}
\end{align}

In the following result, we prove a refined version of Ville's inequality. Consequently, each false rejection has a conditional probability at most $\alpha$.

\begin{lemma}\label{lem:ville}
For every $r\in\N$, $t\in\T$, and $\alpha\in(0,1)$, the sequential resetting procedure with $\gamma=1/\alpha$ satisfies $\p\!\left(A_r^\sigma(t)\,\middle|\,\mathcal F_{\tau_{r-1}}\right)\le \alpha$ a.s.
\end{lemma}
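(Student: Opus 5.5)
The idea is to work on $\{\tau_{r-1}<\infty\}$, given $\mathcal F_{\tau_{r-1}}$, with the block-normalized process $M^{[r]}$. It starts at $M^{[r]}_0=1$ and multiplies by $E_{\tau_{r-1}+s}$ at each step until it first reaches $\gamma=1/\alpha$. On $\{\tau_{r-1}=\infty\}$ the event $A^\sigma_r(t)$ is empty, so there is nothing to prove. The difficulty is that $\sigma_r$ (the last time $M^{[r]}$ attains its block minimum) is not a stopping time, so Ville's inequality cannot be applied directly on $(\sigma_r,\tau_r]$. I will replace $\sigma_r$ by a sequence of genuine stopping times, namely successive record-low times of $M^{[r]}$, and show that the costs of the stages telescope.

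\textbf{Stages.} Set $\rho_0=\tau_{r-1}$ and $m_0=M^{[r]}_0=1$. Given the stopping time $\rho_k$ with level $m_k=\widetilde M_{\rho_k}$ (interpreting $\widetilde M_{\tau_{r-1}}$ as $1$), stage $k$ runs until the exit time
\[
\eta_k=\min\{u>\rho_k:\ \widetilde M_u\ge\gamma \text{ or } \widetilde M_u\le m_k \text{ or } \p_u\notin\mathcal P_u\}\wedge t .
\]
If the exit is by hitting $\gamma$, call stage $k$ a success. Otherwise, let $\rho_{k+1}$ be the first time after $\rho_k$ (before $\tau_r\wedge t$) at which $\widetilde M\le m_k$, with level $m_{k+1}=\widetilde M_{\rho_{k+1}}\le m_k$. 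After an alternative exit, the process may wander above $m_k$ before returning.

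\textbf{Claim: $A^\sigma_r(t)$ is contained in the union over $k$ of $\{\text{stage } k \text{ succeeds with all times in } (\rho_k,\tau_r] \text{ null}\}$.} The time $\sigma_r$ is a weak record-low time, and after it the process stays strictly above $M^{[r]}_{\sigma_r}$ until $\tau_r$. So $\sigma_r=\rho_k$ for the last stage index $k$ reached. On $A^\sigma_r(t)$ the interval $(\sigma_r,\tau_r]$ is all null, so that stage exits by hitting $\gamma$. Conversely, before $\sigma_r$ no stage can succeed, because the process must still return to a level $\le m_k$.

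\textbf{Stage bound and telescoping.} The events $\{\p_u\notin\mathcal P_u\}$ are $\mathcal F_{u-1}$-measurable, since $\mathcal P_u$ depends on $X_1,\dots,X_{u-1}$. Therefore $\widetilde M$ stopped at $\eta_k$ is a supermartingale given $\mathcal F_{\rho_k}$. The alternative step is included in $\eta_k$, but the bound only needs the null steps strictly before an alternative exit, so I stop just before that step. Optional stopping then gives
\[
\gamma\,\p(\text{succ}_k\mid\mathcal F_{\rho_k})+\E\bigl[\widetilde M_{\eta_k}\id_{\{\text{return exit}\}}\mid\mathcal F_{\rho_k}\bigr]\le m_k ,
\]
where the alternative-exit term is dropped because it is nonnegative. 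On a return exit, $\widetilde M_{\eta_k}=m_{k+1}$. On an alternative exit, a later stage $k+1$ starts at the level $m_{k+1}\le m_k$. I then use the cruder bound $\p(\text{succ}_k\mid\mathcal F_{\rho_k})\le\alpha\bigl(m_k-\E[m_{k+1}\id_{\{\rho_{k+1}\text{ exists}\}}\mid\mathcal F_{\rho_k}]\bigr)$. This holds because on an alternative exit the subtracted term is at most the $m_k$ budget, and I will check it by restarting optional stopping from the alternative-exit time up to the return to $\le m_k$, using only the inequality $m_{k+1}\le m_k$.

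Summing over $k$ and taking expectations given $\mathcal F_{\tau_{r-1}}$, the sum telescopes to at most $\alpha m_0=\alpha$.

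\textbf{Main obstacle.} The alternative-exit case is the delicate step, since there the telescoping level $m_{k+1}$ is not produced by the stopped supermartingale. If the restart argument fails, I would first try merging stage $k$ with the alternative excursion into one stage whose failure exit is ``first return to $\le m_k$''. In that merged stage a success must begin only after the excursion, so I would apply the bound conditional on $\mathcal F$ at the alternative time, where the capital is at least $m_k$, and the success probability is then controlled by Ville's inequality started from the post-alternative record low.
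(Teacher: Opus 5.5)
Your architecture matches the paper's. Your $\rho_k$ are its running-minimum stopping times $\theta_{r,k}$. You stop each stage at the first of rejection, next record low, the time just before the first alternative, and $t$. Your forward telescoping is the paper's backward induction on $k$, unrolled. The gap is exactly at the step you flag. Once you drop the alternative-exit term, optional stopping gives only
$\gamma\,\p(\mathrm{succ}_k\mid\mathcal F_{\rho_k})\le m_k-\E[m_{k+1}\id_{\{\text{return exit}\}}\mid\mathcal F_{\rho_k}]$.
The bound you then invoke also subtracts $\E[m_{k+1}\id_{\{\text{alternative exit},\ \rho_{k+1}\text{ exists}\}}\mid\mathcal F_{\rho_k}]$. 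So it is stronger, not ``cruder'', and it does not follow from what you derived.

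Your proposed justification, restarting optional stopping from the alternative-exit time until the return to $\le m_k$, fails. Write $\nu_k$ for the first non-null time after $\rho_k$. The multiplier at $\nu_k$ is unconstrained, and further alternatives may occur before the return. So $\widetilde M$ is not a supermartingale on that stretch, and the post-alternative capital $\widetilde M_{\nu_k}$ is not controlled by $m_k$ in any case. The merged-stage fallback adds nothing new. Any false refined rejection after the excursion must start at a later record low, so it is a success of a later stage, and you are back to the same accounting.

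You already have both ingredients: $m_{k+1}\le m_k$, and capital at least $m_k$ just before the alternative. What is missing is that they combine pathwise against the term you discarded. Stop at $\eta_k'=\tau_r\wedge\rho_{k+1}\wedge(\nu_k-1)\wedge t$ and keep every term in $\E[\widetilde M_{\eta_k'}\mid\mathcal F_{\rho_k}]\le m_k$.

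On the alternative-exit event $\{\eta_k'=\nu_k-1<\rho_{k+1}\wedge\tau_r\wedge t\}$, the process has not returned to $\le m_k$ on $(\rho_k,\nu_k-1]$. Hence $\widetilde M_{\nu_k-1}\ge m_k\ge m_{k+1}$ whenever $\rho_{k+1}$ exists. The discarded term therefore dominates the extra subtraction. Combine this with $\widetilde M_{\eta_k'}=m_{k+1}$ on a return exit and $\widetilde M_{\eta_k'}\ge\gamma$ on success, and you get exactly your per-stage bound. Your telescoping then goes through: there are finitely many stages before a finite $t$, or you can use monotone convergence.

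This is precisely the paper's fourth case. There the running-minimum property $M^{[r]}_{\theta_{r,k+1}-\tau_{r-1}}\le M^{[r]}_{\nu_{r,k}-1-\tau_{r-1}}$ lets the induction bound at the next record low be charged to the capital just before the alternative. No probabilistic argument across the alternative step is needed.
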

\begin{proof}
Define the sequence
\((\theta_{r,k})_{k\in\N_0}\) of increasing stopping times by $\theta_{r,0}=\tau_{r-1}$
and
\begin{equation}
\label{eq:theta_rk}
\theta_{r,k+1}=\min\left\{
u\in\T:\theta_{r,k}<u<\tau_r,~M^{[r]}_{u-\tau_{r-1}}
=\min_{0\le s\le u-\tau_{r-1}}M^{[r]}_s\right\},~~k\in\N_0.
\end{equation}
By definition, \((\theta_{r,k})_{k\in\N_0}\) describes the sequence of running minima in the block $r$.
It follows from \eqref{eq:sigma} that
\begin{align*}
    \sigma_r
    =
    \max\{\theta_{r,k}:\theta_{r,k}\le \tau_r\}\id_{\{\tau_r<\infty\}}
    +
    \tau_{r-1}\id_{\{\tau_r=\infty\}},
    \qquad r\in\N.
\end{align*}
For $k\in\N_0$, define
\[
G_{r,k}(t)=A_r^\sigma(t)\cap \bigcup_{\ell\ge k}\{\sigma_r=\theta_{r,\ell}\}.
\]
Then $G_{r,0}(t)=A_r^\sigma(t)$. Moreover, because $\theta_{r,k}\ge k$ a.s.~by \eqref{eq:theta_rk}, we have $\theta_{r,t+1}>t$ a.s., and thus $G_{r,t+1}(t)=\emptyset$. We claim that
\begin{equation}
\label{eq:thm2_claim}
\p\!\left(G_{r,k}(t)\,\middle|\,\mathcal F_{\theta_{r,k}}\right)
\le
\alpha\,M^{[r]}_{\theta_{r,k}-\tau_{r-1}}
\qquad \mbox{on } \{\theta_{r,k}\le t\}.
\end{equation}
We may assume $t<\infty$ because in the case of $\T=\N_0$, the sequence of events $(A^\sigma_r(t))_{t\in\T}$ is increasing and converges to $\{\tau_r<\infty,\ \nu_r^\sigma>\tau_r\}
=
\bigcup_{t\in\N_0} A_r^\sigma(t)$.  
We apply backward induction on $k=t+1,t,\dots,0$. For $k=t+1$, the claim is trivial because $G_{r,t+1}(t)=\emptyset$. Fix $k\in\{0,\dots,t\}$ and suppose that \eqref{eq:thm2_claim} holds with $k+1$ in place of $k$.
If $\theta_{r,k}>\tau_r\wedge t$, then $G_{r,k}(t)=\emptyset$. Therefore, it suffices to work on the event $\{\theta_{r,k}\le \tau_r\wedge t\}$.
Define
\begin{align*}
    \nu_{r,k}=\min\{u\in\T_{\geq \theta_{r,k}+1}:\p_u\notin\mathcal P_u\}\qquad\mbox{and}\qquad
\eta_{r,k}(t)=\tau_r\wedge \theta_{r,k+1}\wedge(\nu_{r,k}-1)\wedge t.
\end{align*}
Note that $\nu_{r,k}-1$ is a stopping time, because $\{\p_u\notin\mathcal P_u\}\in\mathcal F_{u-1}$. It then follows that $\eta_{r,k}(t)$ is also a stopping time.  
Because $\eta_{r,k}(t)<\nu_{r,k}$ a.s., conditioned on $\mathcal F_{\theta_{r,k}}$, the process
\[
L^{[r,k]}_u=
\frac{M^{[r]}_{\theta_{r,k}+u-\tau_{r-1}}}
     {M^{[r]}_{\theta_{r,k}-\tau_{r-1}}},
\qquad
u=0,1,\dots,(\eta_{r,k}(t)-\theta_{r,k})\id_{\{\theta_{r,k}<\infty\}},
\]
is a supermartingale with initial value $1$, where we set $L^{[r,k]}_u=1$ whenever $M^{[r]}_{\theta_{r,k}-\tau_{r-1}}=0$. Hence,
\begin{equation}
\label{eq:thm2_optional}
\E\!\left[
M^{[r]}_{\eta_{r,k}(t)-\tau_{r-1}}
\,\middle|\,
\mathcal F_{\theta_{r,k}}
\right]
\le
M^{[r]}_{\theta_{r,k}-\tau_{r-1}}
\qquad \mbox{on } \{\theta_{r,k}\le t\}.
\end{equation}

Next, we claim that
\begin{equation}
\label{eq:thm2_key}
\p\!\left(G_{r,k}(t)\,\middle|\,\mathcal F_{\eta_{r,k}(t)}\right)
\le
\alpha\,M^{[r]}_{\eta_{r,k}(t)-\tau_{r-1}}
\qquad \mbox{on } \{\theta_{r,k}\le t\}.
\end{equation}
We split into four cases.
\begin{itemize}
    \item If $\eta_{r,k}(t)=\tau_r$, then $M^{[r]}_{\eta_{r,k}(t)-\tau_{r-1}}
    =
    M^{[r]}_{\tau_r-\tau_{r-1}}
    \ge {1}/{\alpha}$,
    and thus \eqref{eq:thm2_key} holds trivially.

    \item If $\eta_{r,k}(t)=t$ and $\eta_{r,k}(t)\neq \tau_r$, then $\tau_r>t$, and hence $$\p(G_{r,k}(t)\,|\,\mathcal F_{\eta_{r,k}(t)})\leq \p(A^\sigma_r(t)\,|\,\mathcal F_{\eta_{r,k}(t)})=0.$$ 
    Therefore, \eqref{eq:thm2_key} holds.

    \item If $\eta_{r,k}(t)=\theta_{r,k+1}\le \tau_r\wedge(\nu_{r,k}-1)\wedge t$, then a later selected running-minimum time has already occurred before rejection, and thus $G_{r,k}(t)=G_{r,k+1}(t)$. 
    Using the induction hypothesis (\eqref{eq:thm2_claim} applied with $k+1$), we obtain
    \[
    \p\!\left(G_{r,k}(t)\,\middle|\,\mathcal F_{\eta_{r,k}(t)}\right)
    =
    \p\!\left(G_{r,k+1}(t)\,\middle|\,\mathcal F_{\theta_{r,k+1}}\right)
    \le
    \alpha\,M^{[r]}_{\theta_{r,k+1}-\tau_{r-1}}
    =
    \alpha\,M^{[r]}_{\eta_{r,k}(t)-\tau_{r-1}}.
    \]

    \item Otherwise, we must have
   $\eta_{r,k}(t)=\nu_{r,k}-1<\theta_{r,k+1}\wedge\tau_r\wedge t.$ 
    On this event,
    \begin{equation}
    \label{eq:thm2_inclusion}
    G_{r,k}(t)\subseteq \{\theta_{r,k+1}<\tau_r\}\cap G_{r,k+1}(t).
    \end{equation}
    Indeed, if $\theta_{r,k+1}\ge \tau_r$, then no later selected running-minimum time occurs before the rejection at $\tau_r$, and thus $\sigma_r=\theta_{r,k}$. Since $\nu_{r,k}\le \tau_r$ on the present event, this implies $\nu_r^\sigma\le \tau_r$, contradicting the event $A_r^\sigma(t)\supseteq G_{r,k}(t)$.

    By \eqref{eq:thm2_inclusion} and the induction hypothesis \eqref{eq:thm2_claim}, we have
    \[
    \begin{aligned}
    \p\!\left(G_{r,k}(t)\,\middle|\,\mathcal F_{\nu_{r,k}-1}\right)
    &\le
    \E\!\left[
    \id_{\{\theta_{r,k+1}<\tau_r\}}
    \p\!\left(G_{r,k+1}(t)\,\middle|\,\mathcal F_{\theta_{r,k+1}}\right)
    \,\middle|\,\mathcal F_{\nu_{r,k}-1}
    \right] \\
    &\le
    \alpha\,
    \E\!\left[
    \id_{\{\theta_{r,k+1}<\tau_r\}}
    M^{[r]}_{\theta_{r,k+1}-\tau_{r-1}}
    \,\middle|\,\mathcal F_{\nu_{r,k}-1}
    \right].
    \end{aligned}
    \]
    By the running-minimum property \eqref{eq:theta_rk},
    \[
    M^{[r]}_{\theta_{r,k+1}-\tau_{r-1}}
    \le
    M^{[r]}_{\nu_{r,k}-1-\tau_{r-1}}
    \qquad \mbox{on } \{\nu_{r,k}-1<\theta_{r,k+1}<\infty\}.
    \]
    Hence
    \[
    \p\!\left(G_{r,k}(t)\,\middle|\,\mathcal F_{\nu_{r,k}-1}\right)
    \le
    \alpha\,M^{[r]}_{\nu_{r,k}-1-\tau_{r-1}}
    =
    \alpha\,M^{[r]}_{\eta_{r,k}(t)-\tau_{r-1}}.
    \]
    This gives \eqref{eq:thm2_key} in the last case.
\end{itemize}
Combining the four cases yields \eqref{eq:thm2_key}. 

Taking conditional expectation with respect to $\mathcal F_{\theta_{r,k}}$ and using \eqref{eq:thm2_optional}, we get
\[
\p\!\left(G_{r,k}(t)\,\middle|\,\mathcal F_{\theta_{r,k}}\right)
=
\E\!\left[
\p\!\left(G_{r,k}(t)\,\middle|\,\mathcal F_{\eta_{r,k}(t)}\right)
\,\middle|\,
\mathcal F_{\theta_{r,k}}
\right]
\le
\alpha\,
\E\!\left[
M^{[r]}_{\eta_{r,k}(t)-\tau_{r-1}}
\,\middle|\,
\mathcal F_{\theta_{r,k}}
\right]
\le
\alpha\,M^{[r]}_{\theta_{r,k}-\tau_{r-1}}
\]
on $\{\theta_{r,k}\le t\}$. This proves \eqref{eq:thm2_claim}. Taking $k=0$ and using $\theta_{r,0}=\tau_{r-1}$ and $M^{[r]}_0=1$, we obtain
\[
\p\!\left(A_r^\sigma(t)\,\middle|\,\mathcal F_{\tau_{r-1}}\right)
=
\p\!\left(G_{r,0}(t)\,\middle|\,\mathcal F_{\tau_{r-1}}\right)
\le \alpha.
\]
The proof is complete.
\end{proof}

The following result is useful for deriving the FDR bound, which indicates that $N^\sigma_j(t)$ for $j\in\N$ is stochastically dominated by a geometric distribution.

\begin{lemma}\label{lem:ville_n}
    For any $j\in\N$, $n\in\N_0$, $t\in\T$, and $\alpha\in(0,1)$, the sequential resetting procedure with $\gamma=1/\alpha$ satisfies
    $$\p\left(N^\sigma_j(t)\ge n~|~\mathcal F_{\kappa^\sigma_{j-1}\wedge t}\right)\le \alpha^n~~\mbox{a.s.}$$
\end{lemma}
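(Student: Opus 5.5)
We prove the bound by induction on $n$, applying Lemma~\ref{lem:ville} at successive reset times and the tower property. The case $n=0$ is trivial. The observation driving the induction is this. After the $(j-1)$-st refined true rejection, $N^\sigma_j(t)\ge n$ with $n\ge 1$ means the following. Let $\rho=R_{\kappa^\sigma_{j-1}\wedge t}$ be the number of rejections made by time $\kappa^\sigma_{j-1}\wedge t$. Then the rejections numbered $\rho+1,\dots,\rho+n$ all occur by time $t$, and each of them is false under the refined law. Since $N^\sigma_j(t)$ counts rejections strictly before $\kappa^\sigma_j$, all such rejections are refined-false, and they are consecutive in the rejection index. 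Hence
\[
\{N^\sigma_j(t)\ge n\}\subseteq \bigcap_{i=1}^{n} A^\sigma_{\rho+i}(t).
\]
Moreover, on $\{\kappa^\sigma_{j-1}\le t\}$ the time $\kappa^\sigma_{j-1}$ is itself a rejection time, namely $\tau_{\rho}$. On $\{\kappa^\sigma_{j-1}>t\}$ we have $N^\sigma_j(t)=0$, so there is nothing to prove.

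We then show, for every $\mathcal F$-stopping index of the form $\rho$ (an $\mathcal F_{\tau_\rho}$-measurable random integer with $\tau_\rho$ a stopping time), that
\[
\p\left(\bigcap_{i=1}^{n}A^\sigma_{\rho+i}(t)\,\middle|\,\mathcal F_{\tau_\rho}\right)\le\alpha^n .
\]
We induct on $n$ by peeling off the last event. The event $\bigcap_{i=1}^{n-1}A^\sigma_{\rho+i}(t)$ lies in $\mathcal F_{\tau_{\rho+n-1}}$, because each $A^\sigma_r(t)\in\mathcal F_{\tau_r}$. Conditioning on $\mathcal F_{\tau_{\rho+n-1}}$, Lemma~\ref{lem:ville} gives $\p(A^\sigma_{\rho+n}(t)\mid\mathcal F_{\tau_{\rho+n-1}})\le\alpha$. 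The tower property then reduces the bound to $\alpha$ times the $(n-1)$ case.

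To apply Lemma~\ref{lem:ville} with a random index, we decompose on $\{\rho+n-1=r\}$. Since $\{\rho+n-1=r\}\in\mathcal F_{\tau_r}$, the lemma applies on each piece, and summing over $r$ gives the bound. Finally, $\mathcal F_{\kappa^\sigma_{j-1}\wedge t}$ agrees with $\mathcal F_{\tau_\rho}$ on $\{\kappa^\sigma_{j-1}\le t\}$, so conditioning on it yields the claim.

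The main obstacle is measurability. The index $\rho$ is random, $\kappa^\sigma_{j-1}$ is defined through refined truth, which involves the non-stopping time $\sigma_r$, and the conditioning $\sigma$-fields are evaluated at stopping times. Two facts need checking. First, $\kappa^\sigma_{j-1}$ is a stopping time: $K^\sigma_s$ is $\mathcal F_s$-adapted, since refined truth of the $r$-th rejection is decided at $\tau_r$ and $\{\p_u\notin\mathcal P_u\}\in\mathcal F_{u-1}$. Second, the localization on $\{\rho=m\}\cap\{\kappa^\sigma_{j-1}\le t\}$ is legitimate for conditional expectations given $\mathcal F_{\tau_m}$. If these hold, the inductive chain above goes through routinely.
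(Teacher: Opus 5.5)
Your proposal is correct and follows essentially the same route as the paper. Both arguments bound $\{N^\sigma_j(t)\ge n\}$ by the consecutive refined-false events $A^\sigma_{\rho+1}(t),\dots,A^\sigma_{\rho+n}(t)$ after the rejection at $\kappa^\sigma_{j-1}=\tau_\rho$, peel off the last event by conditioning on $\mathcal F_{\tau_{\rho+n-1}}$ and applying Lemma~\ref{lem:ville}, and localize on $\{\rho=r\}$ to handle the random index. Your explicit checks — that $K^\sigma$ is adapted, that $\{\rho=m\}\in\mathcal F_{\tau_m}$, and that $\mathcal F_{\kappa^\sigma_{j-1}\wedge t}$ agrees with $\mathcal F_{\tau_\rho}$ on $\{\kappa^\sigma_{j-1}\le t\}$ — spell out what the paper leaves implicit.
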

\begin{proof}
    If $\kappa^\sigma_{j-1}>t$, then $N^\sigma_j(t)=0$ by definition \eqref{eq:num_false}. The assertion holds automatically. The case of $n=0$ is trivial. Hence, we only work on the event $\{\kappa^\sigma_{j-1}\le t\}$ and the case of $n\in\N$.
    
Write $\rho_{j-1}=R_{\kappa^\sigma_{j-1}}$ for simplicity and use $A^\sigma_r(t)$ as in \eqref{eq:Ar} for $r\in\N$. By \eqref{eq:num_false}, we have the representation
    \begin{align}
        N^\sigma_j(t)=\id_{\{\kappa^\sigma_{j-1}\leq t\}}\sum_{\ell\geq 1}\id_{\left\{\bigcap_{m=1}^\ell A^\sigma_{\rho_{j-1}+m}(t)\right\}}.\label{eq:rep}
    \end{align}
    On the event $\{\rho_{j-1}=r\}$, we have $\tau_{r+n-1}\ge\kappa^\sigma_{j-1}$. Hence, we apply Lemma \ref{lem:ville} recursively and get, for each $n\geq 1$, 
    $$\begin{aligned}
        \p\left(\bigcap^n_{m=1}A^\sigma_{r+m}(t)~\middle\vert~\mathcal F_{\kappa^\sigma_{j-1}}\right)&=\E\left[\id_{\{\bigcap^{n-1}_{m=1}A^\sigma_{r+m}(t)\}}\p\left(A^\sigma_{r+n}(t)~|~\mathcal F_{\tau_{r+n-1}}\right)~\middle\vert~\mathcal F_{\kappa^\sigma_{j-1}}\right]\\
        &\le \alpha \p\left(\bigcap^{n-1}_{m=1}A^\sigma_{r+m}(t)~\middle\vert~\mathcal F_{\kappa^\sigma_{j-1}}\right)~~~~\mbox{a.s.}
    \end{aligned}$$
    By induction, for each $n\geq 1$, 
    $$\p\left(\bigcap^n_{m=1}A^\sigma_{r+m}(t)~\middle\vert~\mathcal F_{\kappa^\sigma_{j-1}}\right)\le \alpha^n~~~~\mbox{a.s.}$$
    It follows from \eqref{eq:rep} that
    $$\p(N^\sigma_j(t)\ge n~|~\mathcal F_{\kappa^\sigma_{j-1}})=\sum_{r\in\N_0}\id_{\{\rho_{j-1}=r\}}\p\left(\bigcap^n_{m=1}A^\sigma_{r+m}(t)~\middle\vert~\mathcal F_{\kappa^\sigma_{j-1}}\right)\le\alpha^n~~~~\mbox{a.s.}$$
    The proof is complete.
\end{proof}

Use $\{\xi_j\}_{j\in\N}$ as in Theorem \ref{thm:FDR_barrier}. We have the following dominance result in a multi-dimensional setting.

\begin{lemma}\label{lem:mono_com}
    For any $t\in\T$, $m\in\N$, and bounded increasing function $\Phi:\N_0^m\to\R$, we have
    $$\E[\Phi(N^\sigma_1,\dots,N^\sigma_m)]\le \E[\Phi(\xi_1,\dots,\xi_m)],$$
    where we write $(N^\sigma_1,\dots,N^\sigma_m)=(N^\sigma_1(t),\dots,N^\sigma_m(t))$.
\end{lemma}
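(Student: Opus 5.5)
We will prove Lemma \ref{lem:mono_com} by induction on $m$, using the conditional geometric tail bound of Lemma \ref{lem:ville_n} at each step together with a standard one-dimensional coupling/quantile argument. The basic fact we need is the following: if $N$ is an $\N_0$-valued random variable with $\p(N\ge n\mid\mathcal H)\le\alpha^n$ for all $n$, and $\psi:\N_0\to\R$ is bounded and increasing, then $\E[\psi(N)\mid\mathcal H]\le\E[\psi(\xi)]$ with $\xi\sim$ Geom$(1-\alpha)$. This follows from the layer-cake identity $\psi(n)=\psi(0)+\sum_{k\ge1}(\psi(k)-\psi(k-1))\id_{\{n\ge k\}}$, whose increments are nonnegative, so that $\E[\psi(N)\mid\mathcal H]\le\psi(0)+\sum_k(\psi(k)-\psi(k-1))\alpha^k=\E[\psi(\xi)]$. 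Boundedness ensures that the series converges absolutely, since the increments sum to at most $\sup\psi-\inf\psi$.

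For the base case $m=1$, we apply this fact with $\mathcal H$ trivial and Lemma \ref{lem:ville_n} with $j=1$. For the inductive step, fix a bounded increasing $\Phi$ on $\N_0^{m}$. The first step is to condition on $\mathcal F_{\kappa^\sigma_{m-1}\wedge t}$. The vector $(N^\sigma_1,\dots,N^\sigma_{m-1})$ is measurable with respect to this $\sigma$-field: each $N^\sigma_j(t)$ is $\mathcal F_{\kappa^\sigma_j\wedge t}$-measurable, and $\kappa^\sigma_j\le\kappa^\sigma_{m-1}$ for $j\le m-1$. Since the function $n\mapsto\Phi(N^\sigma_1,\dots,N^\sigma_{m-1},n)$ is increasing and bounded, the one-dimensional fact combined with Lemma \ref{lem:ville_n} (with $j=m$) gives
\[
\E\bigl[\Phi(N^\sigma_1,\dots,N^\sigma_m)\mid\mathcal F_{\kappa^\sigma_{m-1}\wedge t}\bigr]\le\Psi(N^\sigma_1,\dots,N^\sigma_{m-1}),
\]
where $\Psi(x_1,\dots,x_{m-1})=\E[\Phi(x_1,\dots,x_{m-1},\xi_m)]$. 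The function $\Psi$ is again bounded and increasing on $\N_0^{m-1}$, because monotonicity is preserved under integration against a fixed law. Taking expectations and applying the induction hypothesis to $\Psi$ yields $\E[\Psi(N^\sigma_1,\dots,N^\sigma_{m-1})]\le\E[\Psi(\xi_1,\dots,\xi_{m-1})]$. Since $\xi_m$ is independent of $(\xi_1,\dots,\xi_{m-1})$, the right-hand side equals $\E[\Phi(\xi_1,\dots,\xi_m)]$.

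The main point requiring care is the measurability claim, namely that $N^\sigma_j(t)$ is $\mathcal F_{\kappa^\sigma_{m-1}\wedge t}$-measurable for $j<m$. This holds because $\kappa^\sigma_j\wedge t\le\kappa^\sigma_{m-1}\wedge t$ are stopping times, so the corresponding $\sigma$-fields are nested. Two further points also need a brief check: that $\kappa^\sigma_j$ are indeed stopping times, since $K^\sigma_s$ is adapted ($\sigma_r$ is $\mathcal F_{\tau_r}$-measurable and the null labels $\{\p_u\notin\mathcal P_u\}$ are predictable), and that the case $\kappa^\sigma_{m-1}=\infty$ is covered by the convention in Lemma \ref{lem:ville_n}.
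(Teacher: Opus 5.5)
Your proposal is correct and matches the paper's proof essentially step by step. Both arguments proceed as follows:
\begin{enumerate}
\item induct on $m$ and condition on $\mathcal F_{\kappa^\sigma_{m-1}\wedge t}$;
\item expand $k\mapsto\Phi(N^\sigma_1,\dots,N^\sigma_{m-1},k)$ by the layer-cake identity with nonnegative increments;
\item bound $\p(N^\sigma_m\ge s\mid\mathcal F_{\kappa^\sigma_{m-1}\wedge t})\le\alpha^s$ via Lemma~\ref{lem:ville_n};
\item apply the induction hypothesis to $g(n_1,\dots,n_{m-1})=\E[\Phi(n_1,\dots,n_{m-1},\xi_m)]$.
\end{enumerate}
One point is cosmetic: your one-dimensional fact is stated for a deterministic $\psi$, but you apply it to a random, $\mathcal F_{\kappa^\sigma_{m-1}\wedge t}$-measurable increasing function. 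The same layer-cake argument covers this case, because the increments can be pulled out of the conditional expectation; this is exactly how the paper handles its $G(k)$.
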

\begin{proof}
    We prove the assertion by induction. 
    The case of $m=1$ is straightforward by Lemma \ref{lem:ville_n}. For $m>1$, suppose that the assertion holds for $m-1$. Define
    \begin{equation}\label{eq:lem2_phi}
        g(n_1,\dots,n_{m-1})=\E[\Phi(n_1,\dots,n_{m-1},\xi_m)]~~~~\mbox{for}~n_1,\dots,n_{m-1}\in\N_0.
    \end{equation}
    It is clear that $g$ is also bounded and increasing. Define
    $$G(k)=\Phi(N^\sigma_1,\dots,N^\sigma_{m-1},k),~~k\in\N_0.$$
    We note that $G(k)$ is a bounded random variable measurable with respect to $\mathcal F_{\kappa^\sigma_{m-1}\wedge t}$, and
    $$G(k)=G(0)+\sum_{s\in\N}(G(s)-G(s-1))\id_{\{s\le k\}},~~k\in\N_0.$$
    Hence, we have by Lemma \ref{lem:ville_n} and \eqref{eq:lem2_phi} that
    $$\begin{aligned}
        \E\left[\Phi(N^\sigma_1,\dots,N^\sigma_m)~|~\mathcal F_{\kappa^\sigma_{m-1}\wedge t}\right]&=\E\left[G(N^\sigma_m)~|~\mathcal F_{\kappa^\sigma_{m-1}\wedge t}\right]\\
        &=G(0)+\sum_{s\in\N}(G(s)-G(s-1))\p\left(N^\sigma_m\ge s~|~\mathcal F_{\kappa^\sigma_{m-1}\wedge t}\right)\\
        &\le G(0)+\sum_{s\in\N}(G(s)-G(s-1))\alpha^s\\
        &=G(0)+\sum_{s\in\N}(G(s)-G(s-1))\p(\xi_m\ge s)=g(N^\sigma_1,\dots,N^\sigma_{m-1})~~\mbox{a.s.}
    \end{aligned}$$
    It follows that
    $$\E[\Phi(N^\sigma_1,\dots,N^\sigma_m)]\le \E[g(N^\sigma_1,\dots,N^\sigma_{m-1})]\le \E[g(\xi_1,\dots,\xi_{m-1})]=\E[\Phi(\xi_1,\dots,\xi_m)].$$
    The proof is complete.
\end{proof}

Use $(S_k)_{k\in\N}$ as in Theorem \ref{thm:FDR_barrier}. Write
\begin{equation*}
    X^{\mathrm{ind}}_{\infty}=\inf_{k\in\N}\frac{k}{k+S_k}.
\end{equation*}
The following result gives the bounds for the i.i.d.~benchmark case.

\begin{lemma}\label{lemma:E bound}
    For any $\alpha\in(0,1)$,
    $$
1-\frac{\alpha}{1-\alpha}\log\frac1\alpha
\le
\E[X^{\mathrm{ind}}_{\infty}]
\le
1-\frac{\alpha}{(1-\alpha)^2}\log\frac1\alpha+\frac{\alpha}{1-\alpha}. 
$$
\end{lemma}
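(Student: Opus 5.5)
The proof has two parts. First, rewrite $\E[X^{\mathrm{ind}}_\infty]$ through the tail of $Y=\sup_{k\in\N}S_k/k$. Second, bound that tail from above (for the lower bound on $\E[X^{\mathrm{ind}}_\infty]$) and from below (for the upper bound).

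\textbf{Reduction.} Since $k/(k+S_k)=1/(1+S_k/k)$, we have $X^{\mathrm{ind}}_\infty=1/(1+Y)$. Hence
\[
\E[X^{\mathrm{ind}}_\infty]=1-\E\Big[\frac{Y}{1+Y}\Big]=1-\int_0^1\P\Big(Y\ge \frac{u}{1-u}\Big)\,\d u.
\]
Write $\mu=\E[\xi_1]=\alpha/(1-\alpha)$. Both inequalities then reduce to two-sided control of $\P(Y\ge x)$, followed by the substitution $x=u/(1-u)$.

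\textbf{Lower bound on $\E[X^{\mathrm{ind}}_\infty]$.} Let $\mathcal E_k=\sigma(S_k,\xi_{k+1},\xi_{k+2},\dots)$. Then $(S_k/k)_{k\in\N}$ is a nonnegative reverse martingale with respect to $(\mathcal E_k)$, with $\E[S_k/k\mid\mathcal E_{k+1}]=S_{k+1}/(k+1)$ by exchangeability. Doob's maximal inequality, applied to the reversed martingale over $k\le n$ and then letting $n\to\infty$, gives
\[
\P(Y\ge x)\le \min\{1,\mu/x\}.
\]
This is the maximal inequality of \cite{S15} quoted in the sketch. Under $x=u/(1-u)$ the crossover $\mu/x=1$ occurs at $u=\mu/(1+\mu)=\alpha$. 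Therefore
\[
\int_0^1\P\Big(Y\ge\frac{u}{1-u}\Big)\d u\le \alpha+\mu\int_\alpha^1\Big(\frac1u-1\Big)\d u=\alpha+\mu\Big(\log\frac1\alpha-(1-\alpha)\Big)=\frac{\alpha}{1-\alpha}\log\frac1\alpha,
\]
which is exactly the claimed lower bound.

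\textbf{Upper bound on $\E[X^{\mathrm{ind}}_\infty]$.} The target follows from the reverse tail estimate
\[
\P(Y\ge x)\ge \min\Big\{1,\frac{\mu}{(1-\alpha)x}\Big\},\qquad\text{or at least}\quad \P(Y\ge x)\ge \frac{\mu}{(1-\alpha)x}\ \text{ for } x\ge \mu.
\]
Integrating the bound $\mu/((1-\alpha)x)=\mu(1-u)/((1-\alpha)u)$ over $u\in[\alpha,1]$ gives
\[
\frac{\mu}{1-\alpha}\Big(\log\frac1\alpha-1+\alpha\Big)=\frac{\alpha}{(1-\alpha)^2}\log\frac1\alpha-\frac{\alpha}{1-\alpha}.
\]
Dropping the nonnegative contribution from $u<\alpha$ then yields the stated upper bound.

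To prove the reverse tail estimate, I would make Doob's inequality nearly an equality by exploiting the lattice and memoryless structure of geometric variables. Fix $x$ and let $\rho=\max\{k:S_k\ge kx\}$, which is a stopping time for the reversed filtration and is finite a.s. by the SLLN since $x>\mu$. Optional stopping in the reverse martingale gives
\[
\mu=\E[\xi_1]=\E\big[S_\rho/\rho\,;\,Y\ge x\big]+\E\big[\xi_1\,;\,Y<x\big].
\]
The crossing is by an upward jump of the integer-valued walk $S_k-kx$. Conditional on crossing, the overshoot of a geometric jump is again geometric with parameter $1-\alpha$ by memorylessness. This should give $\E[S_\rho/\rho\mid Y\ge x]\le x/(1-\alpha)$ up to the appropriate correction, after which the estimate above follows.

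\textbf{Main obstacle.} The hard step is controlling the overshoot, i.e.\ making the bound on $\E[S_\rho/\rho\mid Y\ge x]$ precise. If the reverse-time version is awkward, I would instead use a forward ladder argument. Let $\tau=\min\{k:S_k\ge kx\}$ for the walk $S_k-kx$, which has negative drift $\mu-x$. Apply Wald's identity to $\tau\wedge n$, and use the fact that the excess $S_\tau-\tau x$ given $\tau<\infty$ is stochastically a shifted geometric. This yields an explicit lower bound on $\P(\tau<\infty)=\P(Y\ge x)$ of order $\mu/((1-\alpha)x)$.
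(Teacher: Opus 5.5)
Your lower bound on $\E[X^{\mathrm{ind}}_\infty]$ is correct and coincides with the paper's argument. The reverse-martingale form of Doob's inequality is exactly the maximal inequality the paper quotes from \cite{S15}, and the integration is the same.

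The upper bound, however, rests on a false estimate. The proposed reverse tail bound $\P(Y\ge x)\ge\min\{1,\mu/((1-\alpha)x)\}$ contradicts the bound $\P(Y\ge x)\le\min\{1,\mu/x\}$ you proved one paragraph earlier, at every $x>\mu$:
for $x>\mu/(1-\alpha)$ the claimed lower bound $\mu/((1-\alpha)x)$ strictly exceeds $\mu/x$;
for $\mu<x\le\mu/(1-\alpha)$ it would force $\P(Y\ge x)=1$, although $\mu/x<1$.
The second variant, ``$\P(Y\ge x)\ge\mu/((1-\alpha)x)$ for $x\ge\mu$'', is the one your integration over $u\in[\alpha,1]$ actually uses, and it even exceeds $1$ on $[\mu,\mu/(1-\alpha))$.

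No overshoot analysis can rescue this. Since $S_\rho/\rho\ge x$ on $\{Y\ge x\}$, the identity $\mu=\E[S_\rho/\rho;Y\ge x]+\E[\xi_1;Y<x]$ can only yield lower bounds on $\P(Y\ge x)$ that are at most $\mu/x$. It also leaves the nonnegative term $\E[\xi_1;Y<x]$ to be controlled. Even your hoped-for overshoot bound $x/(1-\alpha)$ would put the factor $1-\alpha$ in the numerator, not the denominator. In fact, for large $x$ the true tail decays exponentially in $x$, far below any multiple of $1/x$. The integrated inequality you need is true, but it cannot come from a pointwise bound of this shape; the integrand appears to have been read off from the target rather than derived.

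The missing idea is a much cruder but valid comparison, which is what the paper does. Let $K$ be the first index with $\xi_K\ge1$, so that $\P(K=k)=\alpha(1-\alpha)^{k-1}$ and $S_K=\xi_K\ge1$. Then $X^{\mathrm{ind}}_\infty\le K/(K+S_K)\le K/(K+1)$, and
\[
\E\Big[\frac{K}{K+1}\Big]=1-\sum_{u\ge2}\frac{\alpha(1-\alpha)^{u-2}}{u}=1-\frac{\alpha}{(1-\alpha)^2}\log\frac1\alpha+\frac{\alpha}{1-\alpha},
\]
using $\sum_{u\ge1}q^u/u=-\log(1-q)$ with $q=1-\alpha$.

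In your tail language, this is the correct pointwise bound $\P(Y\ge x)\ge\P(K\le 1/x)$, which follows from $Y\ge S_K/K\ge1/K$. It gives $\int_0^1\P(Y\ge u/(1-u))\,\d u\ge\E[1/(K+1)]=\frac{\alpha}{(1-\alpha)^2}\log\frac1\alpha-\frac{\alpha}{1-\alpha}$, which is precisely the quantity your integration was meant to produce.
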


\begin{proof}
For the upper bound, we let $K$ be a $\sigma(\{\xi_j\})$-measurable random variable, defined as the smallest index such that $\xi_k\geq 1$. Clearly, $\p(K=k)=\alpha(1-\alpha)^{k-1}$. Thus
\begin{align*}
    \E[X^{\mathrm{ind}}_{\infty}]&\leq \E\left[\frac{K}{K+S_K}\right]\\
    &\leq \sum_{k\in\N}\alpha(1-\alpha)^{k-1}\frac{k}{k+1}=1-\sum_{u\geq 2}\frac{\alpha(1-\alpha)^{u-2}}{u}=1-\frac{\alpha}{(1-\alpha)^2}\log\frac1\alpha+\frac{\alpha}{1-\alpha},
\end{align*}
   where we have used the identity $\sum_{u\ge 1}q^u/u=-\log(1-q)$ with $q=1-\alpha\in(0,1)$. 

For the lower bound, write $Y=\sup_{n\in\N}{S_n}/{n}$. 
We have
\[
    1-X^{\mathrm{ind}}_{\infty}=1-\inf_{n\in\N}\frac{n}{n+S_n}
    =
    \frac{Y}{1+Y}.
\]
Moreover, $\E[\xi_1]=\alpha/(1-\alpha)$. By the maximal inequality in \citet[equation (5)]{S15}, for every $x>0$,
\[
    \P(Y\ge x)
    \le
    \min\left\{1,\frac{\alpha}{(1-\alpha)x}\right\}.
\]
Therefore,
\begin{align*}
    \E\left[1-X^{\mathrm{ind}}_{\infty}\right]
    =
    \int_0^1
    \P\left(1-X^{\mathrm{ind}}_{\infty}\ge u\right)\d u
    &=
    \int_0^1
    \P\left(Y\ge \frac{u}{1-u}\right)\d u\\
    &\le
    \alpha+
    \int_\alpha^1
    \frac{\alpha}{1-\alpha}\frac{1-u}{u}\,\d u=
    \frac{\alpha}{1-\alpha}\log\frac1\alpha.
\end{align*}
This gives the desired lower bound.
\end{proof}

\begin{proof}[Proof of Theorem \ref{thm:FDR_barrier}]
For any $m\in\N$, define
$$
\begin{aligned}
    &\Psi_m(n_1,\dots,n_{m+1})=1-\id_{\{n_1=0\}}\left(\min_{k\in\{1,\dots,m\}}\frac{k}{k+\sum^{k+1}_{j=2}n_j}\right)\\
    \mbox{and}~~&\Psi(n_1,n_2,\dots)=1-\id_{\{n_1=0\}}\left(\inf_{k\in\N}\frac{k}{k+\sum^{k+1}_{j=2}n_j}\right).
\end{aligned}$$
It follows that $\Psi_m$ and $\Psi$ are bounded and increasing in each argument, and for any sequence $(n_k)_{k\geq 1}$, $\Psi_m(n_1,\dots,n_{m+1})\uparrow \Psi(n_1,n_2,\dots)$ as $m\to\infty$. Now we aim to prove that 
\begin{equation}\label{eq:thm1_dom}
    \frac{V^\sigma_t}{R_t\vee 1}\le \Psi(N^\sigma_1,N^\sigma_2,\dots)~~\mbox{for all $t\in\T$ a.s.}
\end{equation}
Indeed, on $\{K^\sigma_t=0\}$, we have
$$\frac{V^\sigma_t}{R_t\vee 1}=\id_{\{N^\sigma_1\ge 1\}}\le \Psi(N^\sigma_1,N^\sigma_2,\dots)~~~~\mbox{a.s.}$$
On $\{K^\sigma_t\ge 1,~N^\sigma_1\ge 1\}$, we have
$$\frac{V^\sigma_t}{R_t\vee 1}\le 1=\Psi(N^\sigma_1,N^\sigma_2,\dots)~~~~\mbox{a.s.}$$
On $\{K^\sigma_t\ge 1,~N^\sigma_1=0\}$, we have
$$\frac{V^\sigma_t}{R_t\vee 1}=1-\frac{K^\sigma_t}{K^\sigma_t+\sum^{K^\sigma_t+1}_{j=2}N^\sigma_j(t)}\le 1-\inf_{k\in\N}\left(\frac{k}{k+\sum^{k+1}_{j=2}N^\sigma_j(t)}\right)=\Psi(N^\sigma_1,N^\sigma_2,\dots)~~\mbox{a.s.}$$
Thus, \eqref{eq:thm1_dom} holds.
Because $t\mapsto N^\sigma_j(t)$ is increasing for $j\in\N$ and $\Psi$ is increasing, we apply the monotone convergence theorem and obtain
$$\mathrm{FDR}^{\mathrm{ref}}_\infty=\E\left[\sup_{t\in\T}\frac{V^\sigma_t}{R_t\vee 1}\right]\le \E\left[\sup_{t\in\T}\Psi(N^\sigma_1(t),N^\sigma_2(t),\dots)\right]=\sup_{t\in\T}\E\left[\Psi(N^\sigma_1(t),N^\sigma_2(t),\dots)\right].$$
Hence, by Lemma \ref{lem:mono_com} and \eqref{eq:FDR_infty},
\begin{align}
    \begin{split}
        \mathrm{FDR}^{\mathrm{ref}}_{t}\leq \mathrm{FDR}^{\mathrm{ref}}_{\infty}%
    &\le \sup_{t\in\T}\E\left[\Psi(N^\sigma_1(t),N^\sigma_2(t),\dots)\right]\\
    &=\sup_{t\in\T}\lim_{m\to\infty}\E[\Psi_m(N^\sigma_1(t),\dots,N^\sigma_{m+1}(t))]\\
    &\le \lim_{m\to\infty}\E[\Psi_m(\xi_1,\dots,\xi_{m+1})]\\
    &=\E[\Psi(\xi_1,\xi_2,\dots)]\\
    &=1-\E\left[\id_{\{\xi_1=0\}}\inf_{k\in\N}\frac{k}{k+\sum^{k+1}_{j=2}\xi_j}\right]=1-(1-\alpha)\E\left[\inf_{k\in\N}\frac{k}{k+S_k}\right].
    \end{split}\label{eq:long}
\end{align}
This proves the first and second inequalities of \eqref{eq:local_FDR_bound}.  
Applying Lemma \ref{lemma:E bound} yields the third inequality of \eqref{eq:local_FDR_bound}.
\end{proof}

We need the following elementary lemma for the proof of Theorem \ref{cor:rejection-count-and-mfdr}.

\begin{lemma}\label{lem:geometric-mgf-bound}
For $\alpha\in(0,1)$, define
\[
    f_\alpha(t)
    =
    \log\frac{1-\alpha}{1-\alpha\mathrm{e}^{(1-\alpha)t}}
    -\alpha t.
\]
Then
\[
    f_\alpha(t)
    \le \frac{\alpha t^2}{2-(1+\alpha)t}\qquad\text{for}
    \qquad 0\le t<\frac{2}{1+\alpha}.
\]
\end{lemma}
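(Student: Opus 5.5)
The approach is to compare Taylor coefficients at $t=0$. Write $u=(1-\alpha)t$ and expand
\[
\log\frac{1-\alpha}{1-\alpha e^{u}}=\sum_{n\ge1}\frac{\alpha^n}{n}(e^{nu}-1)=\sum_{k\ge1}\frac{u^k}{k!}\sum_{n\ge1}n^{k-1}\alpha^n .
\]
The $k=1$ term is $u\,\alpha/(1-\alpha)=\alpha t$, so it cancels the linear term $-\alpha t$. This leaves
\[
f_\alpha(t)=\sum_{k\ge2}a_k t^k,\qquad a_k=\frac{(1-\alpha)^k}{k!}\sum_{n\ge1}n^{k-1}\alpha^n .
\]
This series has nonnegative coefficients and converges for $(1-\alpha)t<\log(1/\alpha)$. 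With $c=1+\alpha$, the right-hand side expands as
\[
\frac{\alpha t^2}{2-ct}=\sum_{k\ge2}\frac{\alpha}{2}\Big(\frac c2\Big)^{k-2}t^k ,
\]
valid for $t<2/c$. I will first check that the claimed range sits inside the domain of $f_\alpha$, that is, $2(1-\alpha)/(1+\alpha)\le\log(1/\alpha)$. This is the elementary inequality $\log x\ge 2(x-1)/(x+1)$ for $x=1/\alpha\ge1$, proved by differentiating. Granting this, it suffices to show termwise that $a_k\le\frac{\alpha}{2}(c/2)^{k-2}$ for every $k\ge2$.

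To evaluate $a_k$, I will use the Eulerian-polynomial identity $\sum_{n\ge1}n^{m}\alpha^n=\alpha A_m(\alpha)/(1-\alpha)^{m+1}$, where $A_m$ has nonnegative coefficients and $A_m(1)=m!$. With $m=k-1$ this gives $a_k=\alpha A_{k-1}(\alpha)/k!$. Checking the first cases: $k=2$ gives equality $\alpha/2$, and $k=3$ gives $\alpha(1+\alpha)/6\le\alpha(1+\alpha)/4$. For general $k$, the target reduces to
\[
A_m(\alpha)\le m!\Big(\frac{1+\alpha}{2}\Big)^{m-1},\qquad m\ge1 .
\]
Granting this, $a_k\le\frac{\alpha}{k}(c/2)^{k-2}\le\frac{\alpha}{2}(c/2)^{k-2}$ for $k\ge2$, which proves the lemma.

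The main obstacle is the Eulerian polynomial bound. I plan to use two classical facts: $A_m$ (of degree $m-1$) is real-rooted with all roots negative, and it is palindromic with constant and leading coefficients equal to $1$. Hence its roots come in pairs $-r,-1/r$, together with a single root $-1$ when $m-1$ is odd. I will factor $A_m(\alpha)=\prod(\alpha+r)(\alpha+1/r)\cdot(1+\alpha)^{\epsilon}$ and compare each factor with its value at $\alpha=1$, multiplied by the corresponding power of $(1+\alpha)/2$. For a quadratic factor with $s=r+1/r\ge2$, the identity
\[
\frac{2+s}{4}(1+\alpha)^2-(\alpha^2+s\alpha+1)=\frac{s-2}{4}(1-\alpha)^2\ge0
\]
gives $(\alpha+r)(\alpha+1/r)\le(2+s)\big(\frac{1+\alpha}{2}\big)^2$, and the right-hand side is the factor's value at $\alpha=1$ times $\big(\frac{1+\alpha}2\big)^2$. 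A linear factor $(1+\alpha)$ equals $2\cdot\frac{1+\alpha}{2}$ exactly. Multiplying the factors yields $A_m(\alpha)\le A_m(1)\big(\frac{1+\alpha}2\big)^{m-1}=m!\big(\frac{1+\alpha}2\big)^{m-1}$.

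If citing real-rootedness is undesirable, I would fall back on an induction using $A_{m}(\alpha)=(1+(m-1)\alpha)A_{m-1}(\alpha)+\alpha(1-\alpha)A_{m-1}'(\alpha)$. That route would need a companion bound on $A_{m-1}'$ and would be messier.
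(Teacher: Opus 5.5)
Your argument is correct, and it takes a different route from the paper. The paper does not expand into Taylor coefficients; it works with $f_\alpha''$ directly. It writes $f_\alpha''(t)=\alpha(1-\alpha)^2/g(t)^2$ with $g(t)=\mathrm{e}^{-(1-\alpha)t/2}-\alpha\mathrm{e}^{(1-\alpha)t/2}$. It notes that $g''=\frac{(1-\alpha)^2}{4}g$, so $g$ is convex wherever it is positive, which covers the stated range. The tangent line at $0$ then gives $g(t)\ge(1-\alpha)\bigl(1-\frac{1+\alpha}{2}t\bigr)$, hence $f_\alpha''(t)\le\alpha\bigl(1-\frac{1+\alpha}{2}t\bigr)^{-2}\le\frac{\mathrm{d}^2}{\mathrm{d}t^2}\frac{\alpha t^2}{2-(1+\alpha)t}$. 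It finishes by integrating twice from the common value and slope $0$ at $t=0$.

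The two proofs pass through the same intermediate bound. Your coefficients $\frac{\alpha}{k}\bigl(\frac{1+\alpha}{2}\bigr)^{k-2}$ are exactly the Taylor coefficients of $\alpha\bigl(1-\frac{1+\alpha}{2}t\bigr)^{-2}$ integrated twice. So your inequality $A_m(\alpha)\le m!\bigl(\frac{1+\alpha}{2}\bigr)^{m-1}$ is the coefficientwise form of the paper's pointwise bound on $f_\alpha''$, and both arguments then give away the same factor $2/k$.

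The paper's route is short and self-contained, needing one convexity/tangent-line observation. Yours yields a formally stronger coefficientwise domination, but only by importing Frobenius's theorem that Eulerian polynomials are real-rooted with negative roots. That is considerably heavier than the lemma needs.

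If you keep your route, the pairing step can be compressed. Write the roots of $A_m$ as $-r_i$ and set $p_i=1/(1+r_i)$, so that $A_m(\alpha)/m!=\prod_{i=1}^{m-1}(1-p_i+p_i\alpha)$. This is the generating function of the number of descents of a uniform random permutation. Its mean $(m-1)/2$ forces the $p_i$ to average $1/2$, and AM--GM then gives the bound in one line.
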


\begin{proof}
The stated range is valid since
$-\log\alpha>2(1-\alpha)/(1+\alpha)$.
The function $t\mapsto
\mathrm{e}^{-(1-\alpha)t/2}
-\alpha\mathrm{e}^{(1-\alpha)t/2}$
is convex on the stated range, so its tangent at $t=0$ gives
\[
    \mathrm{e}^{-(1-\alpha)t/2}
    -\alpha\mathrm{e}^{(1-\alpha)t/2}
    \ge
    (1-\alpha)\left(1-\frac{1+\alpha}{2}t\right).
\]
Consequently,
\[
    f_\alpha''(t)
    \le \frac{\alpha}{(1-(1+\alpha)t/2)^2}
    \le \frac{\mathrm{d}^2}{\mathrm{d}t^2}
         \left(\frac{\alpha t^2}{2-(1+\alpha)t}\right).
\]
Both $f_\alpha$ and the proposed upper bound have value and
derivative zero at $t=0$. Integrating twice proves the claim.
\end{proof}

\begin{proof}[Proof of Theorem \ref{cor:rejection-count-and-mfdr}]
For (i), each refined true report contains an alternative position, and the
refined report intervals are disjoint. Hence, $K_T^\sigma\le N_a$.
By \eqref{eq:num_false}, $N_j^\sigma(T)=0$ for
$j>K_T^\sigma+1$, so
\[
    R_T=K_T^\sigma+V_T^\sigma
    \le N_a+\sum_{j=1}^{N_a+1}N_j^\sigma(T).
\]
\sloppy For $n\in\N_0$, applying Lemma~\ref{lem:mono_com} to the bounded
coordinatewise increasing function $\Phi(x_1,\ldots,x_{N_a+1})
    =\id_{\{\sum_{j=1}^{N_a+1}x_j\ge n\}}$ yields 
\[
    \p(R_T\ge N_a+n)
    \le
    \p\!\left(\sum_{j=1}^{N_a+1}N_j^\sigma(T)\ge n\right)
    \le \p(S_{N_a+1}\ge n)=\sum^{N_a}_{j=0}\left(\begin{array}{c}
        N_a+n \\ j
    \end{array}\right) (1-\alpha)^j\alpha^{N_a+n-j}.
\]
Arbitrariness of $n$ proves \eqref{eq:rejection-count-dom}. The expectation bound \eqref{eq:expectation-dom} follows
from $\E[\xi_j]=\alpha/(1-\alpha)$.

To prove \eqref{eq:confidence-dom}, note that \eqref{eq:rejection-count-dom} and \cite[Lemma~2.6]{Janson18}, applied to
$S_{N_a+1}+N_a+1$, give, for every real $r\ge N_a$ and
$0<t<2/(1+\alpha)$,
\begin{align*}
    \p(R_T\ge r)
    &\le
    \mathrm{e}^{-(1-\alpha)tr}
    \left(
        \frac{(1-\alpha)\mathrm{e}^{(1-\alpha)t}}
             {1-\alpha\mathrm{e}^{(1-\alpha)t}}
    \right)^{N_a}\\
    &=
    \exp\left(
        -t\bigl((1-\alpha)r-N_a\bigr)+N_a f_\alpha(t)
    \right)\\
    &\le
    \exp\left(
        -t\bigl((1-\alpha)r-N_a\bigr)
        +\frac{N_a\alpha t^2}{2-(1+\alpha)t}
    \right),
\end{align*}
where the last inequality follows from
Lemma~\ref{lem:geometric-mgf-bound}.
For $x>0$, take
\[
    t=
    \frac{\sqrt{2x}}
         {\sqrt{N_a\alpha}+(1+\alpha)\sqrt{2x}/2}
\]
when $N_a>0$, and let $t\uparrow2/(1+\alpha)$ when $N_a=0$.
It follows that
\[
    \p\left(
        R_T\ge
        \frac{
            N_a+\sqrt{2N_a\alpha x}+(1+\alpha)x/2
        }{1-\alpha}
    \right)
    \le \mathrm{e}^{-x}.
\]
Taking $x=\log(1/\beta)$ proves \eqref{eq:confidence-dom}.

For (ii), recall that
$A_r^\sigma(T)=\{\tau_r\le T,\,\nu_r^\sigma>\tau_r\}$.
This event implies $\tau_{r-1}\le T$, and
$\{\tau_{r-1}\le T\}\in\mathcal F_{\tau_{r-1}}$.
Lemma~\ref{lem:ville} therefore gives
\[
    \p(A_r^\sigma(T))
    =\E\!\left[
        \id_{\{\tau_{r-1}\le T\}}
        \p\!\left(A_r^\sigma(T)\mid\mathcal F_{\tau_{r-1}}\right)
      \right]
    \le \alpha\,\p(\tau_{r-1}\le T).
\]
Summing over $r\in\N$ and using $\tau_0=0$ gives
\[
    \E[V_T^\sigma]
    =\sum_{r\in\N}\p(A_r^\sigma(T))
    \le\alpha\sum_{r\in\N}\p(\tau_{r-1}\le T)
    =\alpha\bigl(1+\E[R_T]\bigr).
\]
Finally, $V_T\le V_T^\sigma$ pathwise, which proves
\eqref{eq:mfdr-one-bound}.
\end{proof}

\section{Further experiments and details}
\label{sec:add_num}

\subsection{LLM watermark detection}\label{sec:LLM complement}

\subsubsection{Proof of Proposition \ref{prop:tournament-pivot}}

\begin{proof}[Proof of Proposition~\ref{prop:tournament-pivot}]
Fix $t$, and for notational convenience, all expectations below are conditional on $\mathcal F_{t-1}$. Under $H_t$, $S_t\mid\mathcal F_{t-1}\lawis\operatorname{Binom}(30,1/2)$. The randomized probability integral transform \citep[Proposition~2.1]{Ruschendorf09} therefore gives $Y_t\mid\mathcal F_{t-1}\lawis\operatorname{Unif}(0,1)$ and $\E[L_t]=1$.

We next prove the stochastic dominance. Under the Tournament watermark, write
\[
s_\ell(w)=\sum_{j=1}^{\ell}g_{t,j}(w),
\qquad s_0(w)=0.
\]
Given $\mathcal F_{t-1}$ and all bits $(g_{t,j}(v))_{1\le j<\ell,\,v\in\mathcal W}$, the transition formula for two-token Tournament sampling \citep[Supplementary Appendix~E, Corollary~14 and Theorem~15]{DathathriEtAl24} gives
\begin{align}
    P_t^{(\ell)}(w)
=P_t^{(\ell-1)}(w)
\left(
1+g_{t,\ell}(w)
-\sum_{v\in\mathcal W}P_t^{(\ell-1)}(v)g_{t,\ell}(v)
\right),
\qquad w\in\mathcal W.\label{eq:transition}
\end{align}
Conditioning on $g_{t,\ell}(w)=b$, where $b\in\{0,1\}$, yields
\[
\begin{aligned}
&\E\!\left[
\sum_{v\in\mathcal W}P_t^{(\ell-1)}(v)g_{t,\ell}(v)
\,\middle|\,
\mathcal F_{t-1},
(g_{t,j}(v))_{1\le j<\ell,\ v\in\mathcal W},
g_{t,\ell}(w)=b
\right]\\
&\quad=
P_t^{(\ell-1)}(w)b
+\frac12\sum_{\substack{v\in\mathcal W\\v\ne w}}P_t^{(\ell-1)}(v)
=
P_t^{(\ell-1)}(w)b+\frac12\left(1-P_t^{(\ell-1)}(w)\right).
\end{aligned}
\]
Substituting this into \eqref{eq:transition} gives
\[
\begin{aligned}
&\E\!\left[
P_t^{(\ell)}(w)
\,\middle|\,
\mathcal F_{t-1},
(g_{t,j}(v))_{1\le j<\ell,\ v\in\mathcal W},
g_{t,\ell}(w)=b
\right]=
\begin{cases}
\frac{P_t^{(\ell-1)}(w)}2(1+P_t^{(\ell-1)}(w)),&\text{ if }b=0;\\
\frac{P_t^{(\ell-1)}(w)}2(3-P_t^{(\ell-1)}(w)),&\text{ if }b=1.
\end{cases}
\end{aligned}
\]
Therefore, for every increasing function $\varphi$, 
\begin{equation}
\label{eq:tournament-layer-identity}
\begin{aligned}
&\E\!\left[
P_t^{(\ell)}(w)
\varphi\!\left(s_{\ell-1}(w)+g_{t,\ell}(w)\right)
\,\middle|\,
\mathcal F_{t-1},
(g_{t,j}(v))_{1\le j<\ell,\ v\in\mathcal W}
\right]\\
&\quad =
\frac{P_t^{(\ell-1)}(w)}{4}
\Big(
\left(1+P_t^{(\ell-1)}(w)\right)\varphi(s_{\ell-1}(w))
+\left(3-P_t^{(\ell-1)}(w)\right)\varphi(s_{\ell-1}(w)+1)
\Big)\\
&\quad \ge
\frac{P_t^{(\ell-1)}(w)}{2}
\Big(\varphi(s_{\ell-1}(w))+\varphi(s_{\ell-1}(w)+1)\Big).
\end{aligned}
\end{equation}
Let $B_\ell\lawis\operatorname{Binom}(\ell,1/2)$ be independent of $\mathcal F_{t-1}$. Summing \eqref{eq:tournament-layer-identity} over $w\in\mathcal W$, and inducting on $\ell$ using the increasing map $s\mapsto[\varphi(s)+\varphi(s+1)]/2$ yields
\[
\begin{aligned}
\E\!\left[\sum_{w\in\mathcal W}P_t^{(\ell)}(w)\varphi(s_\ell(w))\right]
&\ge
\E\!\left[
\sum_{w\in\mathcal W}P_t^{(\ell-1)}(w)
\frac{\varphi(s_{\ell-1}(w))+\varphi(s_{\ell-1}(w)+1)}2
\right]\\
&\ge
\frac12\E[\varphi(B_{\ell-1})]
+\frac12\E[\varphi(B_{\ell-1}+1)]
=\E[\varphi(B_\ell)].
\end{aligned}
\]
At $\ell=30$, the left-hand side is $\E[\varphi(S_t)]$, which proves the claimed conditional stochastic dominance of $S_t$.

It remains to show the nontriviality. Taking $\varphi(s)=s$ in the equality in \eqref{eq:tournament-layer-identity}, and using $s_\ell(w)=s_{\ell-1}(w)+g_{t,\ell}(w)$, gives
\[
\begin{aligned}
&\E\!\left[
P_t^{(\ell)}(w)s_\ell(w)
\,\middle|\,
\mathcal F_{t-1},
(g_{t,j}(v))_{1\le j<\ell,\ v\in\mathcal W}
\right]\\
&\quad=
\frac{P_t^{(\ell-1)}(w)}4
\Big(
(1+P_t^{(\ell-1)}(w))s_{\ell-1}(w)
+(3-P_t^{(\ell-1)}(w))(s_{\ell-1}(w)+1)
\Big)\\
&\quad=
P_t^{(\ell-1)}(w)s_{\ell-1}(w)
+\frac14P_t^{(\ell-1)}(w)\left(3-P_t^{(\ell-1)}(w)\right).
\end{aligned}
\]
Summing over $w\in\mathcal W$ and using the tower property and $\sum_{w\in\mathcal W}P_t^{(\ell-1)}(w)=1$ give
\[
\begin{aligned}
\E\!\left[
\sum_{w\in\mathcal W}P_t^{(\ell)}(w)s_\ell(w)
\right]
&=
\E\!\left[
\E\!\left[
\sum_{w\in\mathcal W}P_t^{(\ell)}(w)s_\ell(w)
\,\middle|\,
\mathcal F_{t-1},
(g_{t,j}(v))_{1\le j<\ell,\ v\in\mathcal W}
\right]
\right]\\
&=
\E\!\left[
\sum_{w\in\mathcal W}P_t^{(\ell-1)}(w)s_{\ell-1}(w)
\right]
+\frac14\E\!\left[
\sum_{w\in\mathcal W}P_t^{(\ell-1)}(w)
\left(3-P_t^{(\ell-1)}(w)\right)
\right]\\
&=
\E\!\left[
\sum_{w\in\mathcal W}P_t^{(\ell-1)}(w)s_{\ell-1}(w)
\right]+\frac12
+\frac14\E\!\left[
1-\sum_{w\in\mathcal W}\left(P_t^{(\ell-1)}(w)\right)^2
\right].
\end{aligned}
\]
Since $\sum_{w\in\mathcal W}(P_t^{(\ell-1)}(w))^2\le1$, the last term is nonnegative. Summing this recursion over $\ell=1,\ldots,30$, using $s_0(w)=0$ and $P_t^{(0)}=\pi_t$, and retaining the contribution from $\ell=1$ gives
\[
\E[S_t]
\ge 15+\frac14\left(1-\sum_{w\in\mathcal W}\pi_t(w)^2\right)>15
\]
whenever $\pi_t$ is not concentrated on a single token.
Next, write
\[
h(k,v)=F_0(k-1)+vp_0(k),\qquad k=0,\ldots,30.
\]
For every $v\in(0,1)$, $h(k,v)$ is strictly increasing in $k$. By the conditional stochastic dominance just proved, we can couple $S^\ast$ and $B^\ast$ so that conditional on $\mathcal F_{t-1}$, $S^\ast\laweq S_t$, $B^\ast\lawis\operatorname{Binom}(30,1/2)$, and $S^\ast\geq B^\ast$ a.s. 
Let $V\lawis\operatorname{Unif}(0,1)$ be independent of $\sigma(\mathcal F_{t-1},S^\ast,B^\ast)$. The same randomized probability integral transform gives
\(
h(B^\ast,V)\mid\mathcal F_{t-1}
\lawis\operatorname{Unif}(0,1).
\)
Since $\mathcal L(h(S^\ast,V)\mid\mathcal F_{t-1})=\mathcal L(Y_t\mid\mathcal F_{t-1})$ and $\mathbb P(h(S^\ast,V)\ge h(B^\ast,V)\mid\mathcal F_{t-1})=1$, this proves the claimed conditional stochastic dominance of $Y_t$.

When $\pi_t$ is nondegenerate, $\E[S_t]>15=\E[B^\ast]$ implies $\mathbb P(S^\ast>B^\ast\mid\mathcal F_{t-1})>0$, and hence $\mathbb P(h(S^\ast,V)>h(B^\ast,V)\mid\mathcal F_{t-1})>0$. Strict monotonicity of $y\mapsto-\log(1-y)$ then gives
\[
\E[L_t]
>
\E[-\log(1-h(B^\ast,V))]
=1.
\]
This proves the nontriviality.
\end{proof}

\subsubsection{A four-interval stress test}
\label{subsec:four-interval}

In this section, we repeat the LLM watermark experiments from Section \ref{sec:llm-experiment} with a changed null--alternative sequence containing four disjoint alternative (watermarked) intervals. This experiment serves as a stress test for repeated refinement. We retain the same 20 prompts but generate 10 continuations per prompt and temperature, giving 200 paths per temperature. Otherwise, the setup remains the same.

The four watermarked intervals each contain 50 positions and are separated by
25 standard positions:
\[
 A_T=\{51,\ldots,100\}\cup\{126,\ldots,175\}
 \cup\{201,\ldots,250\}\cup\{276,\ldots,325\}.
\]
All remaining positions in $\{1,\dots,600\}$ are null; see Figure \ref{fig:llm-four-schedule}.

We first consider the Gumbel--max watermark. Table \ref{tab:llm-four-results} reports our main results. 
The resetting procedure can still effectively detect most watermarked positions.  At
\(\vartheta=1\), power reaches \(0.899\) and IoU reaches \(0.797\); performance
is weaker but remains substantial at \(\vartheta=0.75\).  The empirical uniform realized
FDP is below \(0.6\%\) in both cases, well below the theoretical \(10\%\) bound. Figure \ref{fig:llm-primary-trajectories2} displays a typical trajectory.

\begin{figure}[t]
\centering
\begin{tikzpicture}[x=0.21cm,y=0.72cm,font=\small]
  \fill[black!12] (0,0) rectangle (5,1);
  \fill[purple!38] (5,0) rectangle (10,1);
  \fill[black!12] (10,0) rectangle (12.5,1);
  \fill[purple!38] (12.5,0) rectangle (17.5,1);
  \fill[black!12] (17.5,0) rectangle (20,1);
  \fill[purple!38] (20,0) rectangle (25,1);
  \fill[black!12] (25,0) rectangle (27.5,1);
  \fill[purple!38] (27.5,0) rectangle (32.5,1);
  \fill[black!12] (32.5,0) rectangle (60,1);
  \draw[black!65] (0,0) rectangle (60,1);
  \foreach \x in {5,10,12.5,17.5,20,25,27.5,32.5}{
    \draw[black!65] (\x,0)--(\x,1);
  }
  \node at (2.5,.5) {$\mathrm N$};
  \node at (7.5,.5) {$\mathrm W_1$};
  \node[font=\scriptsize] at (11.25,.5) {$\mathrm N$};
  \node at (15,.5) {$\mathrm W_2$};
  \node[font=\scriptsize] at (18.75,.5) {$\mathrm N$};
  \node at (22.5,.5) {$\mathrm W_3$};
  \node[font=\scriptsize] at (26.25,.5) {$\mathrm N$};
  \node at (30,.5) {$\mathrm W_4$};
  \node at (46.25,.5) {$\mathrm N$};
  \draw[->,black!70] (0,-.42)--(61,-.42)
       node[right,font=\footnotesize]{position};
  \foreach \x/\lab in {0/1,5/50,10/100,17.5/175,25/250,32.5/325,60/600}{
    \draw[black!70] (\x,-.34)--(\x,-.50);
    \node[below,font=\scriptsize] at (\x,-.50) {$\lab$};
  }
\end{tikzpicture}
\caption{The four-interval null--alternative sequence.  Grey segments marked $\mathrm N$ are
standard, and purple segments marked $\mathrm W$ are watermarked.}
\label{fig:llm-four-schedule}
\end{figure}

\begin{table}[t]
\centering
\small
\setlength{\tabcolsep}{5pt}
\caption{{Results for the resetting WA e-process in the
four-interval experiment {under the Gumbel--max watermark}, with rejection threshold 
$\gamma=49$.}}
\label{tab:llm-four-results}
\begin{tabular}{@{}ccccccc@{}}
\toprule
$\vartheta$ & Mean $\lambda$ & Mean reports & Power & Coverage FDP & IoU
& Uniform realized FDP\\
\midrule
0.75 & 0.366 & 10.98 & 0.744 & 0.161 & 0.643 & 0.0059\\
1 & 0.452 & 27.48
  & \green{\textbf{0.899}}
  & \green{\textbf{0.122}}
  & \green{\textbf{0.797}}
  & \green{\textbf{0.0012}}\\
\bottomrule
\end{tabular}
\end{table}

\begin{figure}[t]
\centering
\includegraphics[width=\textwidth]{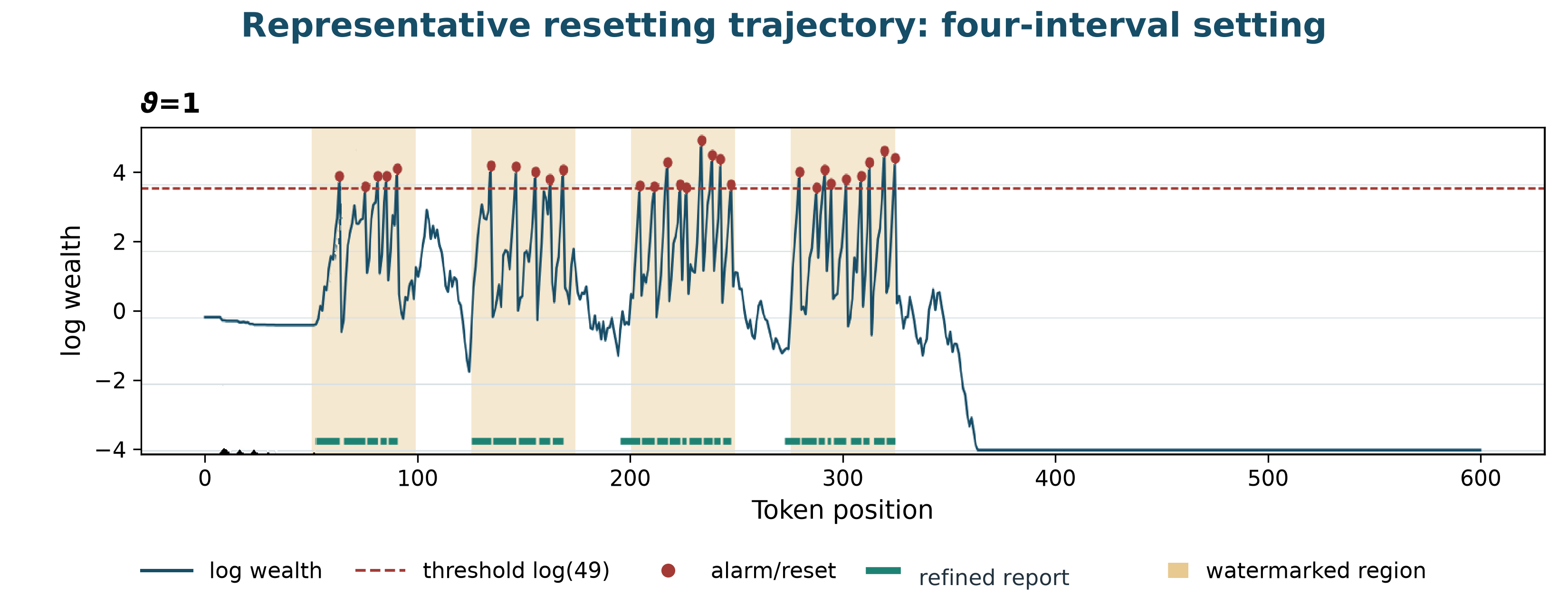}
\caption{{Representative resetting WA e-process trajectories for the four-interval null--alternative sequence {under the Gumbel--max watermark} at rejection threshold $\gamma=49$. Values below $-4$ are clipped from the drawing.}}
\label{fig:llm-primary-trajectories2}
\end{figure}

{Next, we repeat the four-interval stress design under the Tournament
watermark.  At $\vartheta=0.75$, power is $0.722$ and IoU is $0.612$; at
$\vartheta=1$, they increase to $0.900$ and $0.784$, respectively.  The
corresponding uniform realized FDP values are $0.0106$ and $0.0013$.}

\begin{table}[t]
\centering
\small
\setlength{\tabcolsep}{5pt}
\caption{{Results for the resetting WA e-process in the four-interval
Tournament experiment, with rejection threshold $\gamma=49$.}}
\label{tab:llm-tournament-four-results}
\begin{tabular}{@{}ccccccc@{}}
\toprule
$\vartheta$ & Mean $\lambda$ & Mean reports & Power & Coverage FDP & IoU
& Uniform realized FDP\\
\midrule
0.75 & 0.347 & 8.68 & 0.722 & 0.173 & 0.612 & 0.0106\\
1 & 0.450 & 23.05
  & \green{\textbf{0.900}}
  & \green{\textbf{0.139}}
  & \green{\textbf{0.784}}
  & \green{\textbf{0.0013}}\\
\bottomrule
\end{tabular}
\end{table}

\begin{figure}[t]
\centering
\includegraphics[width=0.94\textwidth]{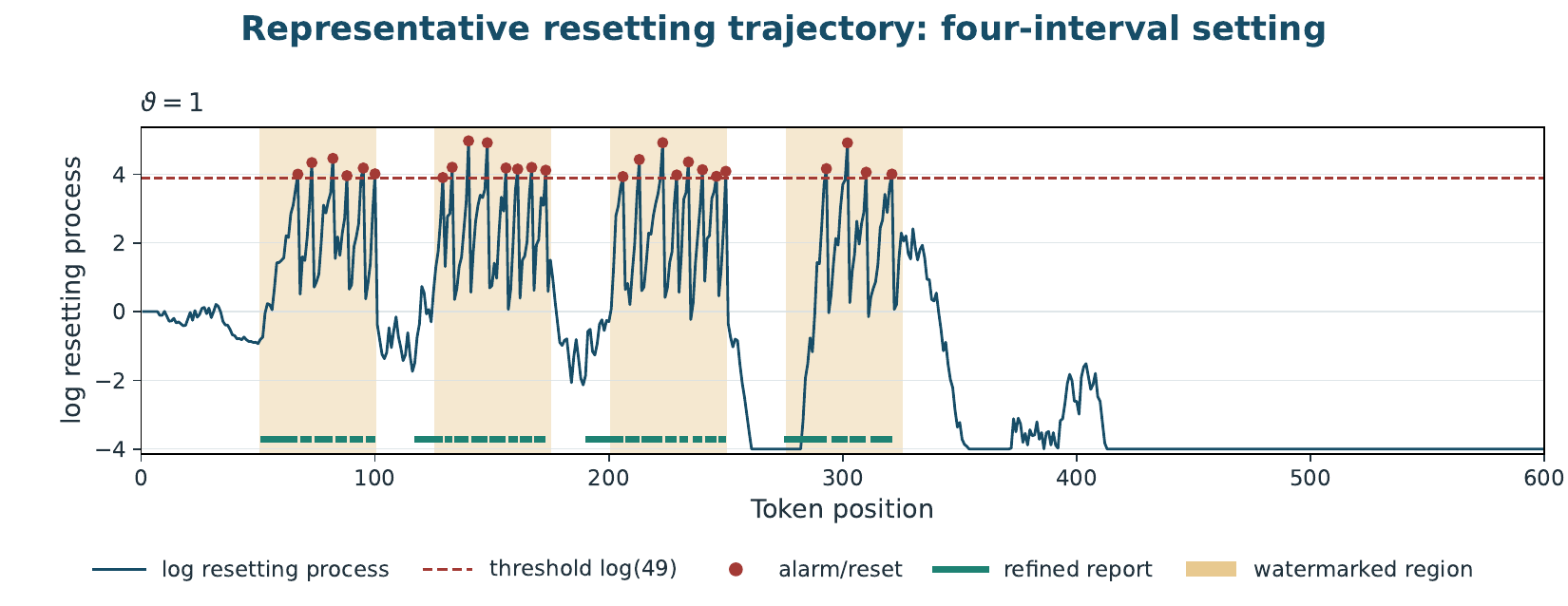}
\caption{{Representative resetting WA e-process trajectory for the
four-interval Tournament experiment at rejection threshold $\gamma=49$.
Values below $-4$ are clipped from the drawing.}}
\label{fig:llm-tournament-four-trajectory}
\end{figure}

\subsubsection{Comparison of different e-processes}
\label{subsec:llm-comparators}

In addition to the weight-adaptive e-process in Section
\ref{sec:refreshing exp}, we further compare three families of e-processes from \cite[Section 4]{SWZ26}.
{We apply the same collection of e-processes separately to the Gumbel--max and Tournament pivots.}
\begin{itemize}

\item \emph{Nonadaptive e-processes.}
Compared to \eqref{eq:WA}, a nonadaptive e-process fixes the betting fraction $\lambda$ before observing the path:
\[
 M_t^{(\lambda)}=\prod_{s=1}^t(1-\lambda+\lambda L_s),
 \qquad \lambda\in(0,1].
\]
Under this fresh-key construction, conditional on the token path and the keys at alternative positions, the null scores are i.i.d.~$\mathrm{Unif}(0,1)$ and independent of the alternative scores. Therefore, for fixed $\lambda$, the multipliers satisfy Assumption~\ref{ass:ind}. In our experiments, we consider $\lambda\in\{0.10,0.25,0.50,0.75\}$.

\item \emph{Online Grenander.}
The online Grenander (OG) e-process \citep{HWR26} estimates a decreasing density for $1-Y_t$ from the past observations. 
Let $\mathcal D$ be the set of decreasing density functions on $[0,1]$. Following \cite[Section 4.3]{SWZ26}, we define $f_1^{\mathrm{OG}}=1$, and for $t>1$,
\begin{align*}
f_t^{\mathrm{OG}}
&\in\argmax_{f\in\mathcal D}
\left(
\sum_{s<t}\log f(1-Y_s)
+\lambda\log f(1) 
\right),\\
M_t^{\mathrm{OG}}
&=\prod_{s=1}^t f_s^{\mathrm{OG}}(1-Y_s),
\end{align*}
where we set $\lambda=1/2$. 
The term \(\lambda\log f(1)\) prevents the fitted density from taking the value zero.

\item \emph{Average e-process.}
The average e-process of \cite[Section 4.4]{SWZ26} assigns equal weight to the WA
and OG e-processes:
\[
 M_t^{\mathrm{avg}}
 =\frac{1}{2}M_t^{\mathrm{WA}}+\frac{1}{2}M_t^{\mathrm{OG}}.
\]

\end{itemize}

As in Section \ref{sec:refreshing exp}, we consider the resetting versions of the above e-processes.  Again, a reset restarts the e-process at one but keeps the previous
observations for the optimization procedure. As a consequence, the nonadaptive e-processes satisfy
Assumption \ref{ass:ind}, but the other two do not. 
All procedures share the common threshold $\gamma=49$.  For WA, OG, and the
average process, Theorem \ref{thm:FDR_barrier} controls the uniform FDR
by $(1+\log 49)/49<0.10$.  The nonadaptive processes have a stronger fixed-time bound, using Theorem \ref{thm:localized-main}: $\mathrm{FDR}_t^{\mathrm{ref}}\le 1/48\approx0.0208.$
Nevertheless, we still use the same threshold for every procedure to keep the comparisons at scale.

{Table \ref{tab:llm-process-comparison} and Figure
\ref{fig:llm-process-tradeoff} compare the resulting procedures and their
power--coverage-FDP tradeoffs.}
All seven procedures are applied
to the same paths, with the same resetting rule and
threshold \(\gamma=49\).  The nonadaptive process with \(\lambda=0.10\) has the
highest power, but includes substantially more null positions.  WA gives the
most balanced performance: it has the highest IoU at \(\vartheta=1\), near-top IoU at \(\vartheta=0.75\), and low coverage FDP at both
temperatures.  OG and the larger fixed values of \(\lambda\) include less null
text, but generally have lower detection power. All empirical uniform realized FDP values are below our theoretical bounds.

\begin{table}[t]
\centering
\small
\setlength{\tabcolsep}{3.2pt}
\caption{Comparison of the seven resetting e-processes {under the Gumbel--max watermark} at $\gamma=49$, based
on 500 paths per temperature. Green bold values indicate the best-performing values
for the metrics.}
\label{tab:llm-process-comparison}
\begin{tabular}{@{}ccccccc@{}}
\toprule
Process & Mean $\lambda$ & Power & Coverage FDP & IoU
& \shortstack{Realized\\FDP}
& \shortstack{Uniform realized\\FDP}\\
\midrule
\multicolumn{7}{c}{$\gamma=49$, $\vartheta=0.75$}\\
Average & -- & 0.723 & 0.042 & 0.699 & 0.0004 & 0.0006\\
OG & -- & 0.652 & 0.035 & 0.637 & 0.0006 & 0.0043\\
WA & 0.424 & 0.751 & 0.034 & 0.731 & 0.0001 & 0.0002\\
Nonadaptive $\lambda=0.10$
  & 0.100 & \green{\textbf{0.844}} & 0.144 & 0.732
  & \green{\textbf{0.0000}} & \green{\textbf{0.0000}}\\
Nonadaptive $\lambda=0.25$
  & 0.250 & 0.810 & 0.102 & \green{\textbf{0.737}} & 0.0001 & 0.0001\\
Nonadaptive $\lambda=0.50$
  & 0.500 & 0.753 & 0.045 & 0.726 & 0.0008 & 0.0027\\
Nonadaptive $\lambda=0.75$
  & 0.750 & 0.693 & \green{\textbf{0.025}} & 0.680 & 0.0014 & 0.0072\\
\midrule
\multicolumn{7}{c}{$\gamma=49$, $\vartheta=1$}\\
Average & -- & 0.877 & 0.039 & 0.847 & 0.0003 & 0.0064\\
OG & -- & 0.800 & 0.023 & 0.786 & 0.0003 & 0.0142\\
WA & 0.457 & 0.927 & 0.035 & \green{\textbf{0.897}} & 0.0004 & 0.0064\\
Nonadaptive $\lambda=0.10$
  & 0.100 & \green{\textbf{0.962}} & 0.140 & 0.831
  & \green{\textbf{0.0000}} & \green{\textbf{0.0000}}\\
Nonadaptive $\lambda=0.25$
  & 0.250 & 0.951 & 0.092 & 0.868 & 0.0001 & 0.0001\\
Nonadaptive $\lambda=0.50$
  & 0.500 & 0.930 & 0.043 & 0.893 & 0.0007 & 0.0184\\
Nonadaptive $\lambda=0.75$
  & 0.750 & 0.913 & \green{\textbf{0.023}} & 0.894 & 0.0007 & 0.0206\\
\bottomrule
\end{tabular}
\end{table}

\begin{figure}[t]
\centering
\includegraphics[width=0.98\textwidth]{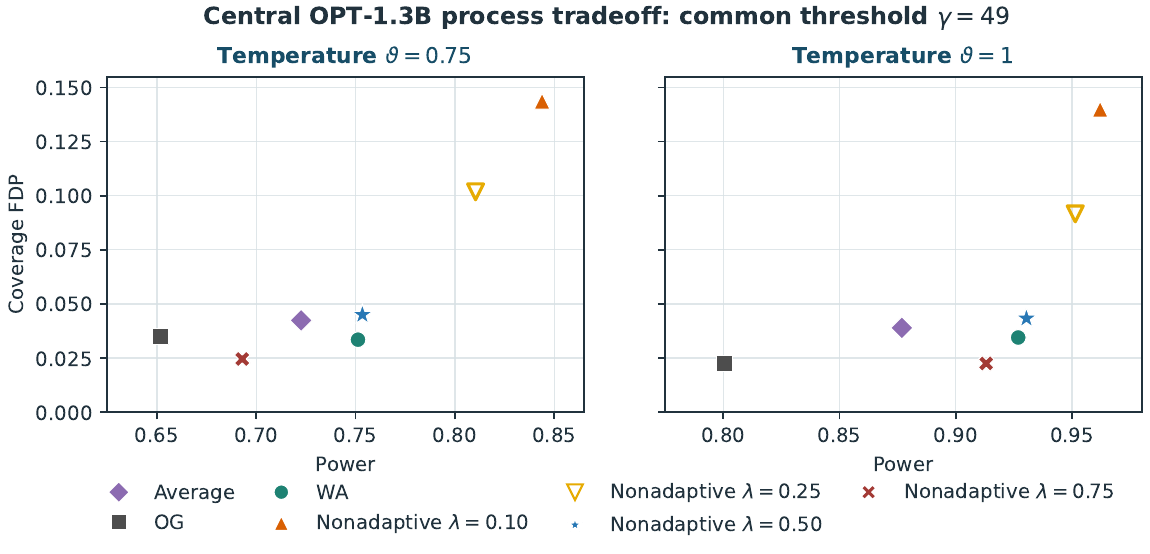}
\caption{{Power--coverage-FDP tradeoff {under the Gumbel--max watermark} at the common rejection threshold $\gamma=49$. }}
\label{fig:llm-process-tradeoff}
\end{figure}

{The Tournament experiment yields a similar tradeoff.  The nonadaptive
process with $\lambda=0.10$ has the highest power at both temperatures, while
$\lambda=0.75$ has the lowest coverage FDP.  WA has the highest IoU at
$\vartheta=1$, and $\lambda=0.25$ has the highest IoU at
$\vartheta=0.75$.}
\begin{table}[t]
\centering
\small
\setlength{\tabcolsep}{3.2pt}
\caption{Comparison of the seven resetting e-processes under the Tournament
watermark at $\gamma=49$, based on 500 paths per temperature. Green bold values
indicate the best-performing values for the metrics.}
\label{tab:llm-tournament-process-comparison}
\begin{tabular}{@{}ccccccc@{}}
\toprule
Process & Mean $\lambda$ & Power & Coverage FDP & IoU
& \shortstack{Realized\\FDP}
& \shortstack{Uniform realized\\FDP}\\
\midrule
\multicolumn{7}{l}{$\gamma=49$, $\vartheta=0.75$}\\
Average & -- & 0.723 & 0.043 & 0.699 & 0.0012 & 0.0090\\
OG & -- & 0.669 & 0.036 & 0.652 & 0.0016 & 0.0129\\
WA & 0.414 & 0.740 & 0.037 & 0.718 & 0.0012 & 0.0071\\
Nonadaptive $\lambda=0.10$
  & 0.100 & \green{\textbf{0.828}} & 0.144 & 0.719
  & \green{\textbf{0.0000}} & \green{\textbf{0.0000}}\\
Nonadaptive $\lambda=0.25$
  & 0.250 & 0.799 & 0.102 & \green{\textbf{0.727}}
  & 0.0005 & 0.0024\\
Nonadaptive $\lambda=0.50$
  & 0.500 & 0.736 & 0.047 & 0.709 & 0.0015 & 0.0189\\
Nonadaptive $\lambda=0.75$
  & 0.750 & 0.662 & \green{\textbf{0.029}} & 0.649
  & 0.0037 & 0.0172\\
\midrule
\multicolumn{7}{l}{$\gamma=49$, $\vartheta=1$}\\
Average & -- & 0.889 & 0.040 & 0.857 & 0.0003 & 0.0005\\
OG & -- & 0.829 & 0.025 & 0.812 & 0.0002 & 0.0042\\
WA & 0.455 & 0.925 & 0.036 & \green{\textbf{0.894}}
  & 0.0002 & 0.0022\\
Nonadaptive $\lambda=0.10$
  & 0.100 & \green{\textbf{0.957}} & 0.137 & 0.830
  & \green{\textbf{0.0000}} & \green{\textbf{0.0000}}\\
Nonadaptive $\lambda=0.25$
  & 0.250 & 0.949 & 0.094 & 0.863 & 0.0002 & 0.0002\\
Nonadaptive $\lambda=0.50$
  & 0.500 & 0.929 & 0.044 & 0.890 & 0.0004 & 0.0103\\
Nonadaptive $\lambda=0.75$
  & 0.750 & 0.908 & \green{\textbf{0.022}} & 0.890
  & 0.0006 & 0.0106\\
\bottomrule
\end{tabular}
\end{table}

\begin{figure}[t]
\centering
\includegraphics[width=0.98\textwidth]{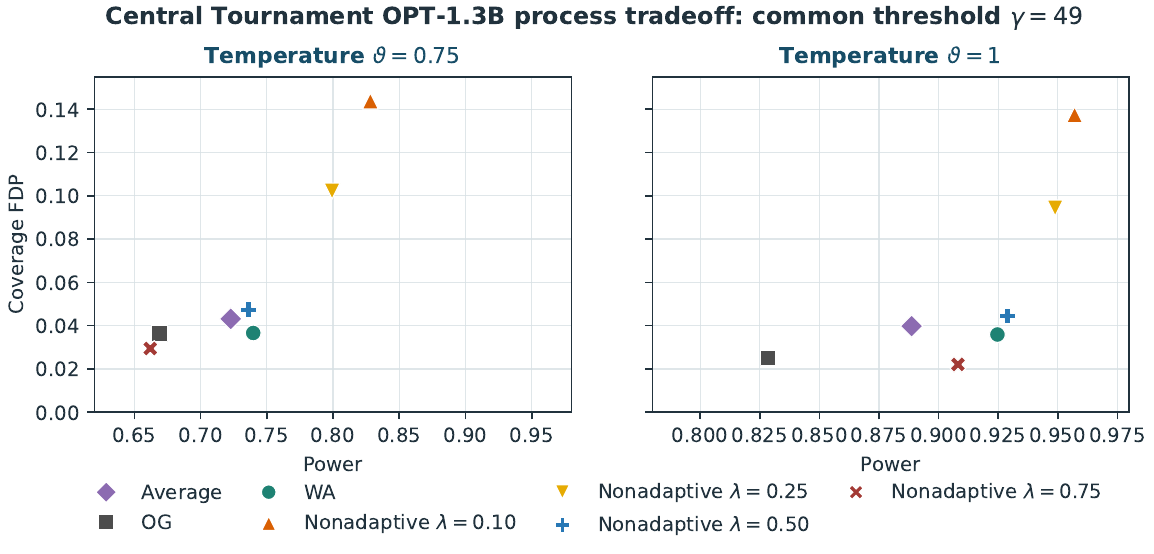}
\caption{{Power--coverage-FDP tradeoff under the Tournament watermark at
the common rejection threshold $\gamma=49$.}}
\label{fig:llm-tournament-process-tradeoff}
\end{figure}

\subsection{Comparison with e-BH and closed e-BH}
\label{subsec:ebh-comparison}
We compare the refined resetting procedures with the e-BH and closed e-BH procedures \citep{WR22,XSF25}.
Not all choices of the closed e-BH uniformly improve e-BH \citep[Proposition 3.4]{SW26}, and here we use the mean closed e-BH, which is a uniform improvement of e-BH shown by \cite{XSF25}. 
We use the Gaussian simulation setting in Section~\ref{sec:sim_ind}, with $T=1500$ and alternating periods of 200 null and 100 alternative observations, but vary the alternative mean over $\delta=1,2,3,4$.
Take e-values $E_t=\exp(\delta X_t-\delta^2/2)$, $t\in\T_{\ge 1}$, which are the likelihood ratios between $\mathrm N(\delta,1)$ and $\mathrm N(0,1)$.
Both the e-BH and closed e-BH procedures are applied to these individual e-values at the terminal horizon with FDR level $0.1$. The refined resetting procedure uses $\lambda=1/2$ and $\gamma=11$ to guarantee a general FDR bound of $0.1$ as in Theorem \ref{thm:localized-main}. All methods receive the same data in each of 1000 repetitions.

Table~\ref{tab:ebh_comparison} shows the comparison of the performance metrics across the three methods.
We recall that the realized FDP is the proportion of false reports: an all-null interval is counted as false for interval reporting, and a null point is counted as false for point reporting.
The e-BH and closed e-BH procedures reject individual data points, whereas the sequential resetting procedure can report both points and intervals. For the point interpretation of the sequential resetting procedure, we reject all time points in the union of the reported refined intervals; its realized FDP equals the coverage FDP in \eqref{eq:metrics} and is not controlled by our theoretical FDR guarantee; so this is a heuristic comparison. The two interpretations cover the same points and thus have the same power according to our definition.
The results show that the sequential resetting procedure rejects more points on average than the e-BH and closed e-BH procedures in all four experiments. Moreover, both e-BH procedures generally have little power until the alternative signal becomes strong enough, whereas the sequential resetting procedure keeps the power above $90\%$ throughout. The realized FDP is low for the e-BH procedures and the interval-reporting sequential resetting procedure.

\begin{table}[t]
    \centering
    \caption{Average performance metrics for e-BH, mean closed e-BH, and refined sequential resetting procedures over 1000 repetitions of independent normal data with alternating periods.}
    \label{tab:ebh_comparison}
    \small
    \setlength{\tabcolsep}{4pt}
    \begin{tabular}{c c c c c}
        \toprule
        Signal  & Method & Reports & Power & Realized FDP\\
        \midrule
 \multirow{4}{*}{$\delta=1$} & e-BH (points) & 0.000 & 0.0000 & 0.0000 \\
         & Mean closed e-BH (points) & 0.000 & 0.0000 & 0.0000 \\
        & Sequential resetting (points) & 515.786 & 0.9064 & 0.1083 \\
        & Sequential resetting (intervals) & 27.808 & 0.9064 & 0.0085 \\
        \midrule
        \multirow{4}{*}{$\delta=2$} & e-BH (points) & 0.043 & 0.0001 & 0.0000 \\
         & Mean closed e-BH (points) & 0.738 & 0.0015 & 0.0000 \\
         & Sequential resetting (points) & 473.191 & 0.9324 & 0.0146 \\
         & Sequential resetting (intervals) & 125.997 & 0.9324 & 0.0013 \\
        \midrule
         \multirow{4}{*}{$\delta=3$} & e-BH (points) & 288.487 & 0.5722 & 0.0083 \\
          & Mean closed e-BH (points) & 347.455 & 0.6833 & 0.0167 \\
         & Sequential resetting (points) & 480.437 & 0.9566 & 0.0044 \\
         & Sequential resetting (intervals) & 263.067 & 0.9566 & 0.0004 \\
        \midrule
        \multirow{4}{*}{$\delta=4$}  & e-BH (points) & 435.706 & 0.8675 & 0.0045 \\
         & Mean closed e-BH (points) & 452.395 & 0.8984 & 0.0070 \\
         & Sequential resetting (points) & 491.346 & 0.9814 & 0.0013 \\
         & Sequential resetting (intervals) & 371.550 & 0.9814 & 0.0001 \\
        \bottomrule
    \end{tabular}
\end{table}

The experiment shows that the two e-BH procedures concern different applications than the sequential resetting procedures. These applications have different reporting units and FDR targets.

\end{document}